\newtheorem{theorem}{Theorem}
\newtheorem{lemma}{Lemma}
\newcommand*{\qed}{\hbox{}$\Box$}
\newenvironment{proof}{\noindent{\it Proof: }}{\qed\medskip}
\newtheorem{remark}[theorem]{Remark}
\icmltitlerunning{Unorganized Malicious Attacks Detection}
\begin{document}

\twocolumn[
\icmltitle{Unorganized Malicious Attacks Detection}

% It is OKAY to include author information, even for blind
% submissions: the style file will automatically remove it for you
% unless you've provided the [accepted] option to the icml2018
% package.

% List of affiliations: The first argument should be a (short)
% identifier you will use later to specify author affiliations
% Academic affiliations should list Department, University, City, Region, Country
% Industry affiliations should list Company, City, Region, Country

% You can specify symbols, otherwise they are numbered in order.
% Ideally, you should not use this facility. Affiliations will be numbered
% in order of appearance and this is the preferred way.
%\icmlsetsymbol{equal}{*}

\begin{icmlauthorlist}
\icmlauthor{Ming Pang}{}
\icmlauthor{Wei Gao}{}
\icmlauthor{Min Tao}{}
\icmlauthor{Zhi-Hua Zhou}{ed}
\end{icmlauthorlist}

\icmlaffiliation{ed}{National Key Laboratory for Novel Software Technology, Nanjing University, Nanjing 210023, China}

\icmlcorrespondingauthor{Zhi-Hua Zhou}{zhouzh@lamda.nju.edu.cn}

% You may provide any keywords that you
% find helpful for describing your paper; these are used to populate
% the "keywords" metadata in the PDF but will not be shown in the document
\icmlkeywords{Machine Learning, ICML}

\vskip 0.3in
]

% this must go after the closing bracket ] following \twocolumn[ ...

% This command actually creates the footnote in the first column
% listing the affiliations and the copyright notice.
% The command takes one argument, which is text to display at the start of the footnote.
% The \icmlEqualContribution command is standard text for equal contribution.
% Remove it (just {}) if you do not need this facility.

\printAffiliationsAndNotice{}  % leave blank if no need to mention equal contribution
%\printAffiliationsAndNotice{\icmlEqualContribution} % otherwise use the standard text.

\begin{abstract}
%Recommender system has attracted much attention during the past decade, and many attack detection algorithms have been developed for better recommendation. Most previous approaches focus on the shilling attacks, where the attack organizer fakes a large number of user profiles by the same strategy to promote or demote an item. In this paper, we study a different attack style: \emph{unorganized malicious attacks}, where attackers individually use a small number of user profiles to attack their own target items without any organizer. This attack style occurs in many real applications, yet relevant study remains open. In this paper, we formulate the unorganized malicious attacks detection as a variant of matrix completion problem, and prove that attackers can be detected theoretically. We propose the Unorganized Malicious Attacks detection (UMA) algorithm, which can be viewed as a proximal alternating splitting augmented Lagrangian method. We verify, both theoretically and empirically, the effectiveness of our proposed algorithm.
Recommender system has attracted much attention during the past decade. Many attack detection algorithms have been developed for better recommendations, mostly focusing on shilling attacks, where an attack organizer produces a large number of user profiles by the same strategy to promote or demote an item. This work considers a different attack style: unorganized malicious attacks, where attackers individually utilize a small number of user profiles to attack different items without any organizer. This attack style occurs in many real applications, yet relevant study remains open. We first formulate the unorganized malicious attacks detection as a matrix completion problem, and propose the Unorganized Malicious Attacks detection (\emph{UMA}) approach, a proximal alternating splitting augmented Lagrangian method. We verify, both theoretically and empirically, the effectiveness of our proposed approach.
\end{abstract}

\section{Introduction}
Online activities have been an essential part in our daily life as the flourish of the Internet, e.g., increasing customers prefer shopping on Amazon and eBay; lots of people enjoy watching different movies and TV shows on Youtube and Netflix, etc. There is a big challenge to recommend suitable products effectively as the number of users and items increases drastically; therefore, various collaborative filtering techniques have been developed in diverse systems so as to help customers choose their favorite products in a set of items \citep{li2009transfer,bresler2014latent,rao2015collaborative}.

Many collaborative filtering approaches are vulnerable to spammers and manipulations of ratings \citep{ling2013unified,gunes2014shilling}, and attackers may bias systems by inserting fake rating scores into the user-item rating matrix. Some attackers may increase the popularity of their own items (push attack) while some others may decrease the popularity of their competitors' items (nuke attack). Most attack detection studies focus on shilling attacks and show good detection performance on kinds of shilling attack strategies \citep{mehta2007unsupervised,hurley2009statistical,ling2013unified}. They consider the situation that all the attack profiles are produced by the same strategy to promote or demote a particular item. For example, an attack organizer may produce hundreds of fake user profiles by a strategy that each fake user profile gives high scores to the most popular movies and low scores to the target movie to demote it.

Various practical techniques have been developed to control shilling attacks, e.g., online sites require real names and telephone numbers for registrations; CAPTCHA is used to determine that the response is not generated by a robot; customers are allowed to rate a product after purchasing this product on the shopping website. Based on these measures, traditional shilling attacks may suffer high cost. For example, small online sellers in e-commerce like Amazon might not be willing to produce hundreds of fake customer rating profiles to implement a shilling attack.

In this paper, we investigate a new attack model named \emph{unorganized malicious attack}, where attackers individually use a small number of user profiles to attack their own targets without any organizer. This attack style happens in many real applications, e.g., online sellers on Amazon may fake a few customer rating profiles to demote their competitors' high-quality shoes; writers may hire several readers to give high scores to promote their low-quality books. In fact, it has been shown that systems are seriously affected by small amounts of unorganized malicious attacks, e.g., the first maliciously bad rating can decrease the sales of one seller by 13\%~\cite{luca2016reviews}.

We first formulate the unorganized malicious attacks detection as a variant of matrix completion problem. Let $X$ denote the ground-truth rating matrix without attacks and noises, and the matrix is low-rank since the users' preferences are affected by several factors~\citep{salakhutdinov2007restricted}. Let $Y$ be the sparse attack-score matrix, and $Z$ denotes a noisy matrix. What we can observe is a (or partial) matrix $M$ such that $M=X+Y+Z$. As far as we know, previous works do not make similar formulation for attack detection. The main difference between our optimization problem and robust PCA \citep{candes2011robust} is that roubst PCA focuses on recovering low-rank part $X$ from complete or incomplete matrix and we pay more attention to distinguishing the sparse attack term $Y$ from the small perturbation noise term $Z$.

Theoretically, we prove that the low-rank rating matrix $X$ and the sparse matrix $Y$ can be recovered under some classical matrix-completion assumptions. We propose the Unorganized Malicious Attacks detection (UMA) algorithm, which can be viewed as a proximal alternating splitting augmented Lagrangian method. Some new techniques have been developed to prove its global convergence with a worst-case $O(1/t)$ convergence rate. Finally, experimental results verify the effectiveness of our proposed algorithm in comparison with the state-of-the-art methods of attack detection.

The rest of this paper is organized as follows. Section~2 reviews some related works. Section~3 introduces the framework of unorganized malicious attacks detection, and Section~4 proposes our UMA algorithm. Section~5 presents theoretical justification for matrix recovery and convergence of UMA. Section~6 shows our experiments, and Section~7 concludes this work.

\section{Related Work}
Collaborative filtering (CF) is one of the most successful techniques to build recommender systems. The core assumption of CF is that if users express similar interests in the past, they will share common interest in the future \citep{goldberg1992using}. Significant progress about CF has been made since then \citep{salakhutdinov2007restricted,li2009transfer,bresler2014latent,rao2015collaborative}. There are two main categories of conventional CF (based on the user-item rating matrix) which are memory-based and model-based CF algorithms. Memory-based CF predicts a user's rating on an item based on the entire or part of the user-item matrix. It can be subdivided into user-based and item-based CF. A typical user-based CF approach predicts the ratings of a user by aggregating the ratings of some similar users. User similarity is defined by a similarity metric, usually the cosine similarity or the Pearson correlation \citep{singhal2001modern}. Many modifications and adjustments about the similarity metric have been proposed \citep{adomavicius2005toward,zhang2007recursive}. Item-based CF approaches predict the rating of an item for a user according to the ratings of items the user has given \citep{deshpande2004item}.

Model-based CF approaches use the user-item matrix to train prediction models and recommendations are generated from the prediction models \citep{ekstrand2011collaborative}. For example, the mixture model learns the probability distribution of items in each clusters \citep{kleinberg2008using}; Matrix factorization techniques learn latent factors of users and items from the user-item matrix and then use the low-rank approximation matrix to predict the score of unrated items; From probabilistic perspective, \citet{salakhutdinov2008bayesian} propose probabilistic matrix factorization framework. Considering about side information besides the user-item matrix, many works expand the CF paradigm \citep{basilico2004unifying,salakhutdinov2007restricted}.

However the two main categories of CF schemes are both vulnerable to attacks \citep{gunes2014shilling,aggarwal2016recommender}. Increasing attention has been given to attack detection. Researchers have proposed several kinds of methods which can be mainly thought as statistical, clustering, classification and data reduction-based methods \citep{gunes2014shilling}. These methods mainly focus on shilling attacks where the attack organizer produces a large number of user profiles by the same strategy to promote or demote a particular item. Statistical methods are used to detect anomalies who give suspicious ratings. \citet{hurley2009statistical} propose a Neyman-Pearson statistical attack detection method to distinguish attackers from normal users. Similarly, probabilistic Bayesian network models are used in \citet{li2011detection}. Based on attributes derived from user profiles, classification methods detect attacks by kNN, SVM, rough set theory, etc~ \citep{mobasher2007attacks,he2010attack}. 

An unsupervised clustering algorithm based on several classification attributes \citep{bryan2008unsupervised} is presented in~\citet{bhaumik2011clustering}. They apply $k$-means clustering based on these attributes and classify users in the smallest cluster as attackers. Instead of using traditional nearest neighbor methods, \citet{mehta2007unsupervised} proposes a PLSA-based clustering method. \citet{mehta2009unsupervised} propose the variable selection method, which treats users as variables and calculates their covariance matrix. By analyzing the principal components of the covariance matrix, those users with the smallest coefficient in the first $l$ principal components are chosen in the final variable selection. \citet{ling2013unified} try to use a low-rank matrix factorization method to predict the users' ratings. Users' reputation is computed according to the predicted ratings and low-reputed users are classified as malicious users. 

These methods implement detection based on the common characteristics of the attack profiles produced by the same attack strategy. When recommender systems are under unorganized malicious attacks, different attackers use different strategies to produce attack profiles or hire existing users to attack different targets. The traditional attack detection methods may be not suitable in this case.

Recovering low-dimensioinal structrue from a corrupted matrix is related to robust PCA~\cite{candes2011robust,yi2016fast,bouwmans2017decomposition}. They focus on recovering low-rank part $X$ from complete or incomplete matrix, and it is different from attacks detection (which is our task). We pay more attention to distinguishing the sparse attack term $Y$ from the small perturbation noise term $Z$.

%RPCA and Huan's work.... (according to the format of Huan's last para of Intro), add more references

\section{The Formulation}
In this section, we introduce the general form of an attack profile, and then we give a detailed comparison of unorganized malicious attacks and shilling attacks. The formal definition of unorganized malicious attacks and the corresponding detection problem formulation are also presented.

\subsection{Notations}
We begin with some notations used throughout this paper. Let $\|X\|$, $\|X\|_F$ and $\|X\|_*$ denote the operator norm, Frobenius norm and nuclear norm of matrix $X$, respectively. Let $\|X\|_1$ and $\|X\|_\infty$ be the $\ell_1$ and $\ell_\infty$ norm of matrix $X$ seen as a long vector, respectively. Further, we define the Euclidean inner product between two matrices as $\langle X, Y\rangle:=\text{trace}(XY^\top)$, where $Y^\top$ means the transposition of $Y$. Therefore, We have $\|X\|_F^2=\langle X,X\rangle$.

Let $P_\Omega$ denote an operator of linear transformation which acts on matrices space, and we also denote $P_\Omega$ by the linear space of matrices supported on $\Omega$ when it is clear from the context. Then, $P_{\Omega^\top}$ represents the space of matrices supported on $\Omega^c$. For an integer $m$, let $[m]:=\{1,2,\ldots,m\}$.

\begin{figure}[ht]
	%\vskip 0.2in
	\begin{center}
		%\vspace{-1in}
		%\centerline{\includegraphics[width=\columnwidth]{figure3/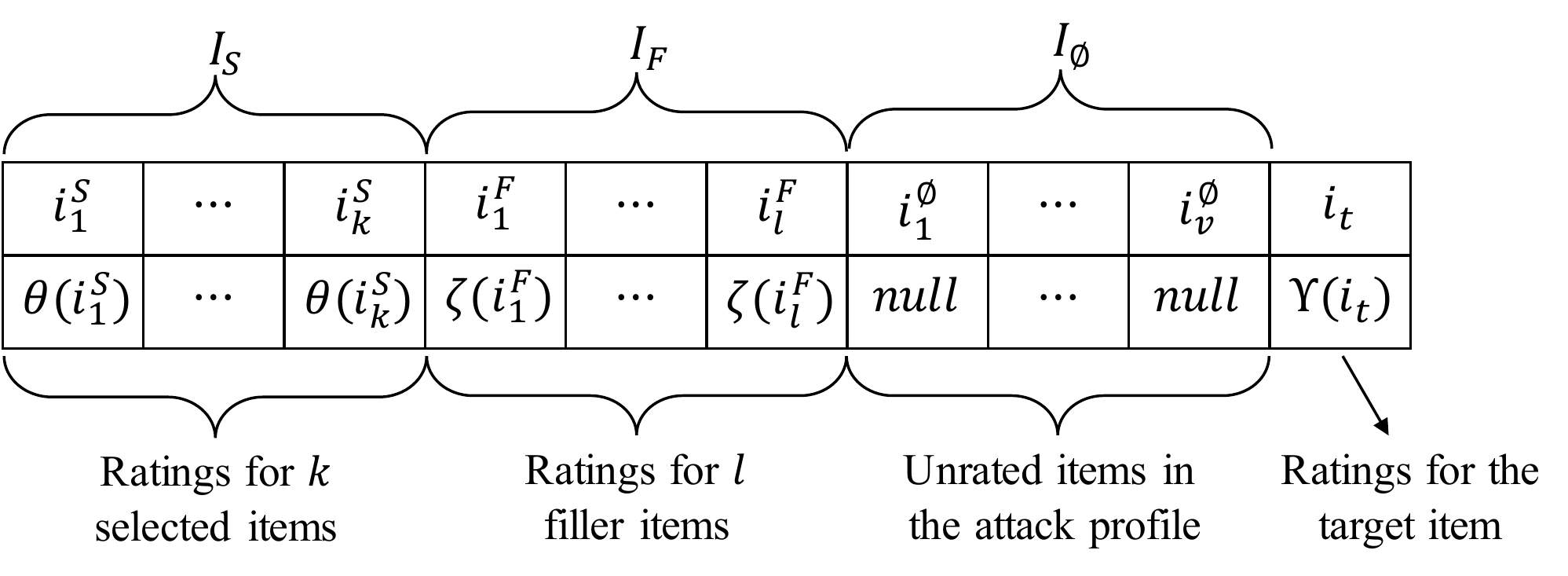}}
		\centerline{\includegraphics[width=\columnwidth]{generalAttack.pdf}}
		%\vspace{-0.8in}
		\caption{General form of an attack profile.}
		\label{generalAttack}
	\end{center}
	%\vskip -0.2in
\end{figure}

\subsection{Problem Formulation}\label{problem-form}
The general form of an attack profile is shown in Figure~\ref{generalAttack} which is first defined by \citet{bhaumik2006securing}. The attack profile is partitioned in four parts. The null part, $I_{\emptyset}$, contains the items with no ratings. The single target item $i_t$ will be given a malicious rating, i.e. a high rating in a push attack and a low rating in a nuke attack. The selected items, $I_S$, are a group of selected items for special treatment during the attack. The filler items, $I_F$, are selected randomly to complete the attack profile. Functions $\theta$, $\zeta$ and $\Upsilon$ determine how ratings should be assigned to items in $I_S$, $I_F$ and target item $i_t$. Three basic attack strategies are listed bellow.
\begin{itemize}
	\item Random attack: $I_S$ is empty, and $I_F$ is selected randomly with function $\zeta$ generating random ratings centered around the overall average rating in the database.
	\item Average attack: $I_S$ is empty, and $I_F$ is selected randomly with function $\zeta$ generating random ratings centered around the average rating for each item.
	\item Bandwagon attack: $I_S$ is selected from the popular items and  function $\theta$ gives $I_S$ high ratings. The filler items are handled similarly to random attack.
\end{itemize}

In each shilling attack, one attack strategy is chosen, e.g., average attack strategy; the target item $i_t$ is fixed; the numbers of rated items $k$ and $l$ are fixed; the rating functions are fixed.
This leads to the generated attack profiles having some common characteristics in one shilling attack. Besides, a large number of attack profiles is required in the basic settings of shilling attacks.

However, in unorganized malicious attacks, the attack strategies, the number of rated items, the target item and the rating functions are not constrained to be the same. Besides, we assume that each attacker produces a small number of attack profiles. The formal definition of unorganized malicious attacks is given in the following.

Let $U_{[m]}=\{U_1,U_2,\ldots,U_m\}$ and $I_{[n]}=\{I_1,I_2,\ldots,I_n\}$ denote $m$ users and $n$ items, respectively. Let $X\in{\cal R}^{m\times n}$ be the ground-truth rating matrix. $X_{ij}$ denotes the score that user $U_i$ gives item $I_j$ without any attack or noise, i.e., $X_{ij}$ reflects the ground-truth feeling of user $U_i$ on item $I_j$. Suppose that the score range is $[-R,R]$, then $-R\le X\le R$. Throughout this work, we assume that $X$ is a low-rank matrix as in classical matrix completion \citep{salakhutdinov2008probabilistic}. The intuition behind this assumption is that only a few factors influence users' preferences. 

Usually, the ground-truth matrix $X$ may be corrupted by a system noisy matrix $Z$. For example, if $X_{ij}=4.8$ for $i\in[m]$, then, it is acceptable that user $U_i$ gives item $I_j$ score $5$ or $4.6$. In this paper, we consider the independent Gaussian noise, i.e., $Z=(Z_{ij})_{m\times n}$ where each element $Z_{ij}$ is drawn i.i.d. from the Gaussian distribution $\mathcal{N}(0,\sigma)$ with parameter $\sigma$.

Let $M$ be the observed rating matrix. We define unorganized malicious attacks as follows: for every $j\in [n]$, $|U^j|<\gamma$ where $U^j=\{U_i|i\in [m]\; \&\; |M_{ij}-X_{ij}|\ge \epsilon\}$. The parameter $\epsilon$ is used to distinguish malicious users from the normal, and the parameter $\gamma$ limits the number of user profiles attacking one item. Intuitively, unorganized malicious attacks mean that attackers individually use a small number of user profiles to attack their own targets. So multiple independent shilling attacks where each contains a small number of attack profiles can be regarded as an example of unorganized malicious attacks.

%Let $M$ be the observed matrix which is corrupted by random noise and unorganized malicious attacks. We define an unorganized malicious attack with respect to a user set $U_K$ ($K\subset [m]$) if $|K|\le \iota$ and $|M_{ij}-X_{ij}|\ge \epsilon$ for some $j\in[n]$ and every $i\in K$. The parameter $\iota$ controls the number of users and parameter $\epsilon$ is used to distinguish malicious users from the normal. Intuitively, unorganized malicious attacks mean that attackers individually use a small number of user profiles to attack their own targets.

It is necessary to distinguish unorganized malicious attacks from noise. Generally speaking, user $U_i$ gives item $I_j$ a normal score if $|M_{ij}-X_{ij}|$ is very small, while user $U_i$ makes an attack to item $I_j$ if $|M_{ij}-X_{ij}|\ge \epsilon$. For example, if the ground-truth score of item $I_j$ is $4.8$ for user $U_i$, then user $U_i$ makes a noisy rating by giving $I_j$ score $5$, yet makes an attack by giving $I_j$ score $-3$. Therefore, we assume that $\|Z\|_F\leq \delta$, where $\delta$ is a small parameter.

Let $Y=M-X-Z=(Y_{ij})_{m\times n}$ be the malicious-attack-score matrix. Then, $Y_{ij}=0$ if user $U_i$ does not attack item $I_j$, otherwise $|Y_{ij}|\geq \epsilon$. We assume that $Y$ is a sparse matrix, and the intuition behind this assumption is that the ratio of malicious ratings to all the ratings is small. Notice that we can not directly recover $X$ and $Y$ from $M$ because such recovery is an NP-Hard problem \citep{candes2011robust}. We consider the following optimization problem.
\begin{equation}\label{equ:originForm}
\begin{split}
\min_{X,Y,Z}\,&\|X\|_*+\tau\|Y\|_1-\alpha \langle M,Y\rangle+\frac{\kappa}{2}\|Y\|^2_F\\
\text{s.t. }&X+Y+Z=M, \\
&\|Z\|_F\leq \delta.
\end{split}
\end{equation}
$\|X\|_*$ acts as a convex surrogate of the rank function. $\|Y\|_1$ is used to induce the sparse attack component of $M$. The term $\langle M,Y\rangle$ is introduced to better distinguish $Y$ and $Z$. While each entry in $Z$ is small and $Z_{ij}M_{ij}$ can be either positive or negative, the malicious rating bias $Y_{ij}$ and the observed rating $M_{ij}$ are usually the same as positive or negative, i.e. $M_{ij}Y_{ij}> 0$. Specifically, if it is a nuke attack, $Y_{ij}<0$ and $M_{ij}<0$; if it is a push attack, $Y_{ij}>0$ and $M_{ij}>0$. $\|Y\|^2_F$ is another strongly convex regularizer for $Y$. This term also guarantee that we can get the optimal solution efficiently.
$\tau$, $\alpha$ and $\kappa$ are tradeoff parameters.

%where the term $\langle M,Y\rangle$ is introduced to control the impact of $Y$ in the nuke attack detection. The intuition behind this term is that the malicious rating bias $Y_{ij}$ and the true rating $M_{ij}$ are always opposite in nuke attack, i.e. $X_{ij}Y_{ij}\leq 0$. So we need to minimize $\langle M,Y\rangle$ to better distinguish sparse matrix $Y$ and noisy matrix $Z$. The third term is $\alpha \langle (M-R_{max}),Y\rangle$ in the push attack detection.

In many real applications, we can not get a full matrix $M$, and only partial entries can be observed. Let $\Omega\in[m]\times[n]$ be the set of observed entries. We define an orthogonal projection $P_{\Omega}$ onto the linear space of matrices supported on $\Omega \subset [m]\times[n]$ as follows.
\begin{equation*}
P_{\Omega}M=\left\{
\begin{array}{ll}
M_{ij}& \text{ for } (i,j)\in\Omega,\\
0& \text{ otherwise.} \\
\end{array}
\right.
\end{equation*}
Our final optimization framework for unorganized malicious attack detection can be formulated as follows.
\begin{equation}\label{equ:originForm2}
\begin{split}
\min_{X,Y,Z}\,&\|X\|_*+\tau\|Y\|_1-\alpha \langle M,Y\rangle+\frac{\kappa}{2}\|Y\|^2_F\\
\text{s.t. }&P_{\Omega}(X+Y+Z)=P_{\Omega} M,\\ &\|P_{\Omega}(Z)\|_F\leq\delta.
\end{split}
\end{equation}

There have been many works focus on recovering low-rank part $X$ from complete or incomplete matrix \citep{candes2011robust, mackey2011divide, peng2012rasl, feng2013online}. However, we pay more attention to distinguishing the sparse noise term $Y$ from the small perturbation term $Z$. In order to find nonzero entries of $Y$, a new term $\langle M,Y\rangle$ is added which 
%leads to a more challenging optimization task and 
gets a much better performance. Further details about the proposed approach, theoretical analysis and experiments are explained below in Section~\ref{proposed_approach}, \ref{theoretical_analysis}, \ref{experiments}.

\section{The Proposed Approach}\label{proposed_approach}

In this section, we propose a proximal alternating splitting augmented Lagrangian method to solve the optimization problem~\eqref{equ:originForm2}, which can guarantee global convergence.

The separable structure emerging in both the objective function and constrains in Eq.~\eqref{equ:originForm2} enables us to derive an efficient algorithm by splitting the optimization problem in appropriate ways.
However, it is rather difficult to optimize this problem with theoretical guarantee, because this optimization involves three blocks variables. The existing literatures have shown that the direct extension of the alternating direction method of multipliers may not be convergent for solving  Eq.~\eqref{equ:originForm2}, e.g., \cite{chen2016direct}.

%In this section, we reformulate (\ref{equ:originForm2}) into an augmented-Lagrangian-oriented form which is beneficial for the algorithmic design and convergence analysis later on.

%It is rather difficult to optimize the formulations of Eqns.~\eqref{equ:originForm} and \eqref{equ:originForm2} with theoretical guarantees, because this optimization includes three-block non-convex programming with coupled objective function $\alpha\langle X,Y\rangle$. The difficulties lie in three hands: i) the objective function is non-convex due to the existence of coupled term; ii) the subgradients of the involved three functions are non-Lipschitz; and iii) there are three blocks variables involved. Therefore, we consider the following perturbation formulation inspired by \citet{cai2010singular,candes2011robust}.

Firstly, we reformulate Eq.~\eqref{equ:originForm2} as follows,
\begin{eqnarray}\label{equ:finalForm}
	\begin{array}{cl}
		\min&\|X\|_*+\tau\|Y\|_1-\alpha\langle M,Y\rangle+\frac{\kappa}{2}\|Y\|_F^2\\
		s.t.&X+Y+Z=P_{\Omega} M,\\
		&Z\in{\bf B}, \\
		& {\bf B}:=\{Z|\|P_{\Omega}(Z)\|_F\le \delta\},
	\end{array}
\end{eqnarray}
where $\kappa>0$ is a regularization parameter. As $\kappa\to 0$ and $\alpha\to 0$, this formulation degenerates into robust PCA considering small dense noise. We further get the augmented Lagrangian function as follows.
\begin{eqnarray}\label{equ:lagrangianForm}
	&\mathcal{L}_{\mathcal{A}}(X,Y,Z,\Lambda,\beta):=\|X\|_{*}+\tau \|Y\|_1-\alpha\langle M,Y\rangle&\nonumber\\ 
	&+\frac{\kappa}{2}\|Y\|_F^2-\langle \Lambda,L\rangle+\frac{\beta}{2}\| L \|_F^2&
\end{eqnarray}
where $L=X+Y+Z- P_{\Omega} M$ and $\beta$ is a positive constant.

It is noteworthy that traditional algorithms, e.g., \citep{tao2011recovering,he2015splitting} can not guarantee the convergence for this three-block convex minimization problem. 
We propose a proximal alternating splitting augmented Lagrangian method to decompose the optimization task of Eq.~\eqref{equ:finalForm} into three smaller ones which solve the variables $Z^{k+1}$, $X^{k+1}$ and $Y^{k+1}$ separately in the consecutive order. We will provide global convergence guarantee and convergence rate for this algorithm in Section~\ref{theoretical_analysis}.

%It is noteworthy that there is a coupling term in our objective function, and some traditional algorithms \citep{tao2011recovering,he2015splitting} can not be applied directly. We propose a proximal alternating splitting augmented Lagrangian method to solve this optimization, which inherits the advantages of ASALM \citep{tao2011recovering}. We will provide global convergence guarantee for this algorithm in Section~\ref{sec:convergence}.

More specifically, let $\mathcal{L}_{\mathcal{A}}$ and $\bf B$ be defined in Eq.~\eqref{equ:finalForm}. With given $(X^{k},Y^{k},\Lambda^k)$, we first update
\[
Z^{k+1} = \arg\min_{Z \in {\bf B}}   \mathcal{L}_{\mathcal{A}}(X^k,Y^{k},Z,\Lambda^k,\beta),
\]
and it is easy to get the closed form solution
\begin{equation}\label{eqnRefreshZ}
Z_{ij}^{k+1}=\left\{
\begin{array}{ll}
\min\{1, \delta/\|P_{\Omega}N\|_F\}N_{ij} &\text{if }(i,j)\in\Omega  \\
N_{ij} & \text{otherwise}
\end{array}
\right.
\end{equation}
where $N=\frac{1}{\beta}\Lambda^k+P_\Omega M-X^k-Y^k$. Then, we update
\begin{equation*}
X^{k+1} = \arg\min_{X\in {\cal R}^{m \times n}}  \mathcal{L}_{\mathcal{A}}(X,Y^k,Z^{k+1},\Lambda^{k},\beta),
\end{equation*}
and Lemma~\ref{lem:t2} gives the closed solution as
\begin{equation}\label{eqnRefreshX}
X^{k+1}=\mathcal{D}_{1/\beta}(P_\Omega M+\frac{1}{\beta}\Lambda^k-Y^{k}-Z^{k+1})
\end{equation}
where the nuclear-norm-involved shrinkage operator $\mathcal{D}_{1/\beta}$ is defined in Lemma~\ref{lem:t2}. Further, we update
\[
Y^{k+1} = \arg\min_{Y \in {\cal R}^{m \times n}}  \mathcal{L}_{\mathcal{A}}(X^{k+1},Y,Z^{k+1},\Lambda^{k},\beta)
\]
and Lemma~\ref{lem:t1} gives the closed solution $Y^{k+1}$ as
\begin{equation}\label{eqnRefreshY}
\mathcal{S}_{\tau/\beta}(\frac{\alpha+\beta}{\beta}P_\Omega M+\frac{1}{\beta}\Lambda^k-Z^{k+1}-X^{k+1})\upsilon\beta
\end{equation}
%\begin{equation}\label{eqnRefreshY}
%\mathcal{S}_{\tau\upsilon}\big(\upsilon\beta(\frac{\alpha+\beta}{\beta}P_\Omega %M+\frac{1}{\beta}\Lambda^k-Z^{k+1}-X^{k+1})\big)
%\end{equation}
where $\upsilon=1/(\beta+\kappa)$ and the shrinkage operator $\mathcal{S}_{\tau/\beta}$ is defined in Lemma~\ref{lem:t1}. Finally, we update
\[
\Lambda^{k+1} = \Lambda^k - \beta(X^{k+1} + Y^{k+1} +Z^{k+1} - P_\Omega M).
\]
The pseudocode of the proposed UMA algorithm is given in Algorithm~\ref{alg:UMA}.

\begin{algorithm}[tb]
	%\floatname{algorithm}{The UMA Algorithm}
	\caption{The UMA Algorithm}
	\label{alg:UMA}
	\textbf{Input:} matrix $M$ and parameters $\tau$, $\alpha$, $\beta$, $\delta$ and $\kappa$.\\
	\textbf{Output:} Label vector $[y_1,\dots,y_m]$ where $y_i=1$ if user $U_i$ is a malicious user; otherwise $y_i=0$.\\
	\textbf{Initialize:} $Y^0=X^0=\Lambda^0=0$, $y_i=0$ ($i=1,\dots,m$), $k=0$\\
	\textbf{Process:}
	\begin{algorithmic}[1]
		\WHILE{not converged}
		\STATE Compute $Z^{k+1}$ by Eq.~\eqref{eqnRefreshZ}.
		\STATE Compute $X^{k+1}$ by Eq.~\eqref{eqnRefreshX}.
		\STATE Compute $Y^{k+1}$ by Eq.~\eqref{eqnRefreshY}.
		\STATE Update the Lagrange multiplier $\Lambda^{k+1}$ by\\ $\quad\quad \Lambda^k-\beta(X^{k+1}+Y^{k+1}+Z^{k+1}-P_\Omega M)$.
		\STATE $k=k+1$.
		\ENDWHILE
		\STATE if $\max(|Y_{i,:}|)>0$, then $y_i=1$ ($i=1,\dots,m$).
	\end{algorithmic}
\end{algorithm}

\section{Theoretical Analysis}\label{theoretical_analysis}
%In this section, we first prove that attacks can be detected theoreteically. Then we give a convergence analysis of our algorithm UMA and prove its global convergence.
This section presents our main theoretical results. Due to the page limit, the detailed proofs and analysis are given in the supplement document.

%\subsection{Detection Guarantees}
We begin with two useful lemmas for the deviation of our proposed algorithm as follows.
\begin{lemma}\citep{bruckstein2009sparse}\label{lem:t1}
	For $\tau>0$ and $T\in {\cal R}^{m\times n}$, the closed solution of $\min_{Y}\tau\|Y\|_1+\|Y-T\|_F^2/2$ is given by matrix $\mathcal{S}_\tau (T)$ with
	$(\mathcal{S}_{\tau}(T))_{ij}=max\{|T_{ij}|-\tau ,0\}\cdot\text{sgn}(T_{ij})$,
	where $\text{sgn}(\cdot)$ means the sign function.
\end{lemma}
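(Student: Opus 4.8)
The plan is to exploit the \emph{separability} of the objective over the entries of $Y$. Writing $g(Y)=\tau\|Y\|_1+\tfrac12\|Y-T\|_F^2=\sum_{i\in[m],\,j\in[n]}\bigl(\tau|Y_{ij}|+\tfrac12(Y_{ij}-T_{ij})^2\bigr)$, the problem decouples into $mn$ independent scalar problems $\min_{y\in{\cal R}}\,\phi_t(y)$ with $\phi_t(y):=\tau|y|+\tfrac12(y-t)^2$ and $t=T_{ij}$. Since $\phi_t$ is the sum of the convex function $\tau|y|$ and the $1$-strongly convex function $\tfrac12(y-t)^2$, it is strongly convex and coercive, hence has a unique minimizer; it therefore suffices to identify that minimizer and verify it equals $\mathrm{sgn}(t)\max\{|t|-\tau,0\}$.

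First I would write the first-order optimality condition via the subdifferential: $y^\star$ minimizes $\phi_t$ iff $0\in\partial\phi_t(y^\star)=\tau\,\partial|y^\star|+(y^\star-t)$, where $\partial|y|=\{\mathrm{sgn}(y)\}$ for $y\neq 0$ and $\partial|0|=[-1,1]$. Then I would split into three cases. If $y^\star>0$, the condition forces $y^\star=t-\tau$, which is admissible exactly when $t>\tau$; symmetrically, if $y^\star<0$ then $y^\star=t+\tau$, admissible exactly when $t<-\tau$; and $y^\star=0$ solves it iff $t\in\tau[-1,1]$, i.e. $|t|\le\tau$. These three regimes partition ${\cal R}$, and in each the displayed value coincides with $\mathrm{sgn}(t)\max\{|t|-\tau,0\}$, which proves the scalar claim.

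Reassembling the $mn$ scalar solutions into a matrix shows that the unique minimizer of $g$ has entries $(\mathcal{S}_\tau(T))_{ij}=\max\{|T_{ij}|-\tau,0\}\,\mathrm{sgn}(T_{ij})$, as claimed. Alternatively — and this avoids the case analysis — one can simply \emph{verify} the optimality condition directly for the candidate $\widehat Y=\mathcal{S}_\tau(T)$: entrywise, $T_{ij}-\widehat Y_{ij}\in\tau\,\partial|\widehat Y_{ij}|$ (when $|T_{ij}|>\tau$ the left side is $\tau\,\mathrm{sgn}(T_{ij})$; when $|T_{ij}|\le\tau$ it is $T_{ij}\in\tau[-1,1]$), so $0\in\partial g(\widehat Y)$, and convexity of $g$ makes $\widehat Y$ a global minimizer, unique by strong convexity. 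There is essentially no hard step here; the only point requiring care is the nondifferentiability of $\|\cdot\|_1$ at the origin, which is precisely why the argument is phrased with subgradients rather than derivatives.
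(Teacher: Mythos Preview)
Your proof is correct and entirely standard. Note that the paper does not actually supply a proof of this lemma: it is quoted as a known result with a citation to \cite{bruckstein2009sparse}, so there is no in-paper argument to compare against. Your separability-plus-subdifferential derivation (or the direct verification you sketch as an alternative) is precisely the canonical justification of the soft-thresholding operator as the proximal map of $\tau\|\cdot\|_1$.
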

\begin{lemma}\citep{cai2010singular}\label{lem:t2}
	For $\mu>0$ and $Y\in{\cal R}^{m\times n}$ with rank $r$, the closed solution of $\min_{X}\mu\|X\|_*+\|X-Y\|_F^2/2$ is given by
	\[
	\mathcal{D}_\mu(Y)=S\text{diag} (\mathcal{S}_\mu(\Sigma))D^\top
	\]
	where $Y=S\Sigma D^\top$ denotes the singular value decomposition of $Y$, and $\mathcal{S}_\mu(\Sigma)$ is defined in Lemma~\ref{lem:t1}.
\end{lemma}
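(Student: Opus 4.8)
The plan is to exploit the first-order optimality condition of this strictly convex program. Since $X\mapsto\tfrac12\|X-Y\|_F^2$ is strongly convex and $\|\cdot\|_*$ is convex, the objective $g(X):=\mu\|X\|_*+\tfrac12\|X-Y\|_F^2$ has a unique minimizer $\widehat X$, and $\widehat X$ is that minimizer if and only if $0\in\partial g(\widehat X)$, i.e.
\[
Y-\widehat X\in\mu\,\partial\|\widehat X\|_*.
\]
So it suffices to set $\widehat X:=\mathcal D_\mu(Y)$ and verify this inclusion directly.

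First I would recall the standard description of the subdifferential of the nuclear norm at a matrix $X$ with thin SVD $X=U\Sigma V^\top$ (with $U\in{\cal R}^{m\times r}$, $V\in{\cal R}^{n\times r}$, $r=\text{rank}(X)$):
\[
\partial\|X\|_*=\{\,UV^\top+W\ :\ U^\top W=0,\ WV=0,\ \|W\|\le 1\,\}.
\]
This is the only nontrivial ingredient; it can be cited alongside \citep{cai2010singular}, or derived in a line or two from the variational form $\|X\|_*=\max_{\|B\|\le 1}\langle B,X\rangle$.

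Next, write the SVD of $Y$ as $Y=S\Sigma D^\top$ and split it by the threshold: $Y=S_0\Sigma_0 D_0^\top+S_1\Sigma_1 D_1^\top$, where $\Sigma_0$ collects the singular values strictly greater than $\mu$ (with corresponding orthonormal columns $S_0,D_0$) and $\Sigma_1$ those at most $\mu$ (columns $S_1,D_1$). By the definition of the shrinkage operator applied to the singular values (Lemma~\ref{lem:t1} in its scalar form), $\widehat X=\mathcal D_\mu(Y)=S_0(\Sigma_0-\mu I)D_0^\top$, and $S_0,D_0$ furnish a thin SVD of $\widehat X$. Then
\[
Y-\widehat X=\mu\,S_0 D_0^\top+S_1\Sigma_1 D_1^\top=\mu\bigl(S_0 D_0^\top+W\bigr),\qquad W:=\tfrac1\mu\,S_1\Sigma_1 D_1^\top,
\]
and I would check the three membership conditions: $S_0^\top W=0$ and $WD_0=0$ follow from the mutual orthogonality $S_0^\top S_1=0$, $D_0^\top D_1=0$ of distinct singular subspaces, while $\|W\|=\tfrac1\mu\|\Sigma_1\|\le 1$ since every diagonal entry of $\Sigma_1$ is $\le\mu$. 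Hence $Y-\widehat X\in\mu\,\partial\|\widehat X\|_*$, so $\widehat X=\mathcal D_\mu(Y)$ is the unique solution.

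The computation itself is short; the main obstacle — and the place to be careful — is the nuclear-norm subdifferential formula and the boundary cases it hides: singular values equal to $\mu$ (placed in the $\Sigma_1$ block, where $\|W\|\le 1$ still holds with equality), and $\widehat X=0$ when $\mu\ge\|Y\|$ (then $S_0,D_0$ are empty and the condition degenerates to $\|Y/\mu\|\le 1$, which is exactly the optimality condition for $X=0$). An entirely analogous, entrywise version of this argument also proves Lemma~\ref{lem:t1}, with the scalar subgradient of $|\cdot|$ in place of $\partial\|\cdot\|_*$.
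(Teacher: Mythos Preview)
Your argument is correct and is in fact the standard proof of the singular value thresholding formula: verify the first-order optimality condition $Y-\widehat X\in\mu\,\partial\|\widehat X\|_*$ using the known subdifferential of the nuclear norm, after splitting the SVD of $Y$ at the threshold $\mu$. The handling of the boundary cases (singular values equal to $\mu$, and the degenerate case $\widehat X=0$) is also correct.

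As for comparison with the paper: there is nothing to compare. The paper does not prove this lemma; it is stated with a citation to \citep{cai2010singular} and used as a black box in deriving the closed-form update~\eqref{eqnRefreshX}. Your proof is essentially the one given in that reference, so if anything you have reproduced the original source's argument rather than diverged from it.
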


We now present theoretical guarantee that UMA can recover the low-rank component $X$ and the sparse component $Y$. 
For simplicity, our theoretical analysis focuses on square matrix, and it is natural to generalize our results to the general rectangular matrices. 

Let the singular value decomposition of $X_0\in{\cal R}^{n\times n}$ be given by
\[
X_0=S\Sigma D^\top=\sum\nolimits_{i=1}^r \sigma_is_id_i^\top
\]
where $r$ is the rank of matrix $X_0$, $\sigma_1,\ldots,\sigma_r$ are the positive singular values, and $S=[s_1,\ldots,s_r]$ and $D=[d_1,\ldots,d_r]$ are the left- and right-singular matrices, respectively. For $\mu>0$, we assume
\begin{eqnarray}\label{eq:incoherence}
\max_i\|S^\top e_i\|^2&\leq& {\mu r}/{n},\nonumber\\
\max_i\|D^\top e_i\|^2&\leq& {\mu r}/{n},\\
\|SD^\top\|_\infty^2&\leq& \mu r/n^2.\nonumber
\end{eqnarray}
%We now present our first main result as follows.

\begin{theorem}\label{thm1}

Suppose that the support set of $Y_0$ be uniformly distributed for all sets of cardinality $k$,  and $X_0$ satisfies the incoherence condition given by Eq.~\eqref{eq:incoherence}. Let $X$ and $Y$ be the solution of optimization problem given by Eq.~\eqref{equ:originForm} with parameter $\tau=O(1/\sqrt{n})$, $\kappa=O(1/\sqrt{n})$ and $\alpha=O(1/n)$. For some constant $c>0$ and sufficiently large $n$, the following holds with probability at least $1-cn^{-10}$ over the choice on the support of $Y_0$,

\[
\|X_0-X\|\leq \delta \text{ and }\|Y_0-Y\|_F\leq \delta
\]

if rank($X_0$)$\leq\rho_r n/\mu/log^2 n$ and $k\leq\rho_sn^2$, where $\rho_r$ and $\rho_s$ are positive constant.

\end{theorem}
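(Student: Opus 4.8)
The plan is to relate the solution of our regularized problem~\eqref{equ:originForm} to the solution of the classical matrix-completion/robust-PCA program, whose exact-recovery guarantees under incoherence and a random sparsity pattern are already available in the literature~\citep{candes2011robust}. The key structural observation is that our objective differs from the standard $\|X\|_* + \lambda\|Y\|_1$ program only by the lower-order terms $-\alpha\langle M,Y\rangle + \frac{\kappa}{2}\|Y\|_F^2$, and by the choice $\tau = O(1/\sqrt{n})$ which, up to the $\alpha,\kappa$ perturbations, matches the $1/\sqrt{n}$ scaling under which robust PCA succeeds. So the first step is to write $M = X_0 + Y_0 + Z_0$ with $\|Z_0\|_F \le \delta$ (the noise bound assumed in the model), and to set up the comparison between $(X,Y)$, the optimizer of~\eqref{equ:originForm}, and $(X_0,Y_0)$, the ground truth.

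First I would establish a deterministic dual-certificate / optimality argument: using the subgradients of $\|\cdot\|_*$ at $X_0$ and of $\|\cdot\|_1$ at $Y_0$ together with the smooth gradient $-\alpha M + \kappa Y_0$ of the extra terms, I would show that if there exists a dual certificate $W$ obeying the usual range and magnitude conditions on $\Omega$ (the support of $Y_0$) and its complement — the same conditions constructed via the golfing scheme in~\citep{candes2011robust} — then any feasible perturbation $(H_X, H_Y)$ with $X_0 + H_X + Y_0 + H_Y$ still explaining $M$ up to the $\delta$-ball must have $\|H_X\|$ and $\|H_Y\|_F$ controlled by $\delta$. Concretely I would bound the objective gap $f(X_0+H_X, Y_0+H_Y) - f(X_0,Y_0)$ from below by $c_1\|P_{T^\perp}H_X\|_* + c_2\|P_{\Omega^c}H_Y\|_1 - (\text{terms of order }\alpha\|H_Y\|_1 + \kappa\|H_Y\|_F\|H_Y\|_F)$, and then use the constraint $\|Z\|_F = \|M - X - Y\|_F \le \delta$ to trap the ``in-subspace'' components $P_T H_X$ and $P_\Omega H_Y$. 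Since $\alpha = O(1/n)$ and $\kappa = O(1/\sqrt n)$ are asymptotically negligible against the $O(1/\sqrt n)$ threshold quantities, these extra terms only shift the admissible constants $\rho_r,\rho_s$ and do not destroy the certificate's validity.

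The second step is the probabilistic one: invoking the random-support assumption on $Y_0$ (uniform over cardinality-$k$ sets with $k \le \rho_s n^2$) and the incoherence~\eqref{eq:incoherence} with $\mathrm{rank}(X_0) \le \rho_r n/\mu/\log^2 n$, I would cite/adapt the golfing-scheme construction to produce the required certificate $W$ with probability at least $1 - cn^{-10}$; here the only modification is to absorb the deterministic gradient $-\alpha M + \kappa Y_0$ into the target the certificate must approximate on $T$ and on $\Omega$, which changes the approximation error by a term of size $O(\alpha\|M\| + \kappa\|Y_0\|) = o(1)$ relative to the slack already built into the golfing bounds. Combining the deterministic trapping inequality with the high-probability existence of $W$ then yields $\|X_0 - X\| \le \delta$ and $\|Y_0 - Y\|_F \le \delta$ on the good event.

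The main obstacle I anticipate is \emph{not} the probabilistic certificate construction — that is essentially inherited from~\citep{candes2011robust} — but rather handling the interaction of the $-\alpha\langle M,Y\rangle$ term with the $\ell_1$ subgradient cleanly: because $\langle M,Y\rangle$ couples the solution to the observed (already corrupted) matrix, one must check that the perturbed optimality conditions still separate the on-support and off-support behavior of $H_Y$, i.e. that the sign-consistency heuristic $M_{ij}Y_{ij} > 0$ used to motivate the term does not sabotage the worst-case bound. I would control this by noting $\|M\|_\infty$ is bounded (ratings lie in $[-R,R]$ plus bounded noise plus bounded attacks) so $\alpha\langle M, H_Y\rangle \le \alpha\|M\|_\infty \|H_Y\|_1$, and $\alpha\|M\|_\infty = O(1/n)$ is dominated by the $\tau = O(1/\sqrt n)$ coefficient of $\|H_Y\|_1$ coming from the $\ell_1$ term; the strong convexity from $\frac{\kappa}{2}\|Y\|_F^2$ is then a bonus that only helps pin down $H_Y$. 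Making the constant bookkeeping in this domination step rigorous — so that $\rho_r,\rho_s$ come out positive — is where the real work lies.
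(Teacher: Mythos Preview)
Your proposal is correct and follows essentially the same route as the paper's proof: a subgradient/dual-certificate argument showing that $(X_0,Y_0)$ is the unique (approximate) minimizer, with the certificate imported from~\citep{candes2011robust} and the extra $-\alpha\langle M,Y\rangle+\tfrac{\kappa}{2}\|Y\|_F^2$ terms absorbed as lower-order perturbations because $\alpha=O(1/n)$ and $\kappa=O(1/\sqrt n)$ are dominated by $\tau=O(1/\sqrt n)$. The only presentational difference is that the paper works with the single perturbation direction $(X_0+H,\,Y_0-H)$ and invokes $\|Z\|_F\le\delta$ at the end, whereas you keep $(H_X,H_Y)$ general throughout; the mechanism is the same.
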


Similarly to the proof of Theorem~\ref{thm1}, we present the following theorem for the minimization problem of Eq.~\eqref{equ:originForm2}.

\begin{theorem}\label{thm2}

Suppose that $X_0$ satisfies the incoherence condition given by Eq.~\eqref{eq:incoherence}, and $\Omega$ is uniformly distributed among all sets of size $\omega \geq n^2/10$. We assume that each entry is corrupted independently with probability $q$. Let $X$ and $Y$ be the solution of optimization problem given by Eq.~\eqref{equ:originForm2} with parameter $\tau=O(1/\sqrt{n})$ , $\kappa=O(1/\sqrt{n})$ and $\alpha=O(1/n)$. For some constant $c>0$ and sufficiently large $n$, the following holds with probability at least $1-cn^{-10}$,

\[
\|X_0-X\|_F\leq \delta \text{ and }\|Y_0-Y\|_F\leq \delta
\]

if rank($X_0$)$\leq\rho_r n/\mu/log^2 n$ and $q\leq q_s$, where $\rho_r$ and $q_s$ are positive constants.

\end{theorem}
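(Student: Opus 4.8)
The plan is to reduce Theorem~\ref{thm2} to the already-established Theorem~\ref{thm1} by the same route that connects robust PCA from incomplete observations to robust PCA from complete observations. The observation model here is that each entry of $M$ is observed with probability tied to $\omega/n^2 \ge 1/10$, and independently of that is corrupted (i.e. belongs to the support of $Y_0$) with probability $q$. First I would set up the deterministic optimality argument: let $(X,Y,Z)$ be optimal for Eq.~\eqref{equ:originForm2} and $(X_0,Y_0,Z_0)$ be the ground truth, which is feasible since $\|P_\Omega(Z_0)\|_F \le \|Z_0\|_F \le \delta$. Writing $H = X - X_0$, $G = Y - Y_0$, feasibility of the constraint forces $P_\Omega(H + G + (Z - Z_0)) = 0$, and combining the objective inequality at the optimum with the triangle inequality and the bound $\|P_\Omega(Z)\|_F, \|P_\Omega(Z_0)\|_F \le \delta$ gives a first-order control $\|X\|_* + \tau\|Y\|_1 - \alpha\langle M, Y\rangle + \tfrac{\kappa}{2}\|Y\|_F^2 \le \|X_0\|_* + \tau\|Y_0\|_1 - \alpha\langle M, Y_0\rangle + \tfrac{\kappa}{2}\|Y_0\|_F^2$ up to terms of order $\delta$ coming from the slack in the equality constraint on $\Omega$.

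Next I would invoke the dual-certificate machinery underlying Theorem~\ref{thm1}: under the incoherence condition Eq.~\eqref{eq:incoherence} with $\mathrm{rank}(X_0) \le \rho_r n/\mu/\log^2 n$, and with the corruption pattern on $\Omega$ being (conditionally on $\Omega$) a uniformly random support of the appropriate small cardinality, one constructs a dual certificate $W$ supported away from $\mathrm{range}(S), \mathrm{range}(D)$ that simultaneously subgradient-certifies $\|X\|_*$ at $X_0$ and (a scaled) $\|Y\|_1$ at $Y_0$. The new feature compared with the complete case is that the certificate must be built only from entries in $\Omega$; this is exactly the incomplete-observation construction of Candès–Recht / Candès et al., where the golfing scheme samples from $\Omega$ in batches. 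Because $\omega \ge n^2/10$ the sampling rate is a constant, so the golfing scheme converges in $O(\log n)$ rounds with the required spectral-norm bounds holding with probability $1 - cn^{-10}$, and the effective corruption fraction is $q$ times the (constant) inclusion probability, which stays below the robust-PCA threshold provided $q \le q_s$ for a suitable constant $q_s$. I would then feed this certificate into the perturbation inequality from the previous paragraph: testing against $W$ kills the subgradient cross terms and leaves $\|P_{T^\perp}H\|_* + (\text{gap})\,\|P_{\Omega_0^\perp}G\|_1 \lesssim \delta$, where $T$ is the tangent space of the low-rank manifold at $X_0$ and $\Omega_0$ the support of $Y_0$. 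The $\tfrac{\kappa}{2}\|Y\|_F^2$ term, being strongly convex, upgrades this to $\|G\|_F^2 \lesssim \delta \|G\|_F + (\text{lower-order})$, and the injectivity of $P_\Omega$ restricted to $T \oplus (\text{support directions})$ — which is where the $\omega \ge n^2/10$ hypothesis is used a second time — converts control of the "off-diagonal" pieces into control of $\|H\|_F$. Closing the loop yields $\|X_0 - X\|_F \le \delta$ and $\|Y_0 - Y\|_F \le \delta$.

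The main obstacle I anticipate is the second use of the sampling hypothesis: establishing that $P_\Omega$ is well-conditioned (a restricted-isometry-type statement) on the direct sum of the tangent space $T$ and the low-dimensional space of matrices supported on $\Omega_0$, so that a bound on $P_\Omega(H+G)=0$ modulo $\delta$ actually pins down $H$ and $G$ individually rather than just their sum. In the complete-observation setting of Theorem~\ref{thm1} this is automatic; here one needs a concentration argument (operator Bernstein / matrix Chernoff) showing $\|P_T P_\Omega P_T - (\omega/n^2) P_T\| $ is small, which requires the incoherence of $T$ and a union bound costing the $n^{-10}$ probability. The parameter scalings $\tau = O(1/\sqrt n)$, $\kappa = O(1/\sqrt n)$, $\alpha = O(1/n)$ are chosen precisely so that the certificate norms and the strong-convexity gap balance against $\delta$; I would track these constants carefully but not belabor them, since the bookkeeping mirrors Theorem~\ref{thm1} with an extra factor absorbing the constant sampling rate $\omega/n^2$.
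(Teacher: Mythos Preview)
Your proposal is correct and follows the same route the paper takes: the paper does not give an independent proof of Theorem~\ref{thm2} but simply states that it is proved ``similarly to the proof of Theorem~\ref{thm1}'' for the partially observed model~\eqref{equ:originForm2}. Your reduction to the dual-certificate argument of Theorem~\ref{thm1}, adapted via the golfing scheme of Cand\`es et al.\ to sample from $\Omega$ and combined with a concentration bound for $P_T P_\Omega P_T$ under the constant sampling rate $\omega/n^2\ge 1/10$, is precisely the standard instantiation of that remark and in fact supplies considerably more detail than the paper itself provides.
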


%\subsection{Convergence Analysis}\label{sec:convergence}

We now analyze the global convergence of the proposed UMA algorithm. For simplicity, let $U=(Z;X;Y)$, $V=(X;Y;\Lambda)$, $W=(Z;X;Y;\Lambda)$. We also define
\[
\theta(U) =\|X\|_* + \tau\|Y\|_1-\alpha \langle M, Y \rangle+\frac{\kappa}{2}\|Y\|_F^2
\]
and $\Psi(W)=(-\Lambda; -\Lambda; -\Lambda; X+Y+Z-M)$.
It follows from Corollaries 28.2.2 and 28.3.1 of \cite{rockafellar2015convex} that the solution set of Eq.~(\ref{equ:originForm2}) is non-empty.
Then, let $W^*=((Z^*)^\top,(X^*)^\top,(Y^*)^\top,(\Lambda^*)^\top)^\top$ be a saddle point of Eq.~(\ref{equ:originForm2}), and define $V^*=((X^*)^\top,(Y^*)^\top,(\Lambda^*)^\top)^\top$, ${\cal W}^*$ as the solution set and ${\cal V}^*:=\{V^*| W^*\in {\cal W}^*\}$.

\begin{theorem}\label{theorem34}
	When $\beta$ is restricted by
	\begin{eqnarray}\label{beta}
	\beta\in\big(0,\;2(\sqrt{5}-2)\kappa \big),
	\end{eqnarray} there exists a sufficient small scalar $\varepsilon>0$ such that
	\begin{align*}
	&\label{cof1} \kappa-\frac{\sqrt{5}+2}{2}\beta>0,\;\mbox{and}\;\;\kappa-\frac{1}{2(\sqrt{5}-2-\varepsilon)\beta}>0.
	\end{align*}
	Then the sequence $\{V^k\}$ is bounded and $ \|Y^k - Y^{k+1} \|_F^2 + \|X^k - X^{k+1} \|_F^2 + \| \Lambda^k -{\Lambda}^{k+1}\|_F^2 \to 0$ as $k\rightarrow \infty$. 
	%Then, we have
	%\begin{enumerate}
	%	\item[(1)] The sequence $\{V^k\}$ is bounded.		
	%	\item[(2)] $ \lim_{k\rightarrow \infty}\{\|Y^k - Y^{k+1} \|_F^2 +
	%	\|X^k - X^{k+1} \|_F^2 + \| \Lambda^k -{\Lambda}^{k+1}\|_F^2\}=0.$
	%\end{enumerate}
\end{theorem}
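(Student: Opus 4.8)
The plan is to follow the standard variational-inequality framework for augmented Lagrangian / ADMM-type schemes, but with the proximal-splitting modification that makes the three-block scheme convergent. First I would write down the optimality conditions for each of the three subproblems that define $Z^{k+1}$, $X^{k+1}$, $Y^{k+1}$, together with the multiplier update, and assemble them into a single perturbed variational inequality: for every feasible $W$,
\[
\theta(U)-\theta(U^{k+1})+\langle W-W^{k+1},\Psi(W^{k+1})\rangle\;\ge\;\langle V-V^{k+1},Q(V^k-V^{k+1})\rangle,
\]
where $Q$ is an explicit (block) matrix built from $\beta$, $\kappa$ and the proximal terms. The key algebraic fact I would need is that, because the $Y$-subproblem carries the strongly convex term $\tfrac{\kappa}{2}\|Y\|_F^2$ and the updates are done in the order $Z,X,Y$, the ``cross terms'' coming from using stale iterates ($Y^k$ in the $X$-step, $X^{k+1}$ but $Y^k$ in forming $Z^{k+1}$, etc.) can be absorbed: one gets an inequality of the form
\[
\|V^k-V^*\|_G^2-\|V^{k+1}-V^*\|_G^2\;\ge\;c_1\|Y^k-Y^{k+1}\|_F^2+c_2\|X^k-X^{k+1}\|_F^2+c_3\|\Lambda^k-\Lambda^{k+1}\|_F^2
\]
for a suitable positive semidefinite $G$ and constants $c_1,c_2,c_3>0$, provided $\beta$ and $\kappa$ satisfy the stated relations $\kappa-\tfrac{\sqrt5+2}{2}\beta>0$ and $\kappa-\tfrac{1}{2(\sqrt5-2-\varepsilon)\beta}>0$. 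These two conditions are exactly what is needed to complete the square and dominate the indefinite cross terms by the diagonal ones via Cauchy--Schwarz / Young's inequality with a carefully tuned splitting parameter $\varepsilon$; the admissible range $\beta\in(0,2(\sqrt5-2)\kappa)$ is precisely the set of $\beta$ for which such an $\varepsilon$ exists.

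Granting that ``Fej\'er-type'' contraction inequality, the theorem follows by routine arguments. Summing over $k$ shows $\sum_k\big(\|Y^k-Y^{k+1}\|_F^2+\|X^k-X^{k+1}\|_F^2+\|\Lambda^k-\Lambda^{k+1}\|_F^2\big)<\infty$, whence each term tends to $0$; and since $\|V^{k+1}-V^*\|_G$ is nonincreasing, $\{V^k\}$ is bounded (using that $G$ is positive definite on the relevant subspace, which again is guaranteed by the constants being positive). To handle the fact that the contraction is stated for $V^k=(X^k;Y^k;\Lambda^k)$ rather than the full $W^k$, I would note that $Z^{k+1}$ is an explicit Lipschitz function of $(X^k,Y^k,\Lambda^k)$ through \eqref{eqnRefreshZ}, so boundedness of $\{V^k\}$ transfers to $\{Z^k\}$.

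The main obstacle I anticipate is the bookkeeping in deriving the precise $Q$ and $G$ matrices and, in particular, showing that the indefinite cross terms (the ones mixing $X^k-X^{k+1}$ with $Y^k-Y^{k+1}$ and with $\Lambda^k-\Lambda^{k+1}$, which arise because this is a genuine three-block method where the naive ADMM diverges) can be controlled. This is where the strong convexity constant $\kappa$ must be traded against $\beta$, and getting the sharp threshold $2(\sqrt5-2)\kappa$ requires choosing the Young's-inequality weights optimally — a small optimization over the free parameter that yields the golden-ratio-type constant $\sqrt5-2$. Everything else (closed forms of the subproblem solutions from Lemmas~\ref{lem:t1} and~\ref{lem:t2}, monotonicity of $\Psi$, nonemptiness of ${\cal W}^*$) is either already available in the excerpt or standard. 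A secondary, more technical point is justifying that the subdifferential inclusions for the nonsmooth terms $\|X\|_*$ and $\tau\|Y\|_1$ can be manipulated termwise inside the variational inequality; this uses convexity of $\theta$ and is handled by the usual subgradient inequality $\theta(U)-\theta(U^{k+1})\ge\langle U-U^{k+1},g^{k+1}\rangle$ with $g^{k+1}\in\partial\theta(U^{k+1})$ read off from the optimality conditions.
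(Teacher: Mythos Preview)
Your proposal follows essentially the same route as the paper: assemble the subproblem optimality conditions into a perturbed variational inequality, control the indefinite cross terms via Cauchy--Schwarz/Young with tuned weights (this is precisely where the $\sqrt{5}-2$ constant emerges), obtain a Fej\'er-type inequality, and sum. The high-level architecture is correct.

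There is, however, one structural point you have not anticipated that will force you to modify the Lyapunov function you wrote down. When you compare the $X$-optimality condition at step $k+1$ with the one at step $k$ (to control $\langle \Lambda^k-\Lambda^{k+1},X^k-X^{k+1}\rangle$), the right-hand side picks up not only $Y^k-Y^{k+1}$ but also the \emph{previous} increment $Y^{k-1}-Y^k$, because the $X$-subproblem at step $k$ used the stale $Y^{k-1}$. Concretely one obtains
\[
\langle \Lambda^k-\Lambda^{k+1}, X^k-X^{k+1}\rangle \;\ge\; -\beta\,\langle X^k-X^{k+1},\,(Y^{k+1}-Y^k)-(Y^k-Y^{k-1})\rangle,
\]
so after applying Young's inequality you are left with a term $-c\,\beta\|Y^{k-1}-Y^k\|_F^2$ that cannot be absorbed into $\|V^k-V^*\|_G^2-\|V^{k+1}-V^*\|_G^2$ for any fixed $G$. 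The paper handles this by augmenting the Lyapunov function: the quantity that actually decreases is
\[
\tfrac12\|V^k-V^*\|_Q^2 \;+\; \tfrac{\beta}{3-\sqrt5}\,\|Y^{k-1}-Y^k\|_F^2,
\]
with $Q=\mathrm{diag}(\beta I,\beta I,\beta^{-1}I)$; the extra $\|Y^{k-1}-Y^k\|_F^2$ term telescopes against the corresponding $\|Y^k-Y^{k+1}\|_F^2$ produced at the next step. Once you allow this two-step Lyapunov, the rest of your plan (sum, conclude summability, hence convergence of increments to zero; monotone decrease gives boundedness) goes through verbatim. Your remark about transferring boundedness to $Z^k$ is correct but not needed for the stated theorem, which only asserts boundedness of $\{V^k\}$.
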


%We are now ready to prove the convergence of UMA.

\begin{theorem}\label{THMD2} Let $\{V^k\}$ and $\{W^{k}\}$
	be the sequences generated by UMA. Assume that  the penalty parameter $\beta$ is satisfied with Eq.~\eqref{beta}.
	Then, we have
	\begin{enumerate}
		
		\item Any cluster point of  $\{{W}^{k}\}$ is a solution point of (\ref{equ:lagrangianForm}).
		
		\item The sequence $\{{V}^k\}$ converges to some $V^{\infty}\in {\cal
			V}^*$.
		\item The sequence $\{U^k\}$ converges to a solution point of Eq.~\eqref{equ:originForm2}.
	\end{enumerate}
\end{theorem}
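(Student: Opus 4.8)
The plan is to exploit Theorem~\ref{theorem34}, which applies because $\beta$ satisfies Eq.~\eqref{beta}, and which already hands us boundedness of $\{V^k\}$ together with the asymptotic regularity $X^k-X^{k+1}\to 0$, $Y^k-Y^{k+1}\to 0$, $\Lambda^k-\Lambda^{k+1}\to 0$; the remaining work is to promote these to the three claimed statements. First I would observe that $\{W^k\}$ itself is bounded: $\{X^k\},\{Y^k\},\{\Lambda^k\}$ are bounded by Theorem~\ref{theorem34}, and Eq.~\eqref{eqnRefreshZ} writes $Z^{k+1}$ as a coordinatewise contraction of an affine function of $(\Lambda^k,X^k,Y^k)$, so $\{Z^k\}$ is bounded too. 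Hence some subsequence $W^{k_j}\to W^{\infty}=(Z^{\infty};X^{\infty};Y^{\infty};\Lambda^{\infty})$. Second, from the multiplier update $\Lambda^k-\Lambda^{k+1}=\beta(X^{k+1}+Y^{k+1}+Z^{k+1}-P_\Omega M)$ and $\Lambda^k-\Lambda^{k+1}\to 0$, the primal residual $X^{k+1}+Y^{k+1}+Z^{k+1}-P_\Omega M\to 0$; differencing this residual over consecutive $k$ and using $X^k-X^{k+1}\to 0$, $Y^k-Y^{k+1}\to 0$ also gives $Z^k-Z^{k+1}\to 0$. Therefore the shifted iterates $W^{k_j+1}$ converge to the same $W^{\infty}$, with $X^{\infty}+Y^{\infty}+Z^{\infty}=P_\Omega M$ and $Z^{\infty}\in{\bf B}$ (every $Z^{k+1}\in{\bf B}$ and ${\bf B}$ is closed).

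For statement~1 I would write out the first-order optimality conditions of the three subproblems defining UMA: the variational inequality $\langle Z-Z^{k+1},\,-\Lambda^k+\beta(X^k+Y^k+Z^{k+1}-P_\Omega M)\rangle\ge 0$ for every $Z\in{\bf B}$; the inclusion $\Lambda^k-\beta(X^{k+1}+Y^k+Z^{k+1}-P_\Omega M)\in\partial\|X^{k+1}\|_*$; and $\alpha M-\kappa Y^{k+1}+\Lambda^k-\beta(X^{k+1}+Y^{k+1}+Z^{k+1}-P_\Omega M)\in\tau\partial\|Y^{k+1}\|_1$. Letting $k=k_j\to\infty$, the vanishing of the residual and of the successive differences kills every $\beta$-term, and the closedness of the graphs of $\partial\|\cdot\|_*$, $\partial\|\cdot\|_1$ and of the normal cone of ${\bf B}$ lets me pass to the limit, yielding $\langle Z-Z^{\infty},\Lambda^{\infty}\rangle\le 0$ for all $Z\in{\bf B}$, $\Lambda^{\infty}\in\partial\|X^{\infty}\|_*$, and $\alpha M-\kappa Y^{\infty}+\Lambda^{\infty}\in\tau\partial\|Y^{\infty}\|_1$. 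Together with the feasibility $X^{\infty}+Y^{\infty}+Z^{\infty}=P_\Omega M$, $Z^{\infty}\in{\bf B}$, these are exactly the KKT conditions of Eq.~\eqref{equ:finalForm}, so $W^{\infty}$ is a solution point of Eq.~\eqref{equ:lagrangianForm}.

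For statement~2 I would upgrade subsequential convergence to convergence of the whole sequence by a Fej\'er-type argument. The analysis behind Theorem~\ref{theorem34} provides, for each $V^*\in{\cal V}^*$, a nonnegative quantity $\Phi_k(V^*)$ — a weighted sum of $\|X^k-X^*\|_F^2$, $\|Y^k-Y^*\|_F^2$, $\|\Lambda^k-\Lambda^*\|_F^2$ whose coefficients are strictly positive precisely because of Eq.~\eqref{beta} — that is nonincreasing in $k$ and comparable to $\|V^k-V^*\|_F^2$. Choosing $V^*=V^{\infty}\in{\cal V}^*$ (legitimate by statement~1, since $W^\infty$ then lies in the solution set), $\Phi_k(V^{\infty})$ is nonincreasing and tends to $0$ along $\{k_j\}$, hence tends to $0$ along the whole sequence, so $V^k\to V^{\infty}$. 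Finally, statement~3 follows: statement~2 gives $X^k\to X^{\infty}$ and $Y^k\to Y^{\infty}$, and the residual identity gives $Z^k=P_\Omega M-X^k-Y^k+\frac{1}{\beta}(\Lambda^{k-1}-\Lambda^k)\to P_\Omega M-X^{\infty}-Y^{\infty}=:Z^{\infty}$, so $U^k\to(Z^{\infty};X^{\infty};Y^{\infty})$, which by statement~1 satisfies the KKT system of Eq.~\eqref{equ:originForm2} and is thus an optimal solution.

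I expect the crux to be statement~2 — more precisely, the ingredient it borrows from Theorem~\ref{theorem34}: the existence of a monotone Lyapunov functional $\Phi_k$ with strictly positive weights, which is exactly where the restriction $\beta\in(0,2(\sqrt{5}-2)\kappa)$ and the proximal modification of the augmented Lagrangian are used. Once that functional and the asymptotic regularity are in hand, the limiting of the subproblem optimality conditions (the only subtlety being the closedness of the subdifferentials and of the normal cone of ${\bf B}$) and the Fej\'er argument are routine.
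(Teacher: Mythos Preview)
Your proposal is correct and follows essentially the same route as the paper: pass the subproblem optimality conditions to a subsequential limit using the asymptotic regularity from Theorem~\ref{theorem34}, then run a Fej\'er argument with the Lyapunov functional from the contractive inequality behind that theorem, and finally recover $Z^k$ from the multiplier update. One small correction: the monotone quantity furnished by the contractive estimate is not merely a weighted sum of $\|X^k-X^*\|_F^2$, $\|Y^k-Y^*\|_F^2$, $\|\Lambda^k-\Lambda^*\|_F^2$, but rather $\tfrac{1}{2}\|V^k-V^*\|_Q^2+\tfrac{\beta}{3-\sqrt{5}}\|Y^{k-1}-Y^k\|_F^2$, i.e.\ it carries an extra successive-difference term in $Y$; this is harmless for your argument since that term also vanishes along the subsequence, but it is needed for the monotonicity to hold.
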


%\begin{remark}
%	Note that the range for $\beta$ in \cite{tao2016convergence} with convergence guarantee   is
%	$\beta\in(0,0.4\kappa)$ for UMA solving (\ref{equ:originForm2}).
%	However, we get a much larger range for the penalty parameter $\beta$ in (\ref{beta}).
%\end{remark}
Note that the range for $\beta$ in \cite{tao2016convergence} with convergence guarantee   is
$\beta\in(0,0.4\kappa)$ for UMA solving (\ref{equ:originForm2}).
However, we get a much larger range for the penalty parameter $\beta$ in Eq.~(\ref{beta}).

Next, we present a worst-case $O(1/t)$ convergence rate measured by the iteration complexity for UMA.
Indeed, the range of $\beta$ to ensure the $O(1/t)$ convergence rate is slightly more restrictive than Eq.~(\ref{beta}).
Let us define $${U}^{k+1}_t=\frac{1}{t}\sum_{k=1}^t U^{k+1}\;\mbox{and}\; {W}^{k+1}_t=\frac{1}{t}\sum_{k=1}^t W^{k+1}.$$
We define ${X}_t^{k+1}$, ${Y}_t^{k+1}$ and ${Z}_t^{k+1}$ in the same way.
%\begin{eqnarray*}
	%{ Z}^{k+1}_t=\frac{1}{t}\sum_{k=1}^t Z^{k+1},\; { X}_t^{k+1}=\frac{1}{t}\sum_{k=1}^t %X^{k+1},\;{Y}_t^{k+1}=\frac{1}{t}\sum_{k=1}^t Y^{k+1},\;\mbox{and}\; %{U}^{k+1}_t=\frac{1}{t}\sum_{k=1}^t U^{k+1},\;{ W}^{k+1}_t=\frac{1}{t}\sum_{k=1}^t W^{k+1}.
%\end{eqnarray*}
Obviously, ${ W}^{k+1}_t\in{\cal W}$ because of the convexity of ${\cal W}$.
By invoking Theorem \ref{theorem34}, there exists a constant $C$ such that
\begin{eqnarray*}
	\max\left( \|X^k\|_F,\|Y^k\|_F,\|Z^k\|_F,\|\Lambda\|_F\right)\le C, \,\;\forall \;k.
\end{eqnarray*}

\begin{theorem}\label{TA2}
	For $t$ iterations generated by UMA  with $\beta$ restricted in $\beta\in\big(0, (\sqrt{33}-5)\kappa/2\big)$, 
	%\begin{eqnarray}\label{condition:rate}
	%	\beta\in\left(0, \frac{\sqrt{33}-5}{2}\kappa\right),
	%\end{eqnarray} the following assertions holds.
	\begin{itemize}
		\item[1)] We have
		\begin{eqnarray*}
			\lefteqn{\theta(U_t^{k+1})-\theta(U)+(W_t^{k+1}-W)^\top \Psi(W)} \\
			&\le& \frac{1}{t}\big[4\beta C\|X+Y+Z-{P_{\Omega} M}\|_F \\
			&&+\frac{1}{2}\|V^1-V\|_Q^2 +\frac{7+\sqrt{33}}{8}\beta\|Y^1-Y^0\|_F^2\big],
		\end{eqnarray*}
		with $Q=(\beta I,0,0;0,\beta I,0;0,0,I/\beta)$.
		%\begin{eqnarray*}
		%	\theta(U_t^{k+1})-\theta(U)+(W_t^{k+1}-W)^\top \Psi(W) \\
		%	\le \frac{1}{t}\left[4\beta C\|X+Y+Z-{P_{\Omega} M}\|_F
		%	+\frac{1}{2}\|V^1-V\|_Q^2 \right.\\
		%	\left. +\frac{7+\sqrt{33}}{8}\beta\|Y^1-Y^0\|_F^2\right].\quad
		%\end{eqnarray*}
		\item[2)] There exists a constant ${\bar c}_1>0$ such that
		\begin{eqnarray*}
			\|{ X}^{k+1}_t+{ Y}^{k+1}_t+{ Z}^{k+1}_t-{P_{\Omega} M}\|^2\le \frac{{\bar c}_1}{t^2}.
		\end{eqnarray*}
		%\begin{eqnarray}\label{ergoprimal}
		%\end{eqnarray}
		\item[3)] There exists a constant ${\bar c}_2>0$ such that
		\begin{eqnarray*}
			|\theta(U_t^{k+1})-\theta(U^*)|\le\frac{{\bar c}_2}{t}.
		\end{eqnarray*}
	\end{itemize}
\end{theorem}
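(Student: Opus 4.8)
The plan is to derive the $O(1/t)$ ergodic convergence rate from a single variational inequality that the iterates of UMA satisfy at each step. First I would write the first-order optimality conditions for the three subproblems that produce $Z^{k+1}$, $X^{k+1}$, $Y^{k+1}$, together with the multiplier update, and combine them into one inequality of the form
\begin{eqnarray*}
\theta(U) - \theta(U^{k+1}) + (W-W^{k+1})^\top \Psi(W^{k+1}) \ge (V-V^{k+1})^\top Q (V^k - V^{k+1}) + (\text{cross terms}),
\end{eqnarray*}
where $Q=(\beta I,0,0;0,\beta I,0;0,0,I/\beta)$ as in the statement. The monotonicity of $\Psi$ (it is affine with a skew-symmetric linear part plus a constant) lets me replace $\Psi(W^{k+1})$ by $\Psi(W)$ at the cost of no extra term. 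The subtlety compared with the classical two-block ADMM analysis is the proximal/splitting structure here: because the $Z$-update uses the \emph{old} $(X^k,Y^k)$ and the $Y$-update is strongly convex with modulus $\kappa$ (from the $\tfrac{\kappa}{2}\|Y\|_F^2$ term), the cross terms carry factors of $\beta$ and $\kappa$, and it is precisely controlling these that forces the restriction $\beta\in(0,(\sqrt{33}-5)\kappa/2)$, which is slightly tighter than the range $(0,2(\sqrt5-2)\kappa)$ needed for mere convergence in Theorem~\ref{theorem34}.

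Next I would telescope. Summing the per-step inequality from $k=1$ to $t$, the right-hand side collapses: the $Q$-weighted terms give $\tfrac12\|V^1-V\|_Q^2 - \tfrac12\|V^{t+1}-V\|_Q^2 \le \tfrac12\|V^1-V\|_Q^2$, the strong-convexity surplus in $Y$ contributes the $\tfrac{7+\sqrt{33}}{8}\beta\|Y^1-Y^0\|_F^2$ term, and any residual depending on $V^{t+1}$ is bounded using the uniform bound $\max(\|X^k\|_F,\|Y^k\|_F,\|Z^k\|_F,\|\Lambda^k\|_F)\le C$ from Theorem~\ref{theorem34}, producing the $4\beta C\|X+Y+Z-P_\Omega M\|_F$ term. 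Dividing by $t$ and invoking convexity of $\theta$ and linearity of $\Psi$ (so that $\theta(U_t^{k+1})\le \tfrac1t\sum\theta(U^{k+1})$ and $(W_t^{k+1}-W)^\top\Psi(W)=\tfrac1t\sum(W^{k+1}-W)^\top\Psi(W)$) yields part~1) directly. This is the standard Jensen step and should be routine once the per-iteration inequality is in hand.

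For part~2), I would bound the constraint residual of the averaged iterate. Since $L^{k+1}=X^{k+1}+Y^{k+1}+Z^{k+1}-P_\Omega M = \tfrac1\beta(\Lambda^k-\Lambda^{k+1})$, averaging telescopes to $X_t^{k+1}+Y_t^{k+1}+Z_t^{k+1}-P_\Omega M = \tfrac{1}{\beta t}(\Lambda^1-\Lambda^{t+1})$, and the uniform bound on $\|\Lambda^k\|_F$ gives $\|X_t^{k+1}+Y_t^{k+1}+Z_t^{k+1}-P_\Omega M\|^2\le \tfrac{(2C)^2}{\beta^2 t^2}=:\tfrac{\bar c_1}{t^2}$, which is even $O(1/t^2)$. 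For part~3), I would combine parts~1) and 2): choosing $W=W^*$ a saddle point in part~1) and using $\Psi(W^*)^\top(W_t^{k+1}-W^*)\ge 0$ (saddle-point inequality) gives the upper bound $\theta(U_t^{k+1})-\theta(U^*)\le \bar c_2/t$; for the matching lower bound I would use that $\theta(U^*)-\theta(U_t^{k+1})\le \langle \Lambda^*, X_t^{k+1}+Y_t^{k+1}+Z_t^{k+1}-P_\Omega M\rangle \le \|\Lambda^*\|_F\cdot\tfrac{\sqrt{\bar c_1}}{t}$ by part~2), where the first inequality is the standard "optimal value lies between the Lagrangian evaluated at a feasible point and a KKT point" estimate. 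Taking the larger of the two constants finishes part~3).

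The main obstacle is assembling the per-iteration variational inequality with the correct constants: one must carefully track how the proximal coupling (old $X^k,Y^k$ in the $Z$-step, the Gauss--Seidel order, and the multiplier appearing with the old value in the $X$- and $Y$-subproblems) generates cross terms like $\langle Y^k-Y^{k+1}, X^{k+1}-X^k\rangle$ and $\langle \Lambda^k-\Lambda^{k+1}, Z^{k+1}-Z^k\rangle$, and then bound each of these by Young's inequality with parameters chosen so that the leftover quadratic form in $(\|Y^k-Y^{k+1}\|_F, \|X^k-X^{k+1}\|_F, \|\Lambda^k-\Lambda^{k+1}\|_F)$ is positive semidefinite exactly when $\beta<(\sqrt{33}-5)\kappa/2$. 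This bookkeeping is essentially the same computation that underlies Theorem~\ref{theorem34} but with the threshold recomputed to absorb the extra $\|Y^1-Y^0\|_F^2$ boundary term; everything after that inequality — telescoping, Jensen, and the residual estimate — is routine.
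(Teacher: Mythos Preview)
Your plan is essentially the paper's own proof: it too builds a per-iteration variational inequality from the optimality conditions of the three subproblems (its Lemma~\ref{Lem31}), bounds the resulting cross terms by Cauchy--Schwarz with constants chosen precisely so that the residual quadratic form is nonnegative iff $\beta<(\sqrt{33}-5)\kappa/2$ (its Lemma~\ref{ergo} and Lemma~\ref{lem41}), telescopes and applies Jensen for part~1), and uses the multiplier-update telescoping $\Lambda^k-\Lambda^{k+1}=\beta\mathcal{R}^{k+1}$ and the Lagrangian saddle-point estimate for parts~2) and~3) exactly as you outline.

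There is one concrete slip in your part~3). You write that the upper bound on $\theta(U_t^{k+1})-\theta(U^*)$ follows from part~1) at $W=W^*$ together with ``$\Psi(W^*)^\top(W_t^{k+1}-W^*)\ge 0$ (saddle-point inequality)''. That inequality is not available here: since $X^*+Y^*+Z^*=P_\Omega M$, a direct computation gives
\[
(W_t^{k+1}-W^*)^\top\Psi(W^*)=-\langle\Lambda^*,\,X_t^{k+1}+Y_t^{k+1}+Z_t^{k+1}-P_\Omega M\rangle,
\]
which has no fixed sign. The VI optimality condition $\theta(U)-\theta(U^*)+\langle W-W^*,\Psi(W^*)\rangle\ge 0$ only yields $\langle W_t^{k+1}-W^*,\Psi(W^*)\rangle\ge \theta(U^*)-\theta(U_t^{k+1})$, which is the \emph{lower}-bound direction (and is in fact how the paper obtains that side). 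For the upper bound the paper instead bounds the displayed inner product from below via part~2) and Cauchy--Schwarz, $|\langle\Lambda^*,\cdot\rangle|\le \tfrac{1}{2t}(\|\Lambda^*\|_F^2+\bar c_1)$, and moves it to the right-hand side of part~1). You already have all the ingredients; just replace the unjustified sign claim with this second use of part~2).
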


\begin{table*}[t]
	\caption{Detection precision, recall and F1 compared with other algorithms on \textsf{MovieLens 100K} and \textsf{MovieLens 1M} which are under unorganized malicious attacks based on a combination of traditional strategies.}
	\label{tab:MAMTCompare}
	\vskip 0.15in
	\centering
	%\resizebox{\columnwidth}{!}{
	\begin{tabular}{ccccccc}
		\toprule %\hline\hline
		& \multicolumn{3}{c}{\textsf{Movielens 100K}}   &\multicolumn{3}{c}{\textsf{Movielens 1M}}                 \\
		%\cmidrule{1-1}
		& Precision & Recall & F1 & Precision & Recall & F1  \\
		\midrule
		UMA&\textbf{0.934$\pm$0.003}&\textbf{0.883$\pm$0.019}&\textbf{0.908$\pm$0.011}&\textbf{0.739$\pm$0.009}&\textbf{0.785$\pm$0.023}&\textbf{0.761$\pm$0.016}\\
		RPCA&0.908$\pm$0.010&0.422$\pm$0.048&0.575$\pm$0.047&0.342$\pm$0.003&0.558$\pm$0.028&0.424$\pm$0.009\\
		N-P&0.774$\pm$0.015&0.641$\pm$0.046&0.701$\pm$0.032&0.711$\pm$0.007&0.478$\pm$0.018&0.572$\pm$0.014\\
		k-means&0.723$\pm$0.171&0.224$\pm$0.067&0.341$\pm$0.092&0.000$\pm$0.000&0.000$\pm$0.000&0.000$\pm$0.000\\
		PCAVarSel&0.774$\pm$0.009&0.587$\pm$0.024&0.668$\pm$0.019&0.278$\pm$0.007&0.622$\pm$0.022&0.384$\pm$0.011\\
		MF-based&0.911$\pm$0.009&0.814$\pm$0.008&0.860$\pm$0.009&0.407$\pm$0.005&0.365$\pm$0.004&0.385$\pm$0.005\\
		\bottomrule %\hline\hline
	\end{tabular}
	%}
	\vskip -0.1in
\end{table*}

\begin{table*}[t]
	\caption{Detection precision, recall and F1 compared with other algorithms on \textsf{MovieLens 100K} and \textsf{MovieLens 1M} which are under general unorganized malicious attacks.}
	\label{tab:UMACompare}
	\vskip 0.15in
	\centering
	%\resizebox{\columnwidth}{!}{
	\begin{tabular}{ccccccc}
		\toprule %\hline\hline
		& \multicolumn{3}{c}{\textsf{Movielens 100K}}   &\multicolumn{3}{c}{\textsf{Movielens 1M}}                 \\
		%\cmidrule{1-1}
		& Precision & Recall & F1 & Precision & Recall & F1  \\
		\midrule
		UMA&\textbf{0.929$\pm$0.013}&\textbf{0.865$\pm$0.032}&\textbf{0.896$\pm$0.022}&\textbf{0.857$\pm$0.005}&\textbf{0.733$\pm$0.003}&\textbf{0.790$\pm$0.002}\\
		RPCA&0.797$\pm$0.046&0.659$\pm$0.097&0.721$\pm$0.097&0.635$\pm$0.012&0.391$\pm$0.022&0.484$\pm$0.015\\
		N-P&0.244$\pm$0.124&0.145$\pm$0.089&0.172$\pm$0.084&0.273$\pm$0.020&0.099$\pm$0.031&0.144$\pm$0.035\\
		k-means&0.767$\pm$0.029&0.234$\pm$0.042&0.357$\pm$0.051&0.396$\pm$0.026&0.300$\pm$0.039&0.341$\pm$0.035\\
		PCAVarSel&0.481$\pm$0.027&0.168$\pm$0.017&0.248$\pm$0.023&0.120$\pm$0.006&0.225$\pm$0.012&0.157$\pm$0.008\\
		MF-based&0.556$\pm$0.023&0.496$\pm$0.021&0.524$\pm$0.022&0.294$\pm$0.012&0.264$\pm$0.010&0.278$\pm$0.011\\
		\bottomrule %\hline\hline
	\end{tabular}
	%}
	\vskip -0.1in
\end{table*}

\section{Experiments}\label{experiments}
In this section, we compare our proposed approach UMA with the state-of-the art approaches of attacks detection. 
%In this section, we conduct empirical studies on unorganized malicious attacks detection.

To evaluate the performance of attacks detection, we consider three common evaluating metrics as the performance measures: detection precision, recall and F1~\citep{gunes2014shilling}. Precision measures the fraction of detected attack profiles among all the detected profiles, while recall measures the fraction of detected attack profiles among all the attack profiles. F1 is the harmonic mean of precision and recall. These performance measures can be calculated as follows,
\begin{eqnarray*}
	\text{Precision}&=&\frac{\#\text{true postives}}{\#\text{true postives}+\#\text{false positives}}\\
	\text{Recall}&=&\frac{\#\text{true postives}}{\#\text{true postives}+\#\text{false negatives}}\\
	\text{F1}&=&\frac{2\times \text{precision}\times \text{recall}}{\text{precision}+ \text{recall}}
\end{eqnarray*}
in which $\#$ true positives is the number of attack profiles correctly detected as attacks, $\#$ false positives is the number of normal profiles that are misclassified, and $\#$ false negatives is the number of attack profiles that are misclassified.

\subsection{Datasets}
We first conduct our experiments on the common-used datasets \textsf{MovieLens 100K} and \textsf{MovieLens 1M} collected and released by GroupLens.\footnote[3]{http://grouplens.org/datasets/movielens/.} These datasets are collected from a non-commercial recommender system. It is more likely that the users in this dataset are non-spam users. We take the users already in the datasets as normal users.
%The rating scores are integers from 1 to 5, where 1 and 5 are the worst and best, respectively.  
The rating scores range from 1 to 5, and we preprocess the data by minus 3 to the $[-2,2]$ range.
Dataset \textsf{MovieLens 100K} contains 100000 ratings of 943 users over 1682 movies, and Dataset \textsf{MovieLens 1M} contains 1000209 ratings of 6040 users over 3706 movies. 
We describe how to add attack profiles in Section~\ref{sec:results}.

We also collect a real dataset \textsf{Douban 10K} with 35 attack profiles identified by the Douban website, where registered users record rating information over various films, books, clothes, etc.\footnote[4]{http://www.douban.com/.} We gather 12095 ratings of 213 users over 155 items. 
%where rating scores are integers from 1 to 5. 
The rating scores range from 1 to 5, and we preprocess the data by minus 3 to the $[-2,2]$ range.
Among all the 213 user profiles, 35 profiles are attack profiles.

\subsection{Comparison Methods and Implementation Details}
We compare with the state-of-the-art approaches for attack detection and robust PCA as follows. 
\begin{itemize}
	\item \textbf{N-P:} A statistical algorithm which identifies attack profiles based on the Neyman-Pearson statistics \citep{hurley2009statistical}.
	\item \textbf{k-means:} A cluster algorithm based on several classification attributes \citep{bhaumik2011clustering}. They apply $k$-means clustering based on these attributes and classify users in the smallest cluster as malicious users.
	\item \textbf{PCAVarSel:} A PCA-based variable selection algorithm. They detect malicious users according to the covariance among users \citep{mehta2009unsupervised}.
	\item \textbf{MF-based:} A low-rank matrix factorization based reputation estimation algorithm~\citep{ling2013unified}. They regard low-reputed users as malicious users.
	\item \textbf{RPCA:} A low-rank matrix recovery method considering sparse and small perturbation noise~\citep{candes2011robust}. Users with nonzero terms in the sparse component are classified as malicious users.
\end{itemize}
In all the experiments, we set $\tau={10}/{\sqrt{m}}$, $\alpha=10/m$ and $\delta=\sqrt{{mn}/{200}}$. Because the scale of ratings is from -2 to 2, a rating can be viewed as a malicious rating if it deviates from the ground-truth rating by more than 3; therefore, we set parameter $\beta={\tau}/{3}$ as the entries of $Y$ will be nullified if they are smaller than ${\tau}/{\beta}$ according to Eq.~\eqref{eqnRefreshY}. We set $\kappa=\tau$ which is satisfied with the conditions for convergence, 
i.e., $\beta\in(0, (\sqrt{33}-5)\kappa/2)$.
%i.e., Eq.~\eqref{beta} and Eq.~\eqref{condition:rate}. 
%We use the power method \citep{halko2011finding} to approximate Eq.~\eqref{eqnRefreshX} so that our method has the same scalability as the power method.

\subsection{Comparison Results}\label{sec:results}
In the first experiment, we add attack profiles into the datasets \textsf{MovieLens 100K} and \textsf{MovieLens 1M} by a combination of several traditional attack strategies. These traditional attack strategies include average attack strategy, random attack strategy and bandwagon attack strategy which have been discussed in Section~\ref{problem-form}. Specifically, each attacker randomly chooses one strategy to fake the user rating profiles and promotes one item randomly selected from items with average rating lower than 0.
%As mentioned before, the datasets \textsf{MovieLens 100K} and \textsf{MovieLens 1M} do not contain attackers; therefore, we first use a combination of several traditional attack strategies to construct the datasets with unorganized malicious attacks. These traditional attack strategies include average attack strategy, random attack strategy and bandwagon attack strategy~\citep{lam2004shilling}. Here, each attacker randomly chooses one strategy to fake the user rating profiles and promote one item whose average rating is lower than 2. 
In line with the setting of previous attack detection works, we set the filler ratio (percentage of rated items in total items) as $0.01$ and the filler items are drawn from the top 10\% most popular items. We set the spam ratio (number of attack profiles$/$number of all user profiles) as $0.2$. Table~\ref{tab:MAMTCompare} shows the experimental results on datasets \textsf{MovieLens 100K} and \textsf{MovieLens 1M} which are under unorganized malicious attacks based on a combination of traditional strategies.

\begin{table}[t]
	\caption{Detection precision, recall and F1 compared with other algorithms on dataset
		\textsf{Douban 10K}.}
	\label{tab:CompareDouban}
	\vskip 0.15in
	\centering
	%	\resizebox{\columnwidth}{!}{
	\begin{tabular}{*{4}{c}}
		\toprule
		Methods & Precision& Recall& F1  \\
		\midrule
		UMA &\textbf{0.800}&\textbf{0.914}&\textbf{0.853}\\
		RPCA &0.535&0.472&0.502\\
		N-P &0.250&0.200&0.222\\
		k-means &0.321&0.514&0.396\\
		PCAVarSel &0.240&0.343&0.282\\
		MF-based &0.767&0.657&0.708\\
		\bottomrule
	\end{tabular}
	%	}
	\vskip -0.15in
\end{table}

In the second experiment, we consider a more general case of unorganized malicious attacks. Besides the profile injection attacks as mentioned above, attackers can hire existing users to attack their targets. So we add these two kinds of attacks into the datasets. We set spam ratio as 0.2, where 25\% of the attack profiles are produced similar to the first experiment, and 75\% of the attack profiles are from existing users by randomly changing the rating of one item lower than 0 to 2. In this case, attacks are more difficult to be detected, because the attack profiles are more similar to normal user profiles.
%Besides inserting new user profiles into the rating matrix, attackers can hire existing users to attack. We reconstruct datasets from \textsf{MovieLens 100K} and \textsf{MovieLens 1M} as follows. We set the filler ratio as $0.01$, and the filler items are drawn from the top 10\% most popular items. Most attack profiles (75\%) are generated by i) randomly selecting an item with average rating lower than 2, ii) randomly selecting a user profile with a rating lower than 2 to the chosen item, and iii) the selected user profile can be viewed as an attack profile by modifying the rating to 5. The other attack profiles (25\%) are produced similar to that of the above paragraph. 
Table~\ref{tab:UMACompare} demonstrates the comparison results on datasets \textsf{MovieLens 100K} and \textsf{MovieLens 1M} which are under general unorganized malicious attacks.

Table~\ref{tab:CompareDouban} shows the experiments on dataset \textsf{Douban 10K}. The experimental results in Table~\ref{tab:MAMTCompare}, \ref{tab:UMACompare} and \ref{tab:CompareDouban} show that our proposed UMA approach achieves the best performance on three measures: Precision, Recall and F1, and our approach takes superiority in all datasets. It is easy to observe that traditional attack detection approaches fail to work for unorganized malicious attacks detection, because those methods depend on the properties of shilling attacks, e.g., the k-means method and N-P method work if the attack profiles are similar in the view of the supervised features or their latent categories, and the PCAVarSel method achieves good performance only if attack profiles have more common unrated items than normal profiles. In summary, these methods detect attacks by identifying some common characteristics of the attack profiles. However, those good properties do not exist in the unorganized malicious attacks.

The RPCA and MF-based methods try to find the ground-truth rating matrix from the observed rating matrix, whereas they hardly separate the sparse attack matrix from the noisy matrix, and tend to suffer from low precision, especially on large-scale and heavily sparse datasets.

Since different systems may contain different spam ratios (number of attack profiles$/$number of all user profiles), we compare our UMA algorithm with other methods by varying the spam ratio from 2\% to 20\% in Figure~\ref{SpamRatio}. Our UMA approach achieves robust and better performance in different spam ratios, whereas the comparison methods (except the RPCA method) achieve worse performance for small spam ratio, e.g., the N-P approach detects almost nothing. Although the RPCA method is as stable as our method UMA in different spam ratios, there is a performance gap between RPCA and UMA which becomes bigger when the dataset gets larger and sparser from \textsf{MovieLens 100K} to \textsf{MovieLens 1M}.

\begin{figure}[t]
		\vskip 0.15in
	\begin{center}
		\centerline{\includegraphics[width=1.1\columnwidth]{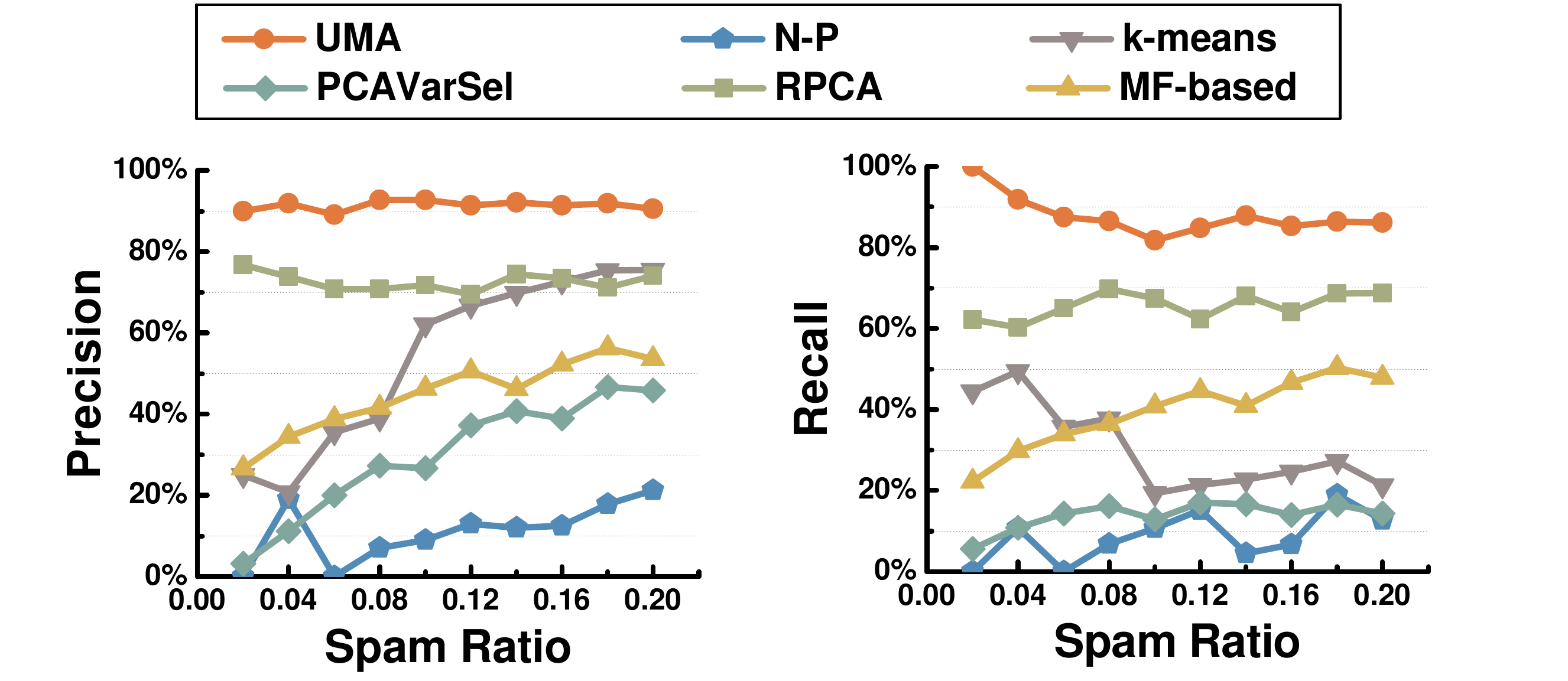}}
		%\vspace{-0.15in}
		%\vskip -0.2in
		\caption{Detection precision and recall on \textsf{MovieLens 100K} under unorganized malicious attacks. The spam ratio (number of attack profiles$/$number of all user profiles) varies from 0.02 to 0.2.}
		\label{SpamRatio}
	\end{center}
	\vskip -0.35in
\end{figure}

\section{Conclusion}
The attack detection plays an important role to improve the quality of recommendation, whereas most previous methods focus on shilling attacks. 
The key to detect shilling attacks is to find the common characteristics of the attack profiles which are produced by the same attack strategy. However, unorganized malicious attacks are produced by multiple attack strategies to attack different targets. In this case, previous works perform inefficiently.
In this paper, we first formulate the unorganized malicious attacks detection as a variant of matrix completion problem. Then we propose the UMA algorithm, and give the proof of its recovery guarantee and global convergence. Experimental results show that UMA achieves significantly better performance than the state-of-the-art approaches of attack detection.

%The attack detection plays an important role to improve the quality of recommendation in systems, whereas most previous methods focus on shilling attacks. In this paper, we first formulate the unorganized malicious attacks detection as a variant of matrix completion problem. Then we propose the Unorganized Malicious Attacks detection (UMA) algorithm, which can be viewed as a proximal alternating splitting augmented Lagrangian method. We give the proof of global convergence theoretically, and experimental results show that our proposed algorithm achieves significantly better performance than the state-of-the-art approaches for attack detection.

% In the unusual situation where you want a paper to appear in the
% references without citing it in the main text, use \nocite
\nocite{langley00}

\bibliography{nips_2016_2}
\bibliographystyle{icml2018}

%%%%%%%%%%%%%%%%%%%%%%%%%%%%%%%%%%%%%%%%%%%%%%%%%%%%%%%%%%%%%%%%%%%%%%%%%%%%%%%
%%%%%%%%%%%%%%%%%%%%%%%%%%%%%%%%%%%%%%%%%%%%%%%%%%%%%%%%%%%%%%%%%%%%%%%%%%%%%%%

\newpage

{\onecolumn
\appendix
\begin{center}
{\Large\textbf{Supplementary Material (Appendix)}}
\end{center}

\section{Optimization Model of UMA}

We consider the following optimization model:
\begin{eqnarray}
	\label{CouplePCA}
	\begin{array}{cl}
		\min&\|X\|_*+\tau\|Y\|_1-\alpha\langle M,Y\rangle+\frac{\kappa}{2}\|Y\|_F^2,\\
		s.t.&X+Y+Z={\bar M},\\
		&Z\in{\bf B}, \\
		& {\bf B}:=\{Z|\|P_{\Omega}(Z)\|_F\le \delta\},
	\end{array}
	\begin{array}{cc}
	\end{array}
\end{eqnarray}
where    $\kappa>0$ is a regularization parameter and ${\bar M}:=P_{\Omega}(M)$.
The model (\ref{CouplePCA}) is a three-block convex programming. 
We define the Lagrangian function  and 
augmented Lagrangian function of (\ref{CouplePCA}) as follows:
\begin{equation}\label{AL-Fun}
{\cal L}(X,Y,Z,\Lambda,\beta):=\|X\|_{*}+\tau \|Y\|_1-\alpha\langle M,Y\rangle+\frac{\kappa}{2}\|Y\|_F^2
-\langle \Lambda,X+Y+Z-{\bar M} \rangle,
\end{equation}
\begin{equation}\label{ALF}
{\cal L}_{\mathcal{A}}(X,Y,Z,\Lambda,\beta):=\|X\|_{*}+\tau \|Y\|_1-\alpha\langle M,Y\rangle+\frac{\kappa}{2}\|Y\|_F^2
-\langle \Lambda,X+Y+Z-{\bar M} \rangle +{ \beta \over 2} \|X+Y+Z-{\bar M}\|_F^2,
\end{equation}
where $\beta>0$ is the penalty parameter.

\section{Recovery Guarantee}
In this section, we present theoretical guarantee that UMA can recover the low-rank component $X$ and the sparse component $Y$. 
For simplicity, our theoretical analysis focuses on square matrix, and it is natural to generalize our results to the general rectangular matrices. 

Let the singular value decomposition of $X_0\in{\cal R}^{n\times n}$ be given by
\[
X_0=S\Sigma D^\top=\sum\nolimits_{i=1}^r \sigma_is_id_i^\top
\]
where $r$ is the rank of matrix $X_0$, $\sigma_1,\ldots,\sigma_r$ are the positive singular values, and $S=[s_1,\ldots,s_r]$ and $D=[d_1,\ldots,d_r]$ are the left- and right-singular matrices, respectively. For $\mu>0$, we assume
\begin{eqnarray}\label{eq:incoherence}
	\max_i\|S^\top e_i\|^2&\leq& {\mu r}/{n},\nonumber\\
	\max_i\|D^\top e_i\|^2&\leq& {\mu r}/{n},\\
	\|SD^\top\|_\infty^2&\leq& \mu r/n^2.\nonumber
\end{eqnarray}

Firstly, we consider the following optimization problem where all the entries of $M$ can be observed.
\begin{equation}\label{equ:originForm}
	\begin{split}
		\min_{X,Y,Z}\,&\|X\|_*+\tau\|Y\|_1-\alpha \langle M,Y\rangle+\frac{\kappa}{2}\|Y\|^2_F\\
		\text{s.t. }&X+Y+Z=M, \\
		&\|Z\|_F\leq \delta.
	\end{split}
\end{equation}

\begin{theorem}\label{thm1}
	Suppose that the support set of $Y_0$ be uniformly distributed for all sets of cardinality $k$,  and $X_0$ satisfies the incoherence condition given by Eqn.~\eqref{eq:incoherence}. Let $X$ and $Y$ be the solution of optimization problem given by Eqn.~\eqref{equ:originForm} with parameter $\tau=O(1/\sqrt{n})$ , $\kappa=O(1/\sqrt{n})$ and $\alpha=O(1/n)$. For some constant $c>0$ and sufficiently large $n$, the following holds with probability at least $1-cn^{-10}$ over the choice on the support of $Y_0$
	\[
	\|X_0-X\|\leq \delta \text{ and }\|Y_0-Y\|_F\leq \delta
	\]
	if rank($X_0$)$\leq\rho_r n/\mu/log^2 n$ and $k\leq\rho_sn^2$, where $\rho_r$ and $\rho_s$ are positive constant.
\end{theorem}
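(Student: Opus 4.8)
The plan is to reduce Theorem~\ref{thm1} to the exact-recovery theory for Principal Component Pursuit of \citep{candes2011robust}, followed by a stability (perturbation) argument. The key structural remark is that the two extra terms $-\alpha\langle M,Y\rangle+\frac{\kappa}{2}\|Y\|_F^2$ assemble into a convex quadratic $g(Y)$ with gradient $\nabla g(Y)=-\alpha M+\kappa Y$, whose entries are of order $O(1/n)+O(1/\sqrt n)$; since $\tau=O(1/\sqrt n)$ as well, these terms only perturb the first-order optimality system by a controlled amount and leave the geometry of the $\|X\|_*+\tau\|Y\|_1$ part intact. Throughout, write $f(X,Y)$ for the objective in~\eqref{equ:originForm}, $T$ for the tangent space to the rank-$r$ matrices at $X_0$, $\Omega_0$ for the support of $Y_0$, and $P_T,P_{\Omega_0}$ for the associated projections.

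First I would analyse the idealized noiseless instance $M=X_0+Y_0$ and show that $(X_0,Y_0)$ is the \emph{unique} minimizer of $\min\;\|X\|_*+\tau\|Y\|_1+g(Y)$ subject to $X+Y=M$. By convex duality this reduces to producing a dual certificate $\Lambda$ with $\Lambda\in\partial\|X_0\|_*$ and $\Lambda-\nabla g(Y_0)\in\tau\,\partial\|Y_0\|_1$, i.e. $P_T\Lambda=SD^\top$, $\|P_{T^\perp}\Lambda\|<1$, $P_{\Omega_0}\Lambda=\tau\,\mathrm{sgn}(Y_0)+P_{\Omega_0}\nabla g(Y_0)$, $\|P_{\Omega_0^c}(\Lambda-\nabla g(Y_0))\|_\infty<\tau$, together with the standard margin condition ensuring $T\cap\Omega_0=\{0\}$. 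I would build $\Lambda$ by the golfing-scheme / least-squares construction of \citep{candes2011robust}, but aimed at the shifted support value $\tau\,\mathrm{sgn}(Y_0)+P_{\Omega_0}\nabla g(Y_0)$ rather than $\tau\,\mathrm{sgn}(Y_0)$. The incoherence hypothesis~\eqref{eq:incoherence}, the uniformly random support of cardinality $k\le\rho_s n^2$, and the rank bound $\mathrm{rank}(X_0)\le\rho_r n/(\mu\log^2 n)$ make all the required inequalities hold with probability at least $1-cn^{-10}$ exactly as in the robust-PCA analysis, the only new ingredient being control of the correction $P_{\Omega_0}\nabla g(Y_0)$ — a random-support matrix with $O(1/\sqrt n)$-bounded entries — which merely shifts the constants $\rho_r,\rho_s$.

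With exact recovery certified in the noiseless case, I would pass to the noisy program. Let $(X,Y,Z)$ be its solution and set $\Delta_X=X-X_0$, $\Delta_Y=Y-Y_0$. Since $(X_0,Y_0,Z_0)$ is feasible ($\|Z_0\|_F\le\delta$), we get $f(X,Y)\le f(X_0,Y_0)$, while the constraint gives $\Delta_X+\Delta_Y=Z_0-Z$, hence $\|\Delta_X+\Delta_Y\|_F\le2\delta$. Feeding the certificate $\Lambda$ into the convexity inequalities for $\|\cdot\|_*$ and $\tau\|\cdot\|_1$ and using $g(Y_0+\Delta_Y)=g(Y_0)+\langle\nabla g(Y_0),\Delta_Y\rangle+\frac{\kappa}{2}\|\Delta_Y\|_F^2$ yields
\[
0\;\ge\;f(X,Y)-f(X_0,Y_0)\;\ge\;\langle\Lambda,\Delta_X+\Delta_Y\rangle+\tfrac{\kappa}{2}\|\Delta_Y\|_F^2+(\text{strict-slack terms}),
\]
where the slack terms are nonnegative multiples of $\|P_{T^\perp}\Delta_X\|_*$ and $\|P_{\Omega_0^c}\Delta_Y\|_1$. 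Bounding $|\langle\Lambda,\Delta_X+\Delta_Y\rangle|\le\|\Lambda\|\,\|Z_0-Z\|_*$ with $\|\Lambda\|\le2$, and combining with $\|\Delta_X+\Delta_Y\|_F\le2\delta$ and the slack terms — which force $\Delta_X$ and $\Delta_Y$ to be essentially orthogonal to the directions that violate identifiability, so that the $T$/$\Omega_0$-components that survive are themselves $O(\delta)$ — gives $\|X_0-X\|$ and $\|Y_0-Y\|_F$ of order $\delta$, which is the assertion of the theorem (universal factors being absorbed into $\rho_r,\rho_s$).

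The main obstacle I expect is the dual-certificate construction: one must redo the golfing-scheme estimates of \citep{candes2011robust} with the shifted target on $\Omega_0$ and verify that the additional term $P_{\Omega_0}\nabla g(Y_0)=-\alpha P_{\Omega_0}M+\kappa P_{\Omega_0}Y_0$ disturbs neither $\|P_{T^\perp}\Lambda\|<1$ nor $\|P_{\Omega_0^c}(\Lambda-\nabla g(Y_0))\|_\infty<\tau$ — this is exactly where the scalings $\tau=O(1/\sqrt n)$, $\kappa=O(1/\sqrt n)$, $\alpha=O(1/n)$ are needed. A secondary difficulty is tuning the perturbation inequality so that it closes at the right order and delivers the operator-norm control of $\Delta_X$ claimed in the statement, which again relies on the strict slack built into the certificate.
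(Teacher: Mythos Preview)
Your proposal is correct and follows essentially the same route as the paper: reduce to the dual-certificate theory of \citep{candes2011robust}, check that the extra terms $-\alpha\langle M,Y\rangle+\frac{\kappa}{2}\|Y\|_F^2$ do not spoil the certificate under the scalings $\tau,\kappa=O(1/\sqrt n)$, $\alpha=O(1/n)$, and then pass to the noisy constraint $\|Z\|_F\le\delta$.

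There is one technical difference worth flagging. You build a \emph{shifted} certificate, aiming at $P_{\Omega_0}\Lambda=\tau\,\mathrm{sgn}(Y_0)+P_{\Omega_0}\nabla g(Y_0)$, and then redo the golfing estimates with this perturbed target. The paper instead reuses the \emph{unmodified} robust-PCA certificate --- a pair $(W,F)$ (plus a small $P_{\Omega_0}K$) with $SD^\top+W=\tau(\mathrm{sgn}(Y_0)+F+P_{\Omega_0}K)$ and the standard margins $\|W\|\le\tfrac12$, $\|F\|_\infty\le\tfrac12$, $\|P_{\Omega_0}K\|_F\le\tfrac14$ --- and simply carries the residual $\langle\alpha M-\kappa Y_0,H\rangle$ through the subgradient inequality, arguing at the end that this cross term is dominated by the slack $\tfrac{1-\tau}{2}\|P_{T^\perp}H\|_*+\tfrac{\tau}{4}\|P_{\Omega_0^c}H\|_1$ thanks to the parameter scalings and $T\cap\Omega_0=\{0\}$. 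The paper's route is cheaper in that the existence of $(W,F,K)$ is quoted verbatim from \citep{candes2011robust} with no reworking of the golfing scheme; the price is a slightly more delicate final absorption step. Your route requires reopening the certificate construction but makes the first-order system close exactly, so the downstream perturbation argument is cleaner. Your noisy-case analysis is also considerably more explicit than the paper's, which dispatches that part in a single line from $\|Z\|_F\le\delta$.
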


\begin{proof}
	
	Let $\Omega$ be the space of matrices with the same support as $Y_0$, and let $T$ denote the linear space of matrices
	
	\[
	T:=\{SX^*+YD^*, X,Y\in\mathbb{R}^{n\times r}\}.
	\]
	
	We will first prove that, for $\|P_\Omega P_T\|\leq1/2$, $(X_0,Y_0)$ is the unique solution if there is a pair $(W,F)$ satisfying
	
	\begin{equation}\label{eq:assumption}
		SD^*+W=\tau(\text{sgn}(Y_0)+F+P_\Omega K)
	\end{equation}
	
	where $P_TW=0$ and $\|W\|\leq 1/2$, $P_\Omega F=0$ and $\|F\|_\infty\leq 1/2$ and $\|P_\Omega K\|_F\leq 1/4$. Notice that $SD^*+W_0$ is an arbitrary subgradient of $\|X\|_*$ at $(X_0,Y_0)$, and $\tau(\text{sgn}(Y_0)+F_0)-\alpha M +\kappa Y_0$ is an arbitrary subgradient of $\tau\|Y\|_1-\alpha \langle M,Y\rangle +\kappa\|Y\|_F^2/2$ at $(X_0,Y_0)$. For any matrix $H$, we have, by the definition of subgradient,
	
	\begin{multline}\label{eq:pf:tmp1}
		\|X_0+H\|_*+\tau\|Y_0-H\|_1-\alpha \langle M,Y_0-H\rangle +\tfrac{\kappa}{2}\|Y_0-H\|_F^2\\
		\geq\|X_0\|_*+\tau\|Y_0\|_1-\alpha \langle M,Y_0\rangle+\tfrac{\kappa}{2}\|Y_0\|_F^2+\langle \alpha M-\kappa Y_0,H\rangle\\
		+\langle SD^*+W_0,H\rangle- \tau\langle \text{sgn}(Y_0)+ F_0, H\rangle.
	\end{multline}
	
	By setting $W_0$ and $F_0$ satisfying $\langle W_0,H\rangle=\|P_{T^\bot}H\|_*$ and $\langle F_0,H\rangle=-\|P_{\Omega^\bot}H\|_1$, we have
	
	\begin{eqnarray}
		\lefteqn{\langle SD^*+W_0,H\rangle- \tau\langle \text{sgn}(Y_0)+ F_0, H\rangle}\nonumber\\
		&=&\|P_{T^\bot}H\|_*+\tau \|P_{\Omega^\bot}H\|_1+\langle SD^*-\tau \text{sgn}(Y_0),H\rangle\nonumber\\
		&=&\|P_{T^\bot}H\|_*+\tau \|P_{\Omega^\bot}H\|_1+\langle \tau(F+P_\Omega K)-W,H\rangle\nonumber\\
		&\geq&\frac{1}{2}(\|P_{T^\bot}H\|_*+\tau \|P_{\Omega^\bot}H\|_1)+\tau\langle P_\Omega D,H\rangle\label{eq:pf:tmp2}
	\end{eqnarray}
	
	where the second equality holds from Eqn.~\eqref{eq:assumption}, and the last inequality holds from
	
	\begin{equation*}
		\langle \tau F-W,H\rangle\geq -|\langle W,H\rangle|-|\langle \tau F,H\rangle|
		\geq -(\|P_{T^\bot}H\|_*+\tau \|P_{\Omega^\bot}H\|_1)/2
	\end{equation*}
	
	for $\|W\|\leq 1/2$ and $\|F\|_\infty\leq 1/2$. We further have
	
	\begin{equation}\label{eq:pf:tmp3}
		\langle \tau P_\Omega K,H\rangle\geq  -\frac{\tau}{4}\|P_{\Omega^{\bot}}H\|_F-\frac{\tau}{2}\|P_{T^{\bot}}H\|_F
	\end{equation}
	
	from $\|P_\Omega K\|_F\leq 1/4$ and
	
	\begin{multline*}
		\|P_\Omega H\|_F \leq \|P_\Omega P_T H\|_F+\|P_\Omega P_{T^\bot} H\|_F\leq\|P_\Omega P_{T^\bot} H\|_F\\
		+\|H\|_F/2 \leq (\|P_\Omega H\|_F+\|P_{\Omega^\bot} H\|_F)/2+ \|P_\Omega P_{T^\bot} H\|_F.
	\end{multline*}
	
	Combing with Eqns.~\eqref{eq:pf:tmp1} to \eqref{eq:pf:tmp3}, we have
	
	\begin{eqnarray*}
		\lefteqn{\|X_0+H\|_*+\tau\|Y_0-H\|_1-\alpha \langle M,Y_0-H\rangle+\tfrac{\kappa}{2}\|Y_0-H\|_F^2}\\
		&\geq&\|X_0\|_*+\tau\|Y_0\|_1-\alpha \langle M,Y_0\rangle+\tfrac{\kappa}{2}\|Y_0-H\|_F^2\\
		&&+\langle \alpha M-\kappa Y_0,H\rangle+\tfrac{1-\tau}{2}\|P_{T^{\bot}}H\|_*+\tfrac{\tau}{4}\|P_{\Omega^{\bot}}H\|_1
	\end{eqnarray*}
	
	From the conditions that $\Omega \cap T=\{0\}$, $\tau=O(1/\sqrt{n})$, $\kappa=O(1/\sqrt{n})$ and $\alpha=O(1/n)$, we have
	
	\[
	\alpha\langle Y_0-X_0,H\rangle +\frac{1-\tau}{2}\|P_{T^{\bot}}H\|_*+\frac{\tau}{4}\|P_{\Omega^{\bot}}H\|_1>0
	\]
	
	for sufficient large $n$. Therefore, we can recover $X_0$ and $Y_0$ if there is a pair $(W,F)$ satisfying Eqn.~\eqref{eq:assumption}, and the pair $(W,F)$ can be easily constructed according to \cite{candes2011robust}. We complete the proof from the condition $\|Z\|_F\leq \delta$.
	
\end{proof}

Similarly to the proof of Theorem~\ref{thm1}, we present the following theorem for the minimization problem of Eqn.~\eqref{CouplePCA}.
\begin{theorem}\label{thm2}
	Suppose that $X_0$ satisfies the incoherence condition given by Eqn.~\eqref{eq:incoherence}, and $\Omega$ is uniformly distributed among all sets of size $m\geq n^2/10$. We assume that each entry is corrupted independently with probability $q$. Let $X$ and $Y$ be the solution of optimization problem given by Eqn.~\eqref{CouplePCA} with parameter $\tau=O(1/\sqrt{n})$ , $\kappa=O(1/\sqrt{n})$ and $\alpha=O(1/n)$. For some constant $c>0$ and sufficiently large $n$, the following holds with probability at least $1-cn^{-10}$
	\[
	\|X_0-X\|_F\leq \delta \text{ and }\|Y_0-Y\|_F\leq \delta
	\]
	if rank($X_0$)$\leq\rho_r n/\mu/log^2 n$ and $q\leq q_s$, where $\rho_r$ and $q_s$ are positive constants.
\end{theorem}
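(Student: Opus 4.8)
The plan is to reproduce the dual-certificate scheme of Theorem~\ref{thm1} in the presence of both missing entries and corruptions. Denote by $\Gamma\subseteq\Omega$ the (random) support of $Y_0$: because $\Omega$ is drawn uniformly of size $m\ge n^2/10$ and each of its entries is then corrupted independently with probability $q$, $\Gamma$ is stochastically dominated by a Bernoulli set with parameter $q'\le q$. Let $T$ be the tangent space of $X_0$, as in Theorem~\ref{thm1}. First I would establish the deterministic criterion generalizing \eqref{eq:assumption}: the pair $(X_0,Y_0)$ is the unique solution of the $\delta=0$ version of \eqref{CouplePCA} provided $\|P_\Gamma P_T\|\le 1/2$ and there exist $W$, $F$, $K$ with $P_TW=0$, $\|W\|\le 1/2$, $P_\Gamma F=0$, $\|F\|_\infty\le 1/2$, $\|P_\Gamma K\|_F\le 1/4$, together with $P_{\Omega^\perp}\big(SD^\top+W\big)=0$ and $P_\Omega\big(SD^\top+W\big)=\tau P_\Omega\big(\text{sgn}(Y_0)+F+P_\Gamma K\big)$. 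The verification is the subgradient inequality chain \eqref{eq:pf:tmp1}--\eqref{eq:pf:tmp3} run verbatim, with two changes: every occurrence of the support set $\Omega$ in Theorem~\ref{thm1} is played by $\Gamma$, and feasible perturbations $(H_X,H_Y)$ now satisfy $P_\Omega(H_X+H_Y)=0$, which is exactly what the new condition $P_{\Omega^\perp}(SD^\top+W)=0$ is designed to handle. The extra linear terms coming from $-\alpha\langle M,Y\rangle$ and $\tfrac\kappa2\|Y\|_F^2$ are absorbed as before using $\tau=O(1/\sqrt n)$, $\kappa=O(1/\sqrt n)$, $\alpha=O(1/n)$ and $n$ large.

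Next I would discharge the two probabilistic requirements. Since $\Gamma$ is (dominated by) a Bernoulli$(q')$ set with $q'\le q$, the standard operator-norm estimate for $P_\Gamma P_T$ on an incoherent tangent space---the one used in \cite{candes2011robust}---gives $\|P_\Gamma P_T\|\le 1/2$ with probability at least $1-cn^{-10}$ whenever $q\le q_s$ and $\text{rank}(X_0)\le\rho_r n/(\mu\log^2 n)$. For the certificate, I would use a golfing construction on the ``clean observed'' entries $\Omega\setminus\Gamma$: split $\Omega\setminus\Gamma$ into $O(\log n)$ independent Bernoulli batches $\Omega_1,\Omega_2,\dots$, iterate $W$ so that the residual of $SD^\top$ on $T$ contracts geometrically while $\|P_{T^\perp}W\|$ and the relevant $\ell_\infty$ quantities stay below $1/2$, and set $F$ and $K$ from the residual on $\Gamma$ and on $\Omega^\perp$ respectively. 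Each step uses only the three concentration bounds already standard in matrix completion and robust PCA---an operator bound on $P_T-q'^{-1}P_TP_{\Omega_j}P_T$, an $\ell_\infty$ bound, and an $\ell_\infty$-to-$\ell_2$ bound---so no new probabilistic tool is needed.

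Finally, once exact recovery for $\delta=0$ is in hand, the stated bounds $\|X_0-X\|_F\le\delta$ and $\|Y_0-Y\|_F\le\delta$ follow by the perturbation argument at the end of Theorem~\ref{thm1}: the triple $(X_0,Y_0,Z_0)$ with $Z_0=M-X_0-Y_0$, $\|P_\Omega Z_0\|_F\le\delta$, is feasible for \eqref{CouplePCA}, so any minimizer lies within $O(\delta)$ of $(X_0,Y_0)$ in Frobenius norm; here the regularizer $\tfrac\kappa2\|Y\|_F^2$ supplies strong convexity in $Y$ and makes the $Y$-bound clean. I expect the main obstacle to be precisely this certificate construction. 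Unlike Theorem~\ref{thm1}, where full observation lets one quote \cite{candes2011robust} almost directly, here the single matrix $W$ must simultaneously cancel $SD^\top$ off $\Omega$, certify the nuclear norm on $T$, and leave enough room for the $\ell_1$ part on $\Gamma$---a joint matrix-completion-plus-sparse certificate whose golfing iterations have to respect all three constraints while keeping the total failure probability at $cn^{-10}$ under $m\ge n^2/10$ and $q\le q_s$. Everything else parallels the proof of Theorem~\ref{thm1} essentially line by line.
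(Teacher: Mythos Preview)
Your proposal is correct and follows the same route the paper intends: the paper does not give a separate proof of Theorem~\ref{thm2} at all, but simply states that it is obtained ``similarly to the proof of Theorem~\ref{thm1}'' and defers the certificate construction to \cite{candes2011robust}. Your sketch---replace the corruption support by $\Gamma\subseteq\Omega$, augment the deterministic criterion with the constraint $P_{\Omega^\perp}(SD^\top+W)=0$, build the certificate by golfing on $\Omega\setminus\Gamma$, and finish with the $\delta$-perturbation argument---is exactly the Cand\`es--Li--Ma--Wright extension to missing entries, so you are in fact supplying considerably more detail than the paper itself provides.
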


\section{Optimality condition}

Before starting to show the convergence, we  derive its optimality condition of (\ref{CouplePCA}).
Let ${\cal W}:={\bf B} \times {\cal R}^{m\times n} \times {\cal R}^{m\times n}
\times {\cal R}^{m\times n}$. It follows from Corollaries 28.2.2 and 28.3.1 of \cite{rockafellar2015convex} that the solution set of (\ref{CouplePCA}) is non-empty.
Then, let $W^*=((Z^*)^\top,(X^*)^\top,(Y^*)^\top,(\Lambda^*)^\top)^\top$ be a saddle point of (\ref{CouplePCA}).
It is easy to see that (\ref{CouplePCA}) is
equivalent to finding $W^* \in {\cal
	W}$ such that
\begin{equation}
\label{MA-VIxyl}
\begin{array}{ll}
\left\{
\begin{array}{l}
\langle Z- Z^*,  - \Lambda^*\rangle \ge 0, \\
\|X\|_*-\|X^*\|_* + \langle X- X^*, -\Lambda^*\rangle \ge 0, \\
\tau\|Y\|_1-\tau\|Y^*\|_1 + \langle Y- Y^*, -\alpha M +\kappa Y^*-\Lambda^*\rangle \ge0, \\
X^*+Y^*+Z^*-{\bar M}=0,
\end{array} \right.
& \;\; \forall\; W=(Z^\top,X^\top,Y^\top,\Lambda^\top)^\top\in  {\cal W},
\end{array}
\end{equation}
or, in a more compact form:
\begin{subequations}\label{MSVII}
	\[\label{MSVI-F}
	\hbox{VI}({\cal W},\Psi,\theta) \qquad \theta(U) - \theta(U^*)+  \langle W-W^*, \Psi(W^*) \rangle \ge \frac{\kappa}{2}\|Y-Y^*\|_F^2, \quad \forall \
	W\in  {\cal W}, \qquad \qquad
	\]
	where
	\[ \label{MD-F-1}
	U=\left(  \begin{array}{c} Z \\ X
	\\Y
	\end{array}
	\right),
	\quad \theta(U) =\|X\|_* + \tau\|Y\|_1-\alpha \langle M, Y \rangle+\frac{\kappa}{2}\|Y\|_F^2,
	\]
	\[\label{MD-F-2}
	\mbox{and} \;\;  W=\left(  \begin{array}{c} Z \\ X
	\\Y\\
	\Lambda \end{array}
	\right), \quad
	V=\left(  \begin{array}{c}  X
	\\Y\\
	\Lambda \end{array}
	\right),  \quad \quad
	\Psi(W)=\left(  \begin{array}{c}-\Lambda \\
	-\Lambda\\
	-\Lambda\\
	X+Y+Z-{\bar M} \end{array}
	\right).
	\]
\end{subequations}
Note that $U$ collects all the primal variables in
(\ref{MA-VIxyl}) and it is a sub-vector of $W$. Moreover, we use $\cal W^*$   to denote the solution set of
$\hbox{VI}({\cal W},\Psi,\theta)$ and define $V^*=((X^*)^\top,(Y^*)^\top,(\Lambda^*)^\top)^\top$ and ${\cal V}^*:=\{V^*| W^*\in {\cal W}^*\}$.

\section{Convergence Analysis}

\setcounter{equation}{0}

In this section, we solve (\ref{CouplePCA}) with global convergence.
More specifically, let $(X^{k},Y^{k},\Lambda^k)$ be given, UMA
generates the new iterate $W^{k+1}$
via the following scheme:
\begin{eqnarray}\label{GADM}
	\left\{
	\begin{array}{ll}
		Z^{k+1} = \arg\min_{Z \in {\bf B}}
		\mathcal{L}_{\mathcal{A}}(X^k,Y^{k},Z,\Lambda^k,\beta),\\
		X^{k+1} = \arg\min_{X\in {\cal R}^{m \times n}}  \mathcal{L}_{\mathcal{A}}(X,Y^k,Z^{k+1},\Lambda^{k},\beta),\\
		Y^{k+1} = \arg\min_{Y \in {\cal R}^{m \times n}}  \mathcal{L}_{\mathcal{A}}(X^{k+1},Y,Z^{k+1},\Lambda^{k},\beta),\\
		\Lambda^{k+1} = \Lambda^k - \beta(X^{k+1} + Y^{k+1} +Z^{k+1}
		-{\bar M}),
	\end{array}
	\right.
\end{eqnarray}
which can be easily written into the following more specific form:
\begin{subnumcases}{\label{ADM2}}
	Z^{k+1}= {\arg\min}_{Z \in {\bf B}}\frac{\beta}{2}\|Z+X^k+Y^{k}-{1 \over \beta}\Lambda^k-{\bar M}\|^2_F, \label{Zsub}\\
	X^{k+1}=  {\arg\min}_{X \in {\cal R}^{m \times n}}\|X\|_*+\frac{\beta}{2}\|X+Y^k+Z^{k+1}-{1 \over \beta}\Lambda^k-{\bar M}\|_F^2, \label{Xsub}\\
	Y^{k+1}=  {\arg\min}_{Y \in  {\cal R}^{m \times n}}\tau\|Y\|_{1}-\alpha \langle M,Y \rangle+\frac{\kappa}{2}\|Y\|_F^2+\frac{\beta}{2}\|Y+X^{k+1}+Z^{k+1}-\frac{1}{\beta}\Lambda^k-{\bar M}\|_F^2,\label{Ysub}\\
	\Lambda^{k+1}= \Lambda^{k}-\beta(X^{k+1}+Y^{k+1}+Z^{k+1}-{\bar M}\overline{}). \label{Lsub}
\end{subnumcases}

In the following, we concentrate on the convergence of UMA. In contrast to the existing results in \cite{tao2016convergence}, we aim to present a much more sharp result.
We first prove some  properties of the
sequence generated by UMA, which play a crucial role
in the coming convergence analysis.
Before that, we introduce some notations:
\begin{align}
	&\label{Dell} \Delta_{\Lambda}:=\frac{1}{2\beta}(\|\Lambda^{k+1}-\Lambda^*\|_F^2-\|\Lambda^k-\Lambda^*\|_F^2+\|\Lambda^{k+1}-\Lambda^k\|_F^2),\\
	&\label{Delx} \Delta_{X}:=\frac{1}{2\beta}(\|X^{k+1}-X^*\|_F^2-\|X^k-X^*\|_F^2+\|X^{k+1}-X^k\|_F^2),\\
	&\label{Dely} \Delta_{Y}:=\frac{1}{2\beta}(\|Y^{k+1}-Y^*\|_F^2-\|Y^k-Y^*\|_F^2+\|Y^{k+1}-Y^k\|_F^2),\\
	&\label{Rg}   {\cal R}=X+Y+Z-{\bar M},\\
	&\label{Rk} {\cal R}^{k+1}=X^{k+1}+Y^{k+1}+Z^{k+1}-{\bar M}.
\end{align}

\begin{lemma}\label{Lem30}
	Let $\{W^k\}$ be generated by UMA.
	Then, we have
	\begin{itemize}
		\item[(1)]
		\begin{eqnarray}
			\label{LY}
			\langle \Lambda^k-\Lambda^{k+1}, Y^k-Y^{k+1}\rangle \ge\kappa\|Y^k-Y^{k+1}\|_F^2.
		\end{eqnarray}
		\item[(2)]
		\begin{eqnarray}
			\label{LX}
			\langle \Lambda^k-\Lambda^{k+1}, X^k-X^{k+1} \rangle\ge -\beta\langle X^k-X^{k+1}, Y^{k+1}-Y^k-(Y^k-Y^{k-1}) \rangle
		\end{eqnarray}
	\end{itemize}
\end{lemma}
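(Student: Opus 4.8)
\textbf{Proof proposal for Lemma~\ref{Lem30}.}
The plan is to extract both inequalities from the first-order optimality conditions of the $Y$- and $X$-subproblems in the scheme~\eqref{ADM2}, together with the update rule~\eqref{Lsub} for the multiplier. First I would write the optimality condition of the $Y$-subproblem~\eqref{Ysub}: since $Y^{k+1}$ minimizes a function that is the sum of the convex term $\tau\|Y\|_1 - \alpha\langle M,Y\rangle$ and the differentiable strongly convex term $\frac{\kappa}{2}\|Y\|_F^2 + \frac{\beta}{2}\|Y + X^{k+1} + Z^{k+1} - \frac{1}{\beta}\Lambda^k - {\bar M}\|_F^2$, there is a subgradient $g^{k+1} \in \partial(\tau\|\cdot\|_1)(Y^{k+1})$ with
\[
g^{k+1} - \alpha M + \kappa Y^{k+1} + \beta\big(Y^{k+1} + X^{k+1} + Z^{k+1} - \tfrac{1}{\beta}\Lambda^k - {\bar M}\big) = 0.
\]
Using~\eqref{Lsub}, the last parenthesis equals $-\tfrac{1}{\beta}\Lambda^{k+1}$, so $g^{k+1} = \alpha M - \kappa Y^{k+1} + \Lambda^{k+1}$. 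Writing the analogous identity at iteration $k$ (note $g^{k} = \alpha M - \kappa Y^{k} + \Lambda^{k}$, which can be read off either from the previous step's optimality condition or, at $k=0$, from initialization handled separately in the full argument) and subtracting, then pairing with $Y^k - Y^{k+1}$ and invoking monotonicity of $\partial(\tau\|\cdot\|_1)$, i.e. $\langle g^k - g^{k+1}, Y^k - Y^{k+1}\rangle \ge 0$, yields
\[
\langle \Lambda^k - \Lambda^{k+1}, Y^k - Y^{k+1}\rangle \ge \kappa\|Y^k - Y^{k+1}\|_F^2,
\]
which is~\eqref{LY}.

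For~\eqref{LX} I would do the same with the $X$-subproblem~\eqref{Xsub}. Its optimality condition gives a subgradient $h^{k+1} \in \partial\|\cdot\|_*(X^{k+1})$ with
\[
h^{k+1} + \beta\big(X^{k+1} + Y^{k} + Z^{k+1} - \tfrac{1}{\beta}\Lambda^k - {\bar M}\big) = 0.
\]
Here the subtlety is that the parenthesis is \emph{not} $-\tfrac{1}{\beta}\Lambda^{k+1}$, because the $X$-update uses the stale $Y^k$ rather than $Y^{k+1}$; instead it equals $-\tfrac{1}{\beta}\Lambda^{k+1} + (Y^k - Y^{k+1})$ after substituting~\eqref{Lsub}. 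Hence $h^{k+1} = \Lambda^{k+1} - \beta(Y^k - Y^{k+1})$, and similarly $h^{k} = \Lambda^{k} - \beta(Y^{k-1} - Y^{k})$. Subtracting, pairing with $X^k - X^{k+1}$, and using monotonicity of $\partial\|\cdot\|_*$ (so $\langle h^k - h^{k+1}, X^k - X^{k+1}\rangle \ge 0$) gives
\[
\langle \Lambda^k - \Lambda^{k+1}, X^k - X^{k+1}\rangle \ge \beta\,\langle (Y^{k-1} - Y^k) - (Y^k - Y^{k+1}),\; X^k - X^{k+1}\rangle,
\]
and rearranging the right-hand side into $-\beta\langle X^k - X^{k+1},\, Y^{k+1} - Y^k - (Y^k - Y^{k-1})\rangle$ reproduces~\eqref{LX} exactly.

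The main obstacle, and the only place requiring real care, is the bookkeeping of the multiplier identity in the $X$-step: because the three blocks are updated Gauss--Seidel style with $Z^{k+1}, X^{k+1}$ computed before $Y^{k+1}$, the residual appearing in the $X$-subproblem's gradient differs from $-\tfrac{1}{\beta}\Lambda^{k+1}$ by precisely the cross term $Y^k - Y^{k+1}$, which is what produces the extra inner-product term on the right of~\eqref{LX}. I would also need to be slightly careful about the base case $k=0$ (where $Y^{-1}$ is not defined) and about justifying that the subgradients $g^k$, $h^k$ may be taken consistently across consecutive iterations; the cleanest route is to define $g^{k+1}$ and $h^{k+1}$ directly from the $(k+1)$-st optimality conditions as above, so that the identities $g^{k+1} = \alpha M - \kappa Y^{k+1} + \Lambda^{k+1}$ and $h^{k+1} = \Lambda^{k+1} - \beta(Y^k - Y^{k+1})$ hold for every $k \ge 0$, and then the monotonicity inequalities are applied between indices $k$ and $k+1$ with these canonical choices. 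Everything else is a direct substitution using~\eqref{Lsub}.
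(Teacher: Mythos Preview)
Your argument is correct and matches the paper's approach: the paper writes the variational-inequality form of the $Y$-subproblem optimality at steps $k+1$ and $k$, tests each at the other iterate, and adds (which is exactly the subdifferential monotonicity you invoke), and for part~(2) it simply says the proof is ``in a similar way''---i.e.\ your $X$-subproblem calculation with the stale-$Y$ correction. One cosmetic point on your last line: since $(Y^{k-1}-Y^k)-(Y^k-Y^{k+1})=(Y^{k+1}-Y^k)-(Y^k-Y^{k-1})$, your penultimate display is already $+\beta\langle X^k-X^{k+1},\,Y^{k+1}-Y^k-(Y^k-Y^{k-1})\rangle$, not $-\beta\langle\cdots\rangle$; this agrees with how the paper actually uses the inequality in the proof of Lemma~\ref{Lem31} (where it is written as $-\beta\langle X^{k+1}-X^k,\cdots\rangle$), so the sign in the displayed~\eqref{LX} is a typo there rather than an error in your derivation.
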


\begin{proof}
	(1) Using the optimality of (\ref{Ysub}), we get
	\begin{eqnarray}\label{Yopt}
		\langle Y-Y^{k+1}, \partial (\tau\|Y^{k+1}\|_1)-\Lambda^{k+1}-\alpha M+\kappa Y^{k+1}\rangle \ge 0.
	\end{eqnarray}
	Setting $Y:=Y^k$ in (\ref{Yopt}), we have
	\begin{eqnarray}\label{kkp1}
		\langle Y^k-Y^{k+1}, \partial (\tau\|Y^{k+1}\|_1)-\Lambda^{k+1}-\alpha M+\kappa Y^{k+1}\rangle \ge 0.
	\end{eqnarray}
	Then, setting $Y:=Y^{k+1}$ in (\ref{Yopt}) with the index $k$ replaced with $k-1$, it yields
	\begin{eqnarray}\label{kp1k}
		\langle Y^{k+1}-Y^{k}, \partial (\tau\|Y^{k}\|_1)-\Lambda^{k}-\alpha M+\kappa Y^{k}\rangle \ge 0.
	\end{eqnarray}
	Thus, adding (\ref{kkp1}) and (\ref{kp1k}) together, the inequality (\ref{LY}) follows directly.\\
	
	(2) The inequality (\ref{LX}) can be proved in a similar way as (\ref{LY}).
\end{proof}
\medskip
\begin{lemma}\label{Lem31}  Let $\{W^k\}$ be generated by UMA.
	Then, we have the
	following inequality:
	\begin{eqnarray}
		\label{solstr}
		\lefteqn{\theta(U) - \theta(U^{k+1})+  \langle W-W^{k+1}, \Psi(W^{k+1})\rangle+\beta\langle {\cal R}, \Gamma(X^k,Y^k,Z^k) \rangle}\nonumber\\
		&& \ge
		\frac{\beta}{2}(\|V^{k+1}-V\|_Q^2+\|V^k-V^{k+1}\|^2_Q-\|V^k-V\|_Q^2)+\kappa\|Y^{k+1}-Y^k\|_F^2+\frac{\kappa}{2}\|Y^{k+1}-Y\|_F^2\nonumber\\
		&&-\beta\langle X^{k+1}-X^k,Y^{k+1}-Y^k-(Y^k-Y^{k-1}) \rangle\nonumber\\
		&&+\beta \langle Y^{k+1}-Y, X^{k+1}-X^k \rangle.
	\end{eqnarray}
	where
	\begin{align}
		&\Gamma(X^k,Y^k,Z^k)=Y^k-Y^{k+1}+X^k-X^{k+1},\nonumber\\
		&Q=\left(\begin{array}{ccc}
			\beta I&0&0\\
			0&\beta I&0\\
			0&0&\frac{1}{\beta}I
		\end{array}\right)
	\end{align}

\end{lemma}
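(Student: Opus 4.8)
The plan is to obtain \eqref{solstr} by writing the first-order optimality conditions of the three subproblems \eqref{Zsub}, \eqref{Xsub}, \eqref{Ysub} as variational inequalities, eliminating $\Lambda^k$ through the multiplier update \eqref{Lsub}, and then routing the Gauss--Seidel coupling of the $Z$-block through the residual ${\cal R}$. First, I would write the VI of each subproblem. The $Z$-step yields $\langle Z-Z^{k+1},\,\beta(Z^{k+1}+X^k+Y^k-{\bar M})-\Lambda^k\rangle\ge 0$ for all $Z\in{\bf B}$; the $X$-step yields $\|X\|_*-\|X^{k+1}\|_*+\langle X-X^{k+1},\,\beta(X^{k+1}+Y^k+Z^{k+1}-{\bar M})-\Lambda^k\rangle\ge 0$; and, using that $\phi(Y):=\tau\|Y\|_1-\alpha\langle M,Y\rangle+\tfrac\kappa2\|Y\|_F^2$ is $\kappa$-strongly convex, the $Y$-step yields $\phi(Y)-\phi(Y^{k+1})+\langle Y-Y^{k+1},\,\beta(Y^{k+1}+X^{k+1}+Z^{k+1}-{\bar M})-\Lambda^k\rangle\ge \tfrac\kappa2\|Y-Y^{k+1}\|_F^2$.

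Next I substitute $\Lambda^k=\Lambda^{k+1}+\beta{\cal R}^{k+1}$ (this is exactly \eqref{Lsub}) into the three inner products. Because the blocks are updated in the order $Z\to X\to Y$, the last linear term collapses to precisely $-\Lambda^{k+1}$, while the $Z$- and $X$-terms pick up the lagging corrections $\beta\big((X^k-X^{k+1})+(Y^k-Y^{k+1})\big)=\beta\,\Gamma(X^k,Y^k,Z^k)$ and $\beta(Y^k-Y^{k+1})$, respectively. Adding the three VIs, using $\theta(U)=\|X\|_*+\phi(Y)$, and then adding $\langle\Lambda-\Lambda^{k+1},{\cal R}^{k+1}\rangle$ to both sides so that the $\Lambda$-block of $\Psi(W^{k+1})$ is completed, the left-hand side becomes $\theta(U)-\theta(U^{k+1})+\langle W-W^{k+1},\Psi(W^{k+1})\rangle$, while the right-hand side is $-\beta\langle Z-Z^{k+1},\Gamma\rangle+\beta\langle X-X^{k+1},Y^{k+1}-Y^k\rangle+\tfrac\kappa2\|Y-Y^{k+1}\|_F^2+\langle\Lambda-\Lambda^{k+1},{\cal R}^{k+1}\rangle$.

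It then remains to reshape this right-hand side. For the $Z$-term I use the identity $Z-Z^{k+1}={\cal R}-{\cal R}^{k+1}-(X-X^{k+1})-(Y-Y^{k+1})$, which splits off $-\beta\langle{\cal R},\Gamma\rangle$ (transferred to the left, producing the $\beta\langle{\cal R},\Gamma\rangle$ term in \eqref{solstr}) together with $\beta\langle{\cal R}^{k+1},\Gamma\rangle=\langle\Lambda^k-\Lambda^{k+1},\Gamma\rangle$ and $\beta\langle X-X^{k+1},\Gamma\rangle+\beta\langle Y-Y^{k+1},\Gamma\rangle$. I lower-bound $\langle\Lambda^k-\Lambda^{k+1},\Gamma\rangle$ via Lemma~\ref{Lem30}: part (1) on its $Y$-component gives $\kappa\|Y^{k+1}-Y^k\|_F^2$, and part (2) on its $X$-component gives $-\beta\langle X^{k+1}-X^k,\,(Y^{k+1}-Y^k)-(Y^k-Y^{k-1})\rangle$. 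Finally I feed the bilinear terms $\beta\langle X-X^{k+1},\Gamma\rangle$, $\beta\langle Y-Y^{k+1},\Gamma\rangle$ and $\langle\Lambda-\Lambda^{k+1},{\cal R}^{k+1}\rangle=\tfrac1\beta\langle\Lambda-\Lambda^{k+1},\Lambda^k-\Lambda^{k+1}\rangle$ into the polarization identity $2\langle a-b,a-c\rangle=\|a-b\|^2+\|a-c\|^2-\|b-c\|^2$: the diagonal $X$-, $Y$- and $\Lambda$-pieces assemble into the telescoping combination $\tfrac\beta2(\|V^{k+1}-V\|_Q^2+\|V^k-V^{k+1}\|_Q^2-\|V^k-V\|_Q^2)$; the cross piece $\beta\langle X-X^{k+1},Y^k-Y^{k+1}\rangle$ cancels exactly against the earlier $\beta\langle X-X^{k+1},Y^{k+1}-Y^k\rangle$; and the surviving piece $\beta\langle Y-Y^{k+1},X^k-X^{k+1}\rangle$ is rewritten as $\beta\langle Y^{k+1}-Y,X^{k+1}-X^k\rangle$. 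Collecting terms yields \eqref{solstr}.

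The main obstacle is the sign- and index-bookkeeping forced by the strict Gauss--Seidel order ($Z$, then $X$, then $Y$, then $\Lambda$): each use of \eqref{Lsub} spawns coupling terms, and one must keep straight which argument is the test point $(X,Y,Z)$ and which is an iterate, as well as the genuinely two-step difference $(Y^{k+1}-Y^k)-(Y^k-Y^{k-1})$ that enters through monotonicity of $\partial\|\cdot\|_*$ in Lemma~\ref{Lem30}(2). The one non-routine structural idea --- and the point at which the three-block analysis departs from the (divergent) naive ADMM --- is handling the $Z$-block coupling by expressing $Z-Z^{k+1}$ through the residuals ${\cal R}$ and ${\cal R}^{k+1}$, so that the leftover term is exactly the test-point residual $\beta\langle{\cal R},\Gamma\rangle$, which will vanish (or be controlled) when \eqref{solstr} is later specialized to $U=U^*$ or summed for the $O(1/t)$ rate.
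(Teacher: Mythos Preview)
Your proposal is correct and follows essentially the same route as the paper's proof: write the first-order optimality conditions of the three subproblems with the multiplier update $\Lambda^{k+1}=\Lambda^k-\beta{\cal R}^{k+1}$ substituted in, assemble them into the compact form $\theta(U)-\theta(U^{k+1})+\langle W-W^{k+1},\Psi(W^{k+1})\rangle$, rewrite the $Z$-block coupling via $Z-Z^{k+1}={\cal R}-{\cal R}^{k+1}-(X-X^{k+1})-(Y-Y^{k+1})$, invoke Lemma~\ref{Lem30} on $\langle\Lambda^k-\Lambda^{k+1},\Gamma\rangle$, and polarize the remaining bilinear terms into the $Q$-norm telescope. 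The only cosmetic difference is ordering: the paper first states the optimality system \eqref{optwhol} already in terms of $\Lambda^{k+1}$ and then manipulates, whereas you start from $\Lambda^k$ and substitute; the algebra and the key structural idea (handling the lagging $Z$-block through the test-point residual ${\cal R}$) are identical.
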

\begin{proof}
	According to the optimality condition of  (\ref{GADM}), we have
	\begin{eqnarray} \label{optwhol}
		\begin{array}{ll}
			\left\{
			\begin{array}{l}
				\langle Z- Z^{k+1},  - \Lambda^{k+1}+\beta(X^k-X^{k+1})+\beta(Y^k-Y^{k+1})\rangle \ge 0, \\
				\|X\|_*-\|X^{k+1}\|_* + \langle X- X^{k+1}, -\Lambda^{k+1}+\beta(Y^k-Y^{k+1}\rangle \ge 0, \\
				\tau\|Y\|_1-\tau\|Y^{k+1}\|_1 + \langle Y- Y^{k+1}, -\alpha M +\kappa Y^{k+1}-\Lambda^{k+1}\rangle \ge 0, \\
				\langle \Lambda -\Lambda^{k+1}, X^{k+1}+Y^{k+1}+Z^{k+1}-{\bar M}-\frac{1}{\beta}(\Lambda^k-\Lambda^{k+1})\rangle\ge0,
			\end{array} \right.
			& \forall\; W=(Z^\top,X^\top,Y^\top,\Lambda^\top)^\top\in  {\cal W}.
		\end{array}
	\end{eqnarray}
	Then, combining the above inequalities with (\ref{MD-F-1}) and (\ref{MD-F-2}), we get
	\begin{eqnarray}\nonumber
		\lefteqn{\theta(U) - \theta(U^{k+1})+  \langle W-W^{k+1}, \Psi(W^{k+1})\rangle+\beta\left(\langle Z-Z^{k+1}, Y^k-Y^{k+1}+X^k-X^{k+1} \rangle\right.}\nonumber\\
		&&\left.+\beta\langle X-X^{k+1}, Y^k-Y^{k+1}\rangle\right)\ge\frac{1}{2\beta}\Delta_{\Lambda}+\frac{\kappa}{2}\|Y-Y^{k+1}\|_F^2.\nonumber
	\end{eqnarray}
	Then, invoking (\ref{Rg}) and (\ref{Rk}), we obtain that
	\begin{eqnarray*}
		\lefteqn{ \theta(U) - \theta(U^{k+1})+  \langle W-W^{k+1}, \Psi(W^{k+1})\rangle+\beta\langle {\cal R}-{\cal R}^{k+1}, Y^k-Y^{k+1}+X^k-X^{k+1}\rangle}\nonumber\\
		&&\ge\frac{\kappa}{2}\|Y-Y^{k+1}\|_F^2+\frac{1}{2\beta}\Delta_{\Lambda}+\frac{\beta}{2}(\Delta_{X}+\Delta_{Y})+\beta\langle Y-Y^{k+1},X^k-X^{k+1} \rangle.
	\end{eqnarray*}
	Thus, using  ${\cal R}^{k+1}=\frac{1}{\beta}(\Lambda^k-\Lambda^{k+1})$, it yields that 
	\begin{eqnarray}\label{Lem32pr1}
		\lefteqn{ \theta(U) - \theta(U^{k+1})+  \langle W-W^{k+1}, \Psi(W^{k+1})\rangle+\beta\langle {\cal R}, Y^k-Y^{k+1}+X^k-X^{k+1}\rangle}\nonumber\\
		&\ge&\frac{\kappa}{2}\|Y-Y^{k+1}\|_F^2+\frac{1}{2\beta}\Delta_{\Lambda}+\frac{\beta}{2}(\Delta_{X}+\Delta_{Y})+\beta\langle Y-Y^{k+1},X^k-X^{k+1} \rangle\nonumber\\
		&&+\frac{1}{\beta}\langle \Lambda^k-\Lambda^{k+1},Y^k-Y^{k+1}+X^k-X^{k+1}\rangle.
	\end{eqnarray}
	On the other hand, adding (\ref{LY}) and (\ref{LX}) together, we obtain that
	\begin{eqnarray*}
		\langle \Lambda^k-\Lambda^{k+1}, Y^k-Y^{k+1}+X^k-X^{k+1}\rangle \ge\kappa\|Y^k-Y^{k+1}\|_F^2-\beta\langle X^{k+1}-X^k, Y^{k+1}-Y^k-(Y^k-Y^{k-1}) \rangle.
	\end{eqnarray*}
	Next, substituting the above inequality into (\ref{Lem32pr1}),  and  invoking,
	it yields the assertion (\ref{solstr}).

	\qed
\end{proof}

In the following, we  give each crossing term in the right-hand of (\ref{solstr}) a low bound.
The following inequalities enable us to get a much  sharper result for UMA solving (\ref{CouplePCA}) in contrast to (\cite{tao2016convergence}).

\begin{lemma}\label{pote}
	Let $\{W^k\}$ be generated by UMA.
	Suppose that $0<\varepsilon <\sqrt{5}-2$.
	Then, it holds that
	\begin{align}
		&-\beta\langle X^{k+1}- X^{k},Y^{k+1}-Y^{k} \rangle \ge \beta\left(-\frac{3-\sqrt{5}}{4}\|X^k-X^{k+1}\|_F^2 -\frac{1}{3-\sqrt{5}}\|Y^{k+1}-Y^k\|_F^2\right),\label{Cau1}\\
		&\beta\langle X^{k+1}-X^k, (Y^k-Y^{k-1})\rangle\ge \beta\left(-\frac{3-\sqrt{5}}{4}\|X^k-X^{k+1}\|_F^2  -\frac{1}{3-\sqrt{5}} \|Y^k-Y^{k-1}\|_F^2 \right),\label{Cau2}\\
		&\beta \langle Y^{k+1}-Y, X^{k+1}-X^k \rangle\ge- \beta \left( \frac{1}{2(\sqrt{5}-2-\varepsilon)}\|Y^{k+1}-Y\|_F^2+\frac{\sqrt{5}-2-\varepsilon}{2}\|X^{k+1}-X^k\|_F^2\right).\label{Cau3}
	\end{align}
\end{lemma}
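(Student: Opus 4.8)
The plan is to derive all three estimates from a single elementary fact, namely the weighted Cauchy--Schwarz (Young) inequality for the Frobenius inner product: for any matrices $A,B$ of the same size and any scalar $t>0$,
\[
0\le \Big\|\sqrt{t}\,A \mp \tfrac{1}{\sqrt{t}}\,B\Big\|_F^2 = t\|A\|_F^2 + \tfrac1t\|B\|_F^2 \mp 2\langle A,B\rangle ,
\]
so that $\pm\langle A,B\rangle \ge -\tfrac{t}{2}\|A\|_F^2 - \tfrac{1}{2t}\|B\|_F^2$. Each of \eqref{Cau1}, \eqref{Cau2}, \eqref{Cau3} is then just this inequality applied with a specific choice of $A$, $B$ and $t$, followed by multiplication by $\beta>0$.

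For \eqref{Cau1} I would take $A=X^{k+1}-X^{k}$, $B=Y^{k+1}-Y^{k}$ and $t=(3-\sqrt5)/2$; then $t/2=(3-\sqrt5)/4$, and using the arithmetic identity $(3-\sqrt5)(3+\sqrt5)=4$ we get $1/(2t)=1/(3-\sqrt5)=(3+\sqrt5)/4$, which reproduces exactly the claimed coefficients after noting $\|X^{k+1}-X^k\|_F=\|X^k-X^{k+1}\|_F$. For \eqref{Cau2} the same value of $t$ works, now with $A=X^{k+1}-X^{k}$ and $B=Y^{k}-Y^{k-1}$. For \eqref{Cau3} I would take $A=X^{k+1}-X^{k}$, $B=Y^{k+1}-Y$ and $t=\sqrt5-2-\varepsilon$; this choice is admissible precisely because the hypothesis $0<\varepsilon<\sqrt5-2$ forces $t>0$, and then $t/2=(\sqrt5-2-\varepsilon)/2$ and $1/(2t)=1/(2(\sqrt5-2-\varepsilon))$ give the stated bound verbatim.

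There is essentially no hard step here: the lemma is a bookkeeping device that isolates the crossing terms on the right-hand side of \eqref{solstr} and bounds each of them by a combination of $\|X^k-X^{k+1}\|_F^2$, $\|Y^{k+1}-Y^k\|_F^2$, $\|Y^k-Y^{k-1}\|_F^2$ and $\|Y^{k+1}-Y\|_F^2$. The only thing that requires care is the calibration of the split parameter $t$: the constants $(3-\sqrt5)/4$ and $\sqrt5-2-\varepsilon$ are chosen so that when \eqref{Cau1}--\eqref{Cau3} are later substituted into \eqref{solstr} together with \eqref{LY}--\eqref{LX}, the coefficients of $\|X^k-X^{k+1}\|_F^2$ and $\|Y^{k+1}-Y^k\|_F^2$ assemble into the quantities $\kappa-\tfrac{\sqrt5+2}{2}\beta$ and $\kappa-\tfrac{1}{2(\sqrt5-2-\varepsilon)\beta}$ of Theorem~\ref{theorem34}, which stay nonnegative under $\beta\in(0,2(\sqrt5-2)\kappa)$. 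So the proof itself is a three-fold application of Young's inequality, but I would keep this downstream matching in view when fixing the parameters and record the identity $1/(3-\sqrt5)=(3+\sqrt5)/4$ explicitly so that the coefficients visibly coincide with those in the statement.
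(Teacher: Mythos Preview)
Your proposal is correct and matches the paper's approach exactly: the paper's proof consists of the single line ``These three inequalities follow from Cauchy-Schwarz inequality,'' and you have simply unpacked this by applying the weighted Young/Cauchy--Schwarz inequality $\pm\langle A,B\rangle\ge -\tfrac{t}{2}\|A\|_F^2-\tfrac{1}{2t}\|B\|_F^2$ with the specific parameter choices $t=(3-\sqrt5)/2$ and $t=\sqrt5-2-\varepsilon$ that reproduce the stated constants.
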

\begin{proof}
	These three inequalities follow from Cauchy-Schwarz inequality.\qed
	%By setting $Y=Y^k$ in the third inequality of (\ref{Lem-1}) we get
	%$$ \tau\|Y^k\|_1-\tau\|Y^{k+1}\|_1+\langle Y^k- { Y}^{k+1}, \alpha X^{k+1}+\kappa Y^{k+1}-\Lambda^{k+1}\rangle\ge0.
	%$$
	%Similarly, taking $k :=k-1$ and $Y=Y^{k+1}$  in the third inequality of (\ref{Lem-1}) we have
	%$$ \tau\|Y^{k+1}\|_1-\tau\|Y^{k}\|_1+\langle Y^{k+1}- { Y}^{k}, \alpha X^{k}+\kappa Y^{k}-\Lambda^{k}\rangle\ge0.
	%$$
	%By adding the above two inequalities, we obtain the conclusion.
\end{proof}

\begin{theorem}\label{theorem33} Let $\{W^k\}$ be generated by UMA.
	Assume that   $\beta>0$ in Algorithm (\ref{GADM}).
	Suppose that $0<\varepsilon <\sqrt{5}-2$.
	%$$\beta+\kappa-\frac{\alpha}{4(\kappa-\alpha)}>0.$$
	Then, we have the
	following contractive property:
	\begin{eqnarray}
		\label{Contr}
		\lefteqn{\frac{\beta}{2}\|X^{k+1}-X^*\|_F^2+\frac{\beta}{2}\|Y^{k+1}-Y^*\|_F^2+\frac{1}{2\beta}\|Z^{k+1}-Z^*\|_F^2+\frac{\beta}{3-\sqrt{5}}\|Y^k-Y^{k+1}\|_F^2}\nonumber\\
		&&\le \frac{\beta}{2}\|X^{k}-X^*\|_F^2+\frac{\beta}{2}\|Y^{k}-Y^*\|_F^2+\frac{1}{2\beta}\|Z^{k}-Z^*\|_F^2+\frac{\beta}{3-\sqrt{5}}\|Y^{k-1}-Y^{k}\|_F^2\nonumber\\
		&&-\frac{\varepsilon}{2}\beta\|X^k-X^{k+1}\|_F^2-(\kappa-\frac{\sqrt{5}+2}{2}\beta)\|Y^{k+1}-Y^k\|_F^2-\frac{1}{2\beta}\|\Lambda^k-\Lambda^{k+1}\|_F^2\nonumber\\
		&&+(\kappa-\frac{1}{2(\sqrt{5}-2-\varepsilon)\beta})\|Y^{k+1}-Y^*\|_F^2.
	\end{eqnarray}
	%where
	%\begin{eqnarray}\label{DEL}
	%&&\Delta_{k+1}:=\frac{\epsilon_1}{2\beta}\|\Lambda^k-\Lambda^{k+1}\|_F^2+\epsilon_2\|Y^k-Y^{k+1}\|_F^2\nonumber\\
	%&&+\left(\frac{\gamma+1}{2}\beta-\frac{1}{2(1-\epsilon_1)}\beta-
	%\frac{\alpha}{4(\frac{\beta}{2}+\kappa-\frac{\alpha}{4(\kappa-\alpha)}-\epsilon_2)}-{\color{red}\frac{\beta^2}{4(\kappa-\alpha)}}\right)\|X^k-X^{k+1}\|_F^2,
	%\end{eqnarray}
	%and the matrix
	%
	%\[   \label{G}
	% G=
	%   \left(\begin{array}{ccc}
	%
	%    (\gamma+1) \beta I_m  &     0      &   0  \\
	%     0    &    \beta I_m     &   0  \\
	%      0      &      0     & {1\over \beta} I_m
	% \end{array}\right),
	%\]
	%and the positive scalars $\epsilon_{1,2}$ are sufficient small.
\end{theorem}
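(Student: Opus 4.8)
The plan is to start from the master inequality \eqref{solstr} of Lemma~\ref{Lem31} and specialize it to a fixed solution point $W^*\in\mathcal{W}^*$. Two things happen immediately. First, since $W^*$ is primal feasible, $\mathcal{R}$ evaluated at $W^*$ equals $X^*+Y^*+Z^*-\bar M=0$, so the term $\beta\langle\mathcal{R},\Gamma(X^k,Y^k,Z^k)\rangle$ vanishes. Second, the left-hand side $\theta(U^*)-\theta(U^{k+1})+\langle W^*-W^{k+1},\Psi(W^{k+1})\rangle$ can be bounded above by $-\tfrac{\kappa}{2}\|Y^{k+1}-Y^*\|_F^2$: the map $\Psi$ is affine and, by its skew block structure, satisfies $\langle W-W',\Psi(W)-\Psi(W')\rangle\equiv 0$, hence $\langle W^*-W^{k+1},\Psi(W^{k+1})\rangle=-\langle W^{k+1}-W^*,\Psi(W^*)\rangle$; plugging in the variational characterization \eqref{MSVII} at $W=W^{k+1}$, namely $\theta(U^{k+1})-\theta(U^*)+\langle W^{k+1}-W^*,\Psi(W^*)\rangle\ge\tfrac{\kappa}{2}\|Y^{k+1}-Y^*\|_F^2$, yields the claimed bound.

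Next I would move the quadratic terms around to reach an estimate of the form
\[
\tfrac{\beta}{2}\|V^k-V^*\|_Q^2\ \ge\ \tfrac{\beta}{2}\|V^{k+1}-V^*\|_Q^2+\tfrac{\beta}{2}\|V^k-V^{k+1}\|_Q^2+\kappa\|Y^{k+1}-Y^k\|_F^2+\kappa\|Y^{k+1}-Y^*\|_F^2+\mathcal{C}_k,
\]
where $\mathcal{C}_k:=-\beta\langle X^{k+1}-X^k,Y^{k+1}-Y^k\rangle+\beta\langle X^{k+1}-X^k,Y^k-Y^{k-1}\rangle+\beta\langle Y^{k+1}-Y^*,X^{k+1}-X^k\rangle$ collects the three indefinite cross terms. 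Each of these I would now bound below by the Cauchy--Schwarz estimates \eqref{Cau1}--\eqref{Cau3} of Lemma~\ref{pote}, using exactly the Young weights $\tfrac{3-\sqrt5}{4}$ (and its conjugate $\tfrac{1}{3-\sqrt5}$) for the first two and $\tfrac{\sqrt5-2-\varepsilon}{2}$ (and its conjugate $\tfrac{1}{2(\sqrt5-2-\varepsilon)}$) for the third; the hypothesis $0<\varepsilon<\sqrt5-2$ is precisely what keeps all these weights positive so that the bounds are meaningful.

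Finally I would expand $\tfrac{\beta}{2}\|V^k-V^{k+1}\|_Q^2$ into its $X$-, $Y$- and $\Lambda$-pieces, extract the $\tfrac{1}{2\beta}\|\Lambda^k-\Lambda^{k+1}\|_F^2$ term, bring in the $Z$-block via the nonexpansiveness of the projection onto $\mathbf B$ in the $Z$-subproblem \eqref{Zsub} together with the feasibility identity $X^*+Y^*+Z^*=\bar M$ (and using $\mathcal{R}^{k+1}=\tfrac1\beta(\Lambda^k-\Lambda^{k+1})$), and then collect the coefficients of $\|X^k-X^{k+1}\|_F^2$, $\|Y^{k+1}-Y^k\|_F^2$, $\|Y^k-Y^{k-1}\|_F^2$ and $\|Y^{k+1}-Y^*\|_F^2$. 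The $X$-difference contributions from the two $\tfrac{3-\sqrt5}{4}$-weighted bounds and the $\tfrac{\sqrt5-2-\varepsilon}{2}$-weighted bound combine with the $X$-part of $\tfrac{\beta}{2}\|V^k-V^{k+1}\|_Q^2$ to leave exactly $-\tfrac{\varepsilon}{2}\beta\|X^k-X^{k+1}\|_F^2$; the $\|Y^{k+1}-Y^k\|_F^2$ terms assemble into the coefficient $-(\kappa-\tfrac{\sqrt5+2}{2}\beta)$ while leaving a residual $\tfrac{\beta}{3-\sqrt5}\|Y^k-Y^{k+1}\|_F^2$ that telescopes against the corresponding $\|Y^{k-1}-Y^k\|_F^2$ term on the other side; and the $\kappa\|Y^{k+1}-Y^*\|_F^2$ produced by \eqref{MSVII} combines with the $Y$-part of \eqref{Cau3} to give the term $(\kappa-\tfrac{1}{2(\sqrt5-2-\varepsilon)\beta})\|Y^{k+1}-Y^*\|_F^2$ in \eqref{Contr}. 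The hard part will be this final bookkeeping: verifying that the weights $\tfrac{3-\sqrt5}{4}$ and $\sqrt5-2-\varepsilon$ are exactly the ones that force the $X$-difference coefficient to collapse to $\tfrac{\varepsilon}{2}\beta$ while simultaneously producing a \emph{telescoping} $Y$-difference term with the identical coefficient $\tfrac{\beta}{3-\sqrt5}$ on both sides — this tight, non-obvious choice is what yields a sharper estimate than the looser splitting used in \cite{tao2016convergence}.
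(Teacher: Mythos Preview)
Your plan matches the paper's proof almost exactly: specialize \eqref{solstr} at $W=W^*$, use the skew structure of $\Psi$ together with \eqref{MSVI-F} to bound the left side by $-\tfrac{\kappa}{2}\|Y^{k+1}-Y^*\|_F^2$ (this is the paper's inequality \eqref{WYX}), obtain \eqref{contrin1}, then plug in the three Cauchy--Schwarz bounds of Lemma~\ref{pote} and collect terms. The bookkeeping you describe with the Young weights $\tfrac{3-\sqrt5}{4}$ and $\tfrac{\sqrt5-2-\varepsilon}{2}$ is precisely how the coefficients collapse.

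The one misstep is your attempt to ``bring in the $Z$-block via the nonexpansiveness of the projection.'' The terms $\tfrac{1}{2\beta}\|Z^{k+1}-Z^*\|_F^2$ and $\tfrac{1}{2\beta}\|Z^{k}-Z^*\|_F^2$ in the displayed inequality \eqref{Contr} are typographical errors for $\Lambda$: recall $V=(X;Y;\Lambda)$ and $Q=\mathrm{diag}(\beta I,\beta I,\tfrac{1}{\beta}I)$, so expanding $\tfrac{1}{2}\|V^{k+1}-V^*\|_Q^2$ already produces $\tfrac{1}{2\beta}\|\Lambda^{k+1}-\Lambda^*\|_F^2$, and the paper's own proof never touches $Z$ --- it goes straight from \eqref{contrin1} plus \eqref{Cau1}--\eqref{Cau3} to the conclusion. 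Your proposed nonexpansiveness argument could not manufacture a telescoping $\|Z^{k+1}-Z^*\|_F^2-\|Z^k-Z^*\|_F^2$ in any case, because the $Z$-subproblem \eqref{Zsub} has no dependence on the previous iterate $Z^k$. Drop that step and you have exactly the paper's argument.
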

\begin{proof}First, invoking (\ref{MSVI-F}) and $X^*+Y^*+Z^*-{\bar M}=0$, we have
	\begin{eqnarray}\label{WYX}
		&&\theta(U^{k+1}) - \theta(U^*)+  \langle W^{k+1}-W^*, \Psi(W^{k+1})\rangle+\beta\langle X^*+Y^*+Z^*-{\bar M}, \Gamma(X^k,Y^k,Z^k)\rangle \nonumber\\
		&&\quad\quad\le-\frac{\kappa}{2}\|Y^{k+1}-Y^*\|_F^2.
	\end{eqnarray}
	Then, setting $W:=W^*\in{\cal W}^*$ in (\ref{solstr}) and combining with (\ref{WYX}), we obtain that
	\begin{eqnarray}\label{contrin1}
		\lefteqn{0\ge
			\frac{\beta}{2}(\|V^{k+1}-V^*\|_Q^2+\|V^k-V^{k+1}\|^2_Q-\|V^k-V^*\|_Q^2)+\kappa\|Y^{k+1}-Y^k\|_F^2+\kappa\|Y^{k+1}-Y^*\|_F^2}\nonumber\\
		&&-\beta\langle X^{k+1}-X^k,Y^{k+1}-Y^k-(Y^k-Y^{k-1}) +\beta \langle Y^{k+1}-Y, X^{k+1}-X^k \rangle.
	\end{eqnarray}
	Next, adding  (\ref{Cau1})-(\ref{Cau3}) together, then substituting the resulting inequality into (\ref{contrin1}), we derive the assertion (\ref{Contr}) directly.\qed
\end{proof}

Based on the above theorem, we have the following theorem
immediately.

\begin{theorem}\label{theorem34}
	When $\beta$ is restricted by
	\begin{eqnarray}\label{beta}
		\beta\in\left(0,\;2(\sqrt{5}-2)\kappa \right),
	\end{eqnarray} there exists a sufficient small scalar $\varepsilon>0$ such that
	\begin{align}
		&\label{cof1} \kappa-\frac{\sqrt{5}+2}{2}\beta>0,\;\mbox{and}\;\;\kappa-\frac{1}{2(\sqrt{5}-2-\varepsilon)\beta}>0.
	\end{align}
	Then, we have
	\begin{enumerate}
		\item[(1)] The sequence $\{V^k\}$ is bounded.
		
		\item[(2)] $ \lim_{k\rightarrow \infty}\{\|Y^k - Y^{k+1} \|_F^2 +
		\|X^k - X^{k+1} \|_F^2 + \| \Lambda^k -{\Lambda}^{k+1}\|_F^2\}=0.$
	\end{enumerate}
\end{theorem}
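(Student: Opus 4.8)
The plan is to treat Theorem~\ref{theorem33} as an energy (Lyapunov) estimate and run a standard telescoping argument. Before that I would settle the existence of $\varepsilon$ claimed in the statement. Rationalizing, $1/(\sqrt5-2)=\sqrt5+2$, so at $\varepsilon=0$ the second inequality of \eqref{cof1} collapses onto the first, namely $\kappa-\tfrac{\sqrt5+2}{2}\beta>0$, and this is precisely the hypothesis $\beta\in\bigl(0,\,2(\sqrt5-2)\kappa\bigr)$ because $\tfrac{2\kappa}{\sqrt5+2}=2(\sqrt5-2)\kappa$. Since $\varepsilon\mapsto\kappa-\tfrac{\beta}{2(\sqrt5-2-\varepsilon)}$ (and likewise the first quantity, which is independent of $\varepsilon$) is continuous at $0$ and strictly positive there, it stays positive on some interval $(0,\bar\varepsilon)$; shrinking $\bar\varepsilon$ if necessary we also keep $0<\varepsilon<\sqrt5-2$, which is exactly what Lemma~\ref{pote} — and hence Theorem~\ref{theorem33} — requires. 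Fix such an $\varepsilon$ once and for all.

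Next I would introduce the potential
\[
\Phi^k:=\tfrac{\beta}{2}\|X^{k}-X^*\|_F^2+\tfrac{\beta}{2}\|Y^{k}-Y^*\|_F^2+\tfrac{1}{2\beta}\|Z^{k}-Z^*\|_F^2+\tfrac{\beta}{3-\sqrt5}\|Y^{k-1}-Y^{k}\|_F^2 ,
\]
so that \eqref{Contr} can be rearranged into
\[
\Phi^{k+1}+\tfrac{\varepsilon\beta}{2}\|X^k-X^{k+1}\|_F^2+\bigl(\kappa-\tfrac{\sqrt5+2}{2}\beta\bigr)\|Y^{k+1}-Y^k\|_F^2+\tfrac{1}{2\beta}\|\Lambda^k-\Lambda^{k+1}\|_F^2\le\Phi^k ,
\]
after moving the $\|Y^{k+1}-Y^*\|_F^2$ term of \eqref{Contr} to the left, where by \eqref{cof1} its coefficient is of the sign that lets it be dropped (or absorbed into $\Phi^{k+1}$). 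Consequently $\{\Phi^k\}$ is nonincreasing, so $0\le\Phi^k\le\Phi^1<\infty$ for every $k$, which immediately bounds $\{X^k\}$, $\{Y^k\}$ and $\{Z^k\}$. Since $\Phi^k$ does not involve $\Lambda^k$, I would recover boundedness of the multipliers from the optimality condition of the $Y$-subproblem \eqref{Ysub}: it gives $\Lambda^{k+1}\in\partial(\tau\|\cdot\|_1)(Y^{k+1})-\alpha M+\kappa Y^{k+1}$, and because the $\ell_1$-subdifferential is entrywise contained in $[-\tau,\tau]$ while $\{Y^k\}$ is bounded, $\{\Lambda^k\}$ is bounded. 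This proves part~(1).

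For part~(2), summing the displayed inequality over $k=1,\dots,N$ telescopes the $\Phi$-terms and leaves
\[
\sum_{k=1}^{N}\Bigl(\tfrac{\varepsilon\beta}{2}\|X^k-X^{k+1}\|_F^2+\bigl(\kappa-\tfrac{\sqrt5+2}{2}\beta\bigr)\|Y^{k+1}-Y^k\|_F^2+\tfrac{1}{2\beta}\|\Lambda^k-\Lambda^{k+1}\|_F^2\Bigr)\le\Phi^1-\Phi^{N+1}\le\Phi^1 .
\]
Letting $N\to\infty$ shows the series converges, hence its general term tends to $0$; since $\tfrac{\varepsilon\beta}{2}$, $\kappa-\tfrac{\sqrt5+2}{2}\beta$ and $\tfrac{1}{2\beta}$ are strictly positive constants (the middle one by \eqref{cof1}), each of $\|X^k-X^{k+1}\|_F^2$, $\|Y^{k+1}-Y^k\|_F^2$ and $\|\Lambda^k-\Lambda^{k+1}\|_F^2$ converges to $0$, which is exactly the claimed limit.

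The step I expect to be the main obstacle is the sign bookkeeping around the residual term $\bigl(\kappa-\tfrac{1}{2(\sqrt5-2-\varepsilon)\beta}\bigr)\|Y^{k+1}-Y^*\|_F^2$ in \eqref{Contr}: one has to verify that the constraint imposed by \eqref{cof1} is precisely the one guaranteeing that $\Phi^k$ is a genuine nonincreasing Lyapunov sequence (this is where the specific numerical threshold $2(\sqrt5-2)\kappa$ is forced), and also to juggle it against the extra telescoping block $\tfrac{\beta}{3-\sqrt5}\|Y^{k-1}-Y^{k}\|_F^2$ so that the right constants survive. A secondary nuisance, already noted above, is that $\Phi^k$ controls only the primal variables, so boundedness of $\{\Lambda^k\}$ cannot be read off the contraction directly and must instead be extracted from the $Y$-update optimality.
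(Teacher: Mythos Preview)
Your plan is exactly the paper's: read Theorem~\ref{theorem33} as a Lyapunov descent inequality, telescope, and conclude summability of the increments. One correction that removes your ``secondary nuisance'': the $Z^{k}$ appearing in \eqref{Contr} (and hence in your $\Phi^k$) is a typo for $\Lambda^{k}$. The proof of Theorem~\ref{theorem33} passes through $\tfrac{1}{2}\|V^{k}-V^*\|_Q^2$ with $V=(X,Y,\Lambda)$ and $Q=\mathrm{diag}(\beta I,\beta I,\beta^{-1}I)$, so the potential already contains $\tfrac{1}{2\beta}\|\Lambda^{k}-\Lambda^*\|_F^2$; with that fix the boundedness of $\{\Lambda^k\}$, and thus of $\{V^k\}$, is immediate from the contraction, and your detour through the $Y$-subproblem optimality (which is correct) becomes unnecessary. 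Your caution about the $\|Y^{k+1}-Y^*\|_F^2$ term is well placed: with the intended sign that term is subtracted on the right of \eqref{Contr} (its coefficient $\kappa-\tfrac{\beta}{2(\sqrt5-2-\varepsilon)}$ being positive by \eqref{cof1}), so it can simply be dropped, and the paper does exactly that.
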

\noindent{\it Proof}.  The inequality (\ref{cof1}) is elementary.
Note that the assertion (1)  follows from (\ref{Contr}) directly. Furthermore, we get
\begin{eqnarray*}
	\lefteqn{  \sum_{k=1}^{\infty} \left[\frac{\varepsilon}{2}\beta\|X^k-X^{k+1}\|_F^2+(\kappa-\frac{\sqrt{5}+2}{2}\beta)\|Y^{k+1}-Y^k\|_F^2+\frac{1}{2\beta}\|\Lambda^k-\Lambda^{k+1}\|_F^2
		\right]} \nonumber\\
	&&   \le
	\frac{\beta}{2}\|X^{1}-X^*\|_F^2+\frac{\beta}{2}\|Y^{1}-Y^*\|_F^2+\frac{1}{2\beta}\|Z^{1}-Z^*\|_F^2+\frac{\beta}{3-\sqrt{5}}\|Y^{0}-Y^{1}\|_F^2
	< +\infty,
\end{eqnarray*}
which immediately implies that
\begin{equation}
\label{THME2-1}
\lim_{k\to\infty} \|Y^k-Y^{k+1}\|_F =
0,\quad\lim_{k\to\infty}\|X^k-X^{k+1}\|_F=0,\quad \lim_{k\to\infty}
\|\Lambda^k-{ \Lambda}^{k+1}\|_{F}=0,
\end{equation}
i.e., the second assertion. \qquad $\Box$

We are now ready to prove the convergence of UMA.

\begin{theorem}\label{THMD2} Let $\{V^k\}$ and $\{W^{k}\}$
	be the sequences generated by UMA. Assume that  the penalty parameter $\beta$ is satisfied with (\ref{beta}).
	Then, we have
	\begin{enumerate}
		
		\item Any cluster point of  $\{{W}^{k}\}$ is a solution point of (\ref{MA-VIxyl}).
		
		\item The sequence $\{{V}^k\}$ converges to some $V^{\infty}\in {\cal
			V}^*$.
		\item The sequence $\{U^k\}$ converges to a solution point of (\ref{CouplePCA}).
	\end{enumerate}
\end{theorem}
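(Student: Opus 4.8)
The plan is to run the classical argument for contractive splitting methods, exploiting the contraction inequality (\ref{Contr}) together with the boundedness and asymptotic-regularity facts already furnished by Theorem~\ref{theorem34}. First I would note that $\{W^k\}$ is bounded: Theorem~\ref{theorem34}(1) gives boundedness of $\{V^k\}=\{(X^k;Y^k;\Lambda^k)\}$, and the multiplier recursion yields $X^{k+1}+Y^{k+1}+Z^{k+1}-{\bar M}=\frac1\beta(\Lambda^k-\Lambda^{k+1})$, whence $Z^{k+1}={\bar M}-X^{k+1}-Y^{k+1}+\frac1\beta(\Lambda^k-\Lambda^{k+1})$ is bounded as well; therefore $\{W^k\}$ admits at least one cluster point.

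For assertion~1, let $W^\infty=\lim_j W^{k_j}$ be any cluster point and pass to the limit along $\{k_j\}$ in the optimality system (\ref{optwhol}). By Theorem~\ref{theorem34}(2), $\|X^k-X^{k+1}\|_F$, $\|Y^k-Y^{k+1}\|_F$ and $\|\Lambda^k-\Lambda^{k+1}\|_F$ tend to $0$ (hence so does $\|Y^{k-1}-Y^k\|_F$), so every proximal correction term in (\ref{optwhol}) vanishes in the limit; in particular the $\frac1\beta(\Lambda^k-\Lambda^{k+1})$ term in the last line forces the primal feasibility $X^\infty+Y^\infty+Z^\infty={\bar M}$. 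Since $\|\cdot\|_*$ and $\|\cdot\|_1$ are continuous and ${\bf B}$ is closed, the first three inequalities of (\ref{optwhol}) pass to the limit and become exactly the first three lines of (\ref{MA-VIxyl}) at $W^\infty$; combined with feasibility this says $W^\infty\in{\cal W}^*$, and in particular its $V$-part lies in ${\cal V}^*$.

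For assertions~2 and~3, I would specialize (\ref{Contr}) to the solution point $W^*=W^\infty$ produced above and set $\psi_k:=\frac\beta2\|X^k-X^\infty\|_F^2+\frac\beta2\|Y^k-Y^\infty\|_F^2+\frac1{2\beta}\|Z^k-Z^\infty\|_F^2+\frac{\beta}{3-\sqrt5}\|Y^{k-1}-Y^k\|_F^2$. Under the restriction (\ref{beta}) --- which, through the slack scalar $\varepsilon$ of (\ref{cof1}), makes the coefficients multiplying $\|X^k-X^{k+1}\|_F^2$, $\|Y^{k+1}-Y^k\|_F^2$, $\|\Lambda^k-\Lambda^{k+1}\|_F^2$ and $\|Y^{k+1}-Y^\infty\|_F^2$ on the right of (\ref{Contr}) combine favorably --- the sequence $\psi_k$ is eventually nonincreasing, and being bounded below it converges. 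Along $\{k_j\}$ each of $\|X^{k_j}-X^\infty\|_F$, $\|Y^{k_j}-Y^\infty\|_F$, $\|Z^{k_j}-Z^\infty\|_F$ and (by asymptotic regularity) $\|Y^{k_j-1}-Y^{k_j}\|_F$ tends to $0$, so $\psi_{k_j}\to0$; since $\psi_k$ converges, $\psi_k\to0$. Hence $X^k\to X^\infty$, $Y^k\to Y^\infty$, $Z^k\to Z^\infty$, giving $V^k\to V^\infty\in{\cal V}^*$ (assertion~2) and $U^k=(Z^k;X^k;Y^k)\to(Z^\infty;X^\infty;Y^\infty)$, which solves (\ref{CouplePCA}) because $W^\infty$ solves the VI (assertion~3).

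The genuinely delicate point is the last step: deducing the eventual monotonicity of $\psi_k$ from (\ref{Contr}) despite the presence of the $\|Y^{k+1}-Y^\infty\|_F^2$ term on its right-hand side --- this is precisely what the narrowed admissible interval (\ref{beta}) for $\beta$, together with the slack parameter $\varepsilon$, is there to secure. The remaining ingredients are routine limiting arguments; the only other place needing care is the passage to the limit of the subgradient inclusions for $\|\cdot\|_*$ and $\|\cdot\|_1$ in the proof of assertion~1, which is legitimate by the lower semicontinuity (indeed continuity) of these convex functions.
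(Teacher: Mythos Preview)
Your argument is essentially the paper's own: boundedness gives a cluster point, you pass to the limit in the per-block optimality system (\ref{optwhol}) using the asymptotic regularity of Theorem~\ref{theorem34}, and then you run the Fej\'er-monotone step with the contraction inequality (\ref{Contr}) specialized to the cluster point to upgrade subsequential to full convergence. The ``delicate point'' you flag about the $\|Y^{k+1}-Y^\infty\|_F^2$ term is handled exactly as you say, via the sign condition (\ref{cof1}).

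There is one small slip. Your Lyapunov functional
\[
\psi_k=\tfrac\beta2\|X^k-X^\infty\|_F^2+\tfrac\beta2\|Y^k-Y^\infty\|_F^2+\tfrac1{2\beta}\|Z^k-Z^\infty\|_F^2+\tfrac{\beta}{3-\sqrt5}\|Y^{k-1}-Y^k\|_F^2
\]
tracks $Z$, whereas (\ref{Contr}) as derived from (\ref{solstr}) actually carries $\|\Lambda^{k}-\Lambda^*\|_F^2$ (recall $V=(X;Y;\Lambda)$ and $\|V\|_Q^2=\beta\|X\|_F^2+\beta\|Y\|_F^2+\tfrac1\beta\|\Lambda\|_F^2$; the ``$Z$'' in the displayed (\ref{Contr}) is a typo). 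With $Z$ in $\psi_k$ you obtain $X^k\to X^\infty$, $Y^k\to Y^\infty$, $Z^k\to Z^\infty$, but \emph{not} $\Lambda^k\to\Lambda^\infty$, so your sentence ``giving $V^k\to V^\infty$'' does not follow (knowing only $\Lambda^{k+1}-\Lambda^k\to 0$ is not enough). The paper uses
\[
\tfrac12\|V^k-V^\infty\|_Q^2+\tfrac{\beta}{3-\sqrt5}\|Y^k-Y^{k+1}\|_F^2
\]
as the nonincreasing quantity, concludes $V^k\to V^\infty$ directly, and only \emph{then} recovers $Z^k\to Z^\infty$ from the identity $Z^{k+1}=\bar M-X^{k+1}-Y^{k+1}+\tfrac1\beta(\Lambda^k-\Lambda^{k+1})$ that you already wrote down. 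Swap $\Lambda$ for $Z$ in $\psi_k$ and your proof goes through verbatim.
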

\begin{proof} Since $\{W^k\}$ is bounded due to (\ref{Contr}), it has at least one cluster point. Let $W^{\infty}$ be a cluster point of
	$\{W^{k}\}$ and the subsequence $\{W^{k_j}\}$ converges to $W^{\infty}$.
	Because of the assertion (\ref{THME2-1}), it
	follows from (\ref{optwhol}) that
	\[ \nonumber
	\begin{array}{ll}
	\left\{
	\begin{array}{l}
	\langle Z- Z^{\infty},  - \Lambda^{\infty}\rangle \ge 0, \\
	\|X\|_*-\|X^{\infty}\|_* + \langle X- X^{\infty}, -\Lambda^{\infty}\rangle \ge 0, \\
	\tau\|Y\|_1-\tau\|Y^{\infty}\|_1 + \langle Y- Y^{\infty}, -\alpha M +\kappa Y^{\infty}-\Lambda^{\infty}\rangle \ge 0, \\
	\langle \Lambda -\Lambda^{\infty}, X^{\infty}+Y^{\infty}+Z^{\infty}-{\bar M}\rangle\ge0,
	\end{array} \right.
	& \;\; \forall\; W=(Z^\top,X^\top,Y^\top,\Lambda^\top)^\top\in  {\cal W}.
	\end{array}
	\]
	Thus,
	$$\theta(U) - \theta(U^{\infty})+  (W-W^{\infty})^\top \Psi(W^{\infty})\ge\frac{\kappa}{2}\|Y-Y^\infty\|_F^2,  \;\; \forall\; W=(Z^\top,X^\top,Y^\top,\Lambda^\top)^\top\in  {\cal W}.$$
	This means that $W^{\infty}$ is a solution of $\hbox{VI}({\cal W},\Psi,\theta) $.
	Then the inequality (\ref{Contr})
	is also valid if $V^{*}$ is replaced by $V^\infty$.
	Therefore, the non-increasing sequence $\{\frac{1}{2}\|V^k-V^\infty\|_Q^2+\frac{\beta}{3-\sqrt{5}}\|Y^k-Y^{k+1}\|_F^2\}$ converges to 0 since
	it has a subsequence $\{\frac{1}{2}\|V^{k_j}-V^\infty\|_Q^2+\frac{\beta}{3-\sqrt{5}}\|Y^{k_j}-Y^{k_j+1}\|_F^2\}$ converges to 0.
	Thus, the sequence $\{{V}^k\}$ converges to some $V^{\infty}\in {\cal
		V}^*$.
	Also, the updating scheme of $\Lambda^{k+1}$ in (\ref{GADM}) implies that
	$$ Z^{k+1}={\bar M}-X^{k+1}-Y^{k+1}+\frac{1}{\beta}(\Lambda^k-\Lambda^{k+1}).
	$$
	Combining the above equality, (\ref{THME2-1}) and $\lim_{k\rightarrow \infty}\|V^k-V^{\infty}\|_Q^2=0$,
	we have $W^k$ converges to $W^\infty$. It implies that the sequence $U^k$ converges to a solution point of  (\ref{CouplePCA}). Thus, the third assertion holds.
	%The results follow from Theorem 3.1 in \cite{tao2016convergence} directly.
\end{proof}
\begin{remark}
	Note that the range for $\beta$ in (\cite{tao2016convergence}) with convergence guarantee   is
	$(0,0.4\kappa)$ for UMA solving (\ref{CouplePCA}).
	However, we get a much larger range for the penalty parameter $\beta$ in (\ref{beta}).
\end{remark}
Next, we present a worst-case $O(1/t)$ convergence rate measured by the iteration complexity for UMA.
Indeed, the range of $\beta$ to ensure the $O(1/t)$ convergence rate is slightly more restrictive than (\ref{beta}).
Let us define
\begin{eqnarray*}
	{ Z}^{k+1}_t=\frac{1}{t}\sum_{k=1}^t Z^{k+1},\; { X}_t^{k+1}=\frac{1}{t}\sum_{k=1}^t X^{k+1},\;{Y}_t^{k+1}=\frac{1}{t}\sum_{k=1}^t Y^{k+1},\;\mbox{and}\; {U}^{k+1}_t=\frac{1}{t}\sum_{k=1}^t U^{k+1},\;{ W}^{k+1}_t=\frac{1}{t}\sum_{k=1}^t W^{k+1}.
\end{eqnarray*}
Obviously, ${ W}^{k+1}_t\in{\cal W}$ because of the convexity ${\cal W}$.
By invoking Theorem \ref{theorem34}, there exists a constant $C$ such that
\begin{eqnarray*}
	\max\left( \|X^k\|_F,\|Y^k\|_F,\|Z^k\|_F,\|\Lambda^k\|_F\right)\le C,\,\;\forall \;k.
\end{eqnarray*}
Next, we present several lemmas to facilitate the convergence rate analysis.

\begin{lemma}\label{ergo}
	Let $\{W^k\}$ be generated by UMA.
	Suppose that $0<\varepsilon <\sqrt{33}-5$.
	Then, it holds that
	\begin{align}
		&-\beta\langle X^{k+1}- X^{k},Y^{k+1}-Y^{k} \rangle \ge \beta\left(-\frac{7-\sqrt{33}}{8}\|X^k-X^{k+1}\|_F^2 -\frac{7+\sqrt{33}}{8}\|Y^{k+1}-Y^k\|_F^2\right),\label{Cau1E}\\
		&\beta\langle X^{k+1}-X^k, (Y^k-Y^{k-1})\rangle\ge \beta\left(-\frac{7-\sqrt{33}}{8}\|X^k-X^{k+1}\|_F^2  -\frac{7+\sqrt{33}}{8} \|Y^k-Y^{k-1}\|_F^2 \right),\label{Cau2E}\\
		&\beta \langle Y^{k+1}-Y, X^{k+1}-X^k \rangle\ge- \beta \left( \frac{1}{\sqrt{33}-5-\varepsilon}\|Y^{k+1}-Y\|_F^2+\frac{\sqrt{33}-5-\varepsilon}{4}\|X^{k+1}-X^k\|_F^2\right).\label{Cau3E}
	\end{align}
\end{lemma}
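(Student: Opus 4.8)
The plan is to obtain all three inequalities from a single elementary fact: for any matrices $A,B$ and any scalar $c>0$ one has $-\langle A,B\rangle \ge -\tfrac{c}{2}\|A\|_F^2 - \tfrac{1}{2c}\|B\|_F^2$ (equivalently $\langle A,B\rangle \ge -\tfrac{c}{2}\|A\|_F^2 - \tfrac{1}{2c}\|B\|_F^2$), which is just the expansion $0\le \|\sqrt{c}\,A + \tfrac{1}{\sqrt c}B\|_F^2$, i.e.\ Young's inequality applied to the Cauchy--Schwarz bound $|\langle A,B\rangle|\le\|A\|_F\|B\|_F$. All the content lies in choosing the free weight $c$ in each case so that the two resulting coefficients land exactly on the $\sqrt{33}$-expressions written in the statement; these weights are not discovered but reverse-engineered from the telescoping estimate they feed into in the subsequent $O(1/t)$ rate proof.

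For \eqref{Cau1E} I would set $A=X^{k+1}-X^{k}$, $B=Y^{k+1}-Y^{k}$, apply the bound with $c=\tfrac{7-\sqrt{33}}{4}$, and multiply through by $\beta>0$. Here $c>0$ because $\sqrt{33}<6<7$, and a short rationalization gives $\tfrac{1}{2c}=\tfrac{2}{7-\sqrt{33}}=\tfrac{2(7+\sqrt{33})}{49-33}=\tfrac{7+\sqrt{33}}{8}$, which is precisely the coefficient required on $\|Y^{k+1}-Y^k\|_F^2$, while $\tfrac{c}{2}=\tfrac{7-\sqrt{33}}{8}$ is the one required on $\|X^k-X^{k+1}\|_F^2$. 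Inequality \eqref{Cau2E} is the same computation with $B$ replaced by $Y^{k}-Y^{k-1}$.

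For \eqref{Cau3E} I would take $A=Y^{k+1}-Y$, $B=X^{k+1}-X^{k}$ and use $c=\tfrac{\sqrt{33}-5-\varepsilon}{2}$; the hypothesis $0<\varepsilon<\sqrt{33}-5$ is exactly what makes $c>0$, and then $\tfrac{1}{2c}=\tfrac{1}{\sqrt{33}-5-\varepsilon}$ and $\tfrac{c}{2}=\tfrac{\sqrt{33}-5-\varepsilon}{4}$ reproduce the stated bound after multiplying by $\beta$. This parallels Lemma~\ref{pote}, where the analogous choices were $c=\tfrac{3-\sqrt5}{2}$ for the first two bounds and $c=\sqrt5-2-\varepsilon$ for the third. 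There is no genuine obstacle — the proof is three one-line invocations of Young's inequality — so the only point requiring care is bookkeeping: tracking the sign of the inner product (it is $-\langle\cdot,\cdot\rangle$ in \eqref{Cau1E}--\eqref{Cau2E} but $+\langle\cdot,\cdot\rangle$ in \eqref{Cau3E}) and pulling the factor $\beta$ out consistently.
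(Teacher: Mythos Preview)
Your approach is correct and essentially identical to the paper's, whose entire proof reads ``These three inequalities follow from Cauchy--Schwarz inequality.'' You have in fact supplied more detail than the authors. One minor bookkeeping slip: in \eqref{Cau3E}, with your labeling $A=Y^{k+1}-Y$, $B=X^{k+1}-X^k$ and $c=\tfrac{\sqrt{33}-5-\varepsilon}{2}$, the bound $-\tfrac{c}{2}\|A\|_F^2-\tfrac{1}{2c}\|B\|_F^2$ puts the coefficient $\tfrac{\sqrt{33}-5-\varepsilon}{4}$ on $\|Y^{k+1}-Y\|_F^2$ and $\tfrac{1}{\sqrt{33}-5-\varepsilon}$ on $\|X^{k+1}-X^k\|_F^2$, which is the reverse of what the statement requires; you should either swap the roles of $A$ and $B$ or take $c=\tfrac{2}{\sqrt{33}-5-\varepsilon}$.
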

\begin{proof}
	These three inequalities follow from Cauchy-Schwarz inequality.\qed
\end{proof}
\medskip

\begin{lemma}\label{lem41}
	Let $\{W^k\}$ be the sequence generated by UMA (\ref{GADM}). If $\beta$ is restricted by
	\begin{eqnarray}\label{betacomp}
		\beta\in\left(0, \frac{\sqrt{33}-5}{2}\kappa\right),
	\end{eqnarray}
	then we have
	\begin{eqnarray}\label{Erogcont}
		\Theta(V^{k+1}, V^k,V)\le \Theta(V^{k}, V^{k-1},V)+\Xi(W^{k+1},W^k,W),
	\end{eqnarray}
	where
	\begin{align}
		&\Theta(V^{k+1}, V^k,V):=\frac{1}{2}\|V^{k+1}-V\|_Q^2+\frac{7+ \sqrt{33}}{8}\beta\|Y^{k+1}-Y^{k}\|_F^2. \;\label{The}
	\end{align}
	and
	\begin{eqnarray}\Xi(W^{k+1},W^k,W):=\theta(U)-\theta(U^{k+1})+(W-W^{k+1})^\top \Psi (W)+\beta\langle{\cal R}, Y^k-Y^{k+1}+X^k-X^{k+1}\rangle. \label{LXi}
	\end{eqnarray}
\end{lemma}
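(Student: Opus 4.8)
The plan is to obtain the ergodic contraction \eqref{Erogcont} directly from the per-iteration inequality \eqref{solstr} of Lemma~\ref{Lem31}, via two reductions: first rewriting the variational-inequality term so that $\Psi$ is evaluated at the generic point $W$, and then bounding below the three bilinear ``crossing'' terms on the right of \eqref{solstr} with the sharpened Cauchy--Schwarz estimates of Lemma~\ref{ergo}. \emph{Step 1 (rewriting the VI term).} The map $\Psi$ in \eqref{MD-F-2} is affine and its linear part is skew-symmetric, so $\langle W-W^{k+1},\,\Psi(W^{k+1})-\Psi(W)\rangle=0$ for every $W\in{\cal W}$; hence the term $\langle W-W^{k+1},\Psi(W^{k+1})\rangle$ in \eqref{solstr} may be replaced by $(W-W^{k+1})^\top\Psi(W)$. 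Since Lemma~\ref{Lem31} gives $\Gamma(X^k,Y^k,Z^k)=Y^k-Y^{k+1}+X^k-X^{k+1}$ and ${\cal R}=X+Y+Z-{\bar M}$ by \eqref{Rg}, the entire left-hand side of \eqref{solstr} becomes exactly $\Xi(W^{k+1},W^k,W)$ as defined in \eqref{LXi}. Thus it remains to show that the right-hand side of \eqref{solstr} is at least $\Theta(V^{k+1},V^k,V)-\Theta(V^{k},V^{k-1},V)$.

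\emph{Step 2 (controlling the crossing terms).} Split the term $-\beta\langle X^{k+1}-X^k,\,Y^{k+1}-Y^k-(Y^k-Y^{k-1})\rangle$ into $-\beta\langle X^{k+1}-X^k,\,Y^{k+1}-Y^k\rangle$ and $\beta\langle X^{k+1}-X^k,\,Y^k-Y^{k-1}\rangle$, and apply \eqref{Cau1E}, \eqref{Cau2E}, \eqref{Cau3E} to these two and to $\beta\langle Y^{k+1}-Y,\,X^{k+1}-X^k\rangle$, respectively. Expanding $\tfrac12\|V^k-V^{k+1}\|_Q^2=\tfrac\beta2\|X^k-X^{k+1}\|_F^2+\tfrac\beta2\|Y^k-Y^{k+1}\|_F^2+\tfrac1{2\beta}\|\Lambda^k-\Lambda^{k+1}\|_F^2$, the right-hand side of \eqref{solstr} is then bounded below by
\begin{align*}
&\tfrac12\|V^{k+1}-V\|_Q^2-\tfrac12\|V^{k}-V\|_Q^2+\tfrac{\varepsilon\beta}{4}\|X^k-X^{k+1}\|_F^2+\tfrac1{2\beta}\|\Lambda^k-\Lambda^{k+1}\|_F^2 \\
&\qquad+\Bigl(\kappa+\tfrac\beta2-\tfrac{7+\sqrt{33}}{8}\beta\Bigr)\|Y^{k+1}-Y^k\|_F^2-\tfrac{7+\sqrt{33}}{8}\beta\|Y^k-Y^{k-1}\|_F^2 \\
&\qquad+\Bigl(\tfrac\kappa2-\tfrac{\beta}{\sqrt{33}-5-\varepsilon}\Bigr)\|Y^{k+1}-Y\|_F^2,
\end{align*}
where the coefficient $\tfrac{\varepsilon\beta}{4}$ of $\|X^k-X^{k+1}\|_F^2$ arises since $\tfrac\beta2=\tfrac{7-\sqrt{33}}{8}\beta+\tfrac{7-\sqrt{33}}{8}\beta+\tfrac{\sqrt{33}-5}{4}\beta$ exactly matches the two $X$-increment weights from \eqref{Cau1E}--\eqref{Cau2E} and the one from \eqref{Cau3E}, up to the $-\varepsilon$ in the latter.

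\emph{Step 3 (bookkeeping and the range of $\beta$).} The terms $\tfrac{\varepsilon\beta}{4}\|X^k-X^{k+1}\|_F^2$ and $\tfrac1{2\beta}\|\Lambda^k-\Lambda^{k+1}\|_F^2$ are nonnegative and are discarded. The $\|Y^k-Y^{k-1}\|_F^2$ term has weight $-\tfrac{7+\sqrt{33}}{8}\beta$, precisely the one inside $\Theta(V^{k},V^{k-1},V)$. For $\|Y^{k+1}-Y^k\|_F^2$ one needs $\kappa+\tfrac\beta2-\tfrac{7+\sqrt{33}}{8}\beta\ge\tfrac{7+\sqrt{33}}{8}\beta$, i.e. $\kappa\ge\tfrac{5+\sqrt{33}}{4}\beta$, which, since $\tfrac{4}{5+\sqrt{33}}=\tfrac{\sqrt{33}-5}{2}$, is exactly $\beta\le\tfrac{\sqrt{33}-5}{2}\kappa$ and so holds under \eqref{betacomp}. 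Finally, because $\beta<\tfrac{\sqrt{33}-5}{2}\kappa$ is strict, one may fix a single $\varepsilon>0$ small enough that both $\varepsilon<\sqrt{33}-5$ (so Lemma~\ref{ergo} applies) and $\tfrac\kappa2\ge\tfrac{\beta}{\sqrt{33}-5-\varepsilon}$, so the last displayed term is also nonnegative and is discarded. Collecting what remains yields exactly $\Theta(V^{k+1},V^k,V)-\Theta(V^{k},V^{k-1},V)$, which is \eqref{Erogcont}.

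\emph{Expected main obstacle.} The route is the standard variational-inequality contraction argument, so the genuine work is the constant bookkeeping in Steps~2--3: one must check that the particular weights $\tfrac{7\pm\sqrt{33}}{8}$ chosen in Lemma~\ref{ergo} are exactly those for which the $X$-increment terms cancel (leaving only the harmless $\tfrac{\varepsilon\beta}{4}$) while the $Y$-increment terms reassemble into the potential $\Theta$, and that one $\varepsilon>0$ can be taken to validate all the strict inequalities simultaneously. This tuning is precisely what produces the threshold $\tfrac{\sqrt{33}-5}{2}\kappa$, which is larger than the $0.4\kappa$ of \cite{tao2016convergence}; the slightly stronger restriction \eqref{betacomp} here, compared with \eqref{beta} used for plain convergence, reflects that the $O(1/t)$-rate bookkeeping cannot absorb the $\|Y^{k+1}-Y\|_F^2$ surplus and so needs the margin it provides.
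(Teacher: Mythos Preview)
Your proof is correct and follows essentially the same route as the paper's: apply the Cauchy--Schwarz estimates of Lemma~\ref{ergo} to the crossing terms in \eqref{solstr}, then collect coefficients so that the remainder is exactly $\Theta(V^{k+1},V^k,V)-\Theta(V^k,V^{k-1},V)$ plus nonnegative terms. You make explicit the skew-symmetry identity $\langle W-W^{k+1},\Psi(W^{k+1})-\Psi(W)\rangle=0$ that the paper uses tacitly when passing from $\Psi(W^{k+1})$ in \eqref{solstr} to $\Psi(W)$ in the definition \eqref{LXi} of $\Xi$, and you fix a small $\varepsilon>0$ where the paper instead writes ``let $\varepsilon\to0^+$''; both amount to the same thing.

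One inaccuracy in your closing commentary: the threshold $\tfrac{\sqrt{33}-5}{2}\kappa\approx0.372\kappa$ is \emph{smaller} than $0.4\kappa$, not larger. The paper's comparison with the $0.4\kappa$ of \cite{tao2016convergence} refers to the plain-convergence range $2(\sqrt{5}-2)\kappa\approx0.472\kappa$ of \eqref{beta}, not to the more restrictive range \eqref{betacomp} used here for the $O(1/t)$ rate. This does not affect the proof itself.
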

\begin{proof}
	First, summing inequalities (\ref{Cau1E})-(\ref{Cau3E}) together, we get
	\begin{eqnarray*}
		\lefteqn{\beta\langle X^{k+1}- X^{k},Y^{k+1}-Y^{k} -(Y^k-Y^{k-1})\rangle +\beta \langle Y^{k+1}-Y, X^{k+1}-X^k \rangle}\nonumber\\
		&\ge& \frac{\varepsilon-2}{4}\beta\|X^{k+1}-X^k\|_F^2-\frac{7+\sqrt{33}}{8}\beta\|Y^{k+1}-Y^k\|_F^2\nonumber\\
		&&-\frac{7+\sqrt{33}}{8}\beta\|Y^k-Y^{k-1}\|_F^2-\frac{1}{\sqrt{33}-5-\varepsilon}\beta\|Y^{k+1}-Y\|_F^2.
	\end{eqnarray*}
	Then, substituting  the above inequality into (\ref{solstr}) and invoking (\ref{The}), (\ref{LXi}), we obtain
	\begin{eqnarray*}
		\Theta(V^{k+1}, V^k,V)&\le& \Theta(V^{k}, V^{k-1},V)+\Xi(W^{k+1},W^k,W)-X^k\|_F^2-(\kappa-\frac{5+\sqrt{33}}{4}\beta)\|Y^{k+1}-Y^k\|_F^2\nonumber\\
		&&	-\frac{\beta}{4}\varepsilon\|X^{k+1}-\frac{1}{2\beta}\|\Lambda^k-\Lambda^{k+1}\|_F^2-
		(\frac{\kappa}{2}-\frac{1}{\sqrt{33}-5-\varepsilon}\beta)\|Y^{k+1}-Y\|_F^2.
	\end{eqnarray*}
	Let $\varepsilon\rightarrow 0+$, the assertion follows directly.
\end{proof}
\medskip

\begin{theorem}\label{TA2}
	For $t$ iterations generated by UMA  with $\beta$ restricted in
	the following assertions holds.
	\begin{itemize}
		\item[(1)] We have
		\begin{eqnarray}\label{BK}
			\lefteqn{\theta(U_t^{k+1})-\theta(U)+(W_t^{k+1}-W)^\top \Psi(W)}\nonumber\\
			&\le& \frac{1}{t}\left[4\beta C\|X+Y+Z-{\bar M}\|_F
			+\frac{1}{2}\|V^1-V\|_Q^2+\frac{7+\sqrt{33}}{8}\beta\|Y^1-Y^0\|_F^2\right].
		\end{eqnarray}
		
		\item[(2)] There exists a constant ${\bar c}_1>0$ such that
		\begin{eqnarray}\label{primalergo}
			\|{ X}^{k+1}_t+{ Y}^{k+1}_t+{ Z}^{k+1}_t-{\bar M}\|^2\le \frac{{\bar c}_1}{t^2}.
		\end{eqnarray}
		%\begin{eqnarray}\label{ergoprimal}
		%\end{eqnarray}
		\item[(3)] There exists a constant ${\bar c}_2>0$ such that
		\begin{eqnarray}\label{objergo}
			|\theta(U_t^{k+1})-\theta(U^*)|\le\frac{{\bar c}_2}{t}.
		\end{eqnarray}
	\end{itemize}
\end{theorem}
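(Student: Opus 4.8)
The plan is to derive all three assertions from the one-step ergodic inequality of Lemma~\ref{lem41}, the uniform bound $\max(\|X^k\|_F,\|Y^k\|_F,\|Z^k\|_F,\|\Lambda^k\|_F)\le C$ furnished by Theorem~\ref{theorem34}, and the saddle-point characterization of $W^*$; throughout, $\beta$ is taken in the range $\big(0,(\sqrt{33}-5)\kappa/2\big)$ required by Lemma~\ref{lem41}. For part~1) I would fix an arbitrary $W=(Z^\top,X^\top,Y^\top,\Lambda^\top)^\top\in{\cal W}$ and sum $\Theta(V^{k+1},V^k,V)\le\Theta(V^k,V^{k-1},V)+\Xi(W^{k+1},W^k,W)$ over $k=1,\dots,t$. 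Telescoping the $\Theta$-terms and dropping the nonnegative tail $\Theta(V^{t+1},V^t,V)$ yields $\sum_{k=1}^t\big(-\Xi(W^{k+1},W^k,W)\big)\le\Theta(V^1,V^0,V)=\tfrac12\|V^1-V\|_Q^2+\tfrac{7+\sqrt{33}}{8}\beta\|Y^1-Y^0\|_F^2$. Unpacking Eq.~\eqref{LXi}, one has $\theta(U^{k+1})-\theta(U)+(W^{k+1}-W)^\top\Psi(W)=-\Xi(W^{k+1},W^k,W)+\beta\langle{\cal R},(Y^k-Y^{k+1})+(X^k-X^{k+1})\rangle$ with ${\cal R}=X+Y+Z-{\bar M}$ fixed, so the crossing terms telescope to $\beta\langle{\cal R},(Y^1-Y^{t+1})+(X^1-X^{t+1})\rangle$, bounded in absolute value by $4\beta C\|X+Y+Z-{\bar M}\|_F$ via Cauchy--Schwarz and the uniform bound. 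Dividing by $t$, using convexity of $\theta$ (so $\theta(U^{k+1}_t)\le\tfrac1t\sum_k\theta(U^{k+1})$) and the fact that $(W^{k+1}-W)^\top\Psi(W)$ is affine in $W^{k+1}$ (so its average equals $(W^{k+1}_t-W)^\top\Psi(W)$) gives exactly the stated bound.

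For part~2) I would use the multiplier update $\Lambda^{k+1}=\Lambda^k-\beta(X^{k+1}+Y^{k+1}+Z^{k+1}-{\bar M})$, i.e. ${\cal R}^{k+1}=\tfrac1\beta(\Lambda^k-\Lambda^{k+1})$, so that averaging and telescoping give $X^{k+1}_t+Y^{k+1}_t+Z^{k+1}_t-{\bar M}=\tfrac1{\beta t}(\Lambda^1-\Lambda^{t+1})$. Hence $\|X^{k+1}_t+Y^{k+1}_t+Z^{k+1}_t-{\bar M}\|_F^2\le\tfrac1{\beta^2t^2}(\|\Lambda^1\|_F+\|\Lambda^{t+1}\|_F)^2\le\tfrac{4C^2}{\beta^2t^2}$, so ${\bar c}_1=4C^2/\beta^2$ works.

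For part~3), a two-sided estimate, I would treat the two directions separately. For the upper bound, specialize part~1) to $W=W^*$: since $X^*+Y^*+Z^*={\bar M}$, the crossing factor $\|X^*+Y^*+Z^*-{\bar M}\|_F$ vanishes and $\Psi(W^*)=(-\Lambda^*;-\Lambda^*;-\Lambda^*;0)$, so $(W^{k+1}_t-W^*)^\top\Psi(W^*)=-\langle\Lambda^*,\,X^{k+1}_t+Y^{k+1}_t+Z^{k+1}_t-{\bar M}\rangle$; combining with Cauchy--Schwarz and part~2) gives $\theta(U^{k+1}_t)-\theta(U^*)\le\big(\|\Lambda^*\|_F\sqrt{{\bar c}_1}+\tfrac12\|V^1-V^*\|_Q^2+\tfrac{7+\sqrt{33}}{8}\beta\|Y^1-Y^0\|_F^2\big)/t$. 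For the lower bound, I would invoke that $W^*$ is a saddle point of the Lagrangian~\eqref{AL-Fun}: since each $Z^{k+1}\in{\bf B}$ and $\bf B$ is convex, the average $Z^{k+1}_t\in{\bf B}$, whence $\theta(U^*)\le{\cal L}(X^{k+1}_t,Y^{k+1}_t,Z^{k+1}_t,\Lambda^*)=\theta(U^{k+1}_t)-\langle\Lambda^*,\,X^{k+1}_t+Y^{k+1}_t+Z^{k+1}_t-{\bar M}\rangle$, and part~2) bounds the inner product by $\|\Lambda^*\|_F\sqrt{{\bar c}_1}/t$ in absolute value, so that $\theta(U^{k+1}_t)-\theta(U^*)\ge-\|\Lambda^*\|_F\sqrt{{\bar c}_1}/t$. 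Taking ${\bar c}_2$ to be the larger of the two constants yields the claim.

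I expect the main obstacle to be the careful bookkeeping in part~1): correctly isolating the crossing term $\beta\langle{\cal R},\cdot\rangle$ from $\Xi$, telescoping it, and verifying that averaging interacts cleanly with the convexity of $\theta$ and the affineness of the $\Psi(W)$-term — and, crucially, checking that every resulting constant is genuinely independent of $t$, which is where the uniform bound $C$ of Theorem~\ref{theorem34} enters. The lower-bound half of part~3) hinges on the small but essential observation that feasibility with respect to $\bf B$ survives averaging, so that the saddle-point inequality may legitimately be applied at $(X^{k+1}_t,Y^{k+1}_t,Z^{k+1}_t)$.
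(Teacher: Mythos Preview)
Your proposal is correct and follows essentially the same approach as the paper: both sum the one-step inequality of Lemma~\ref{lem41} over $k=1,\dots,t$, telescope the $\Theta$- and crossing terms, invoke convexity of $\theta$ and the uniform bound $C$ for part~(1); both use the telescoping identity ${\cal R}^{k+1}=\tfrac1\beta(\Lambda^k-\Lambda^{k+1})$ for part~(2); and both combine the saddle-point inequality with part~(2) for the two-sided estimate in part~(3). The only cosmetic differences are that you take ${\bar c}_1=4C^2/\beta^2$ directly (the paper writes it via $\|\Lambda^1-\Lambda^*\|$ and $\|\Lambda^{t+1}-\Lambda^*\|$), and you use the straight Cauchy--Schwarz bound $|\langle\Lambda^*,\cdot\rangle|\le\|\Lambda^*\|_F\sqrt{{\bar c}_1}/t$ where the paper uses the weighted AM--GM form $ab\ge-\tfrac12(a^2/t+tb^2)$; your explicit remark that $Z^{k+1}_t\in{\bf B}$ by convexity is a point the paper leaves implicit.
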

\begin{proof}
	1) First,
	it follows from  the assertion \eqref{Erogcont} that
	for all $W\in{\cal W}$, we have
	\begin{eqnarray}\label{Lem2p}
		&&\theta(U)-\theta(U^{k+1})+(W-W^{k+1})^\top \Psi(W)+\beta\langle {\cal R}, Y^k-Y^{k+1}+X^k-X^{k+1}\rangle \nonumber\\
		&&\quad\quad\ge \Theta(V^{k+1}, V^k,V )
		-\Theta(V^{k}, V^{k-1},V ).
	\end{eqnarray}
	Summarizing both sides of the above inequalities from $k=1, 2, \cdots, t$, we have
	\begin{eqnarray}\label{sum}
		&&t\theta(U)-\sum_{k=1}^t\theta(U^{k+1})+(tW-\sum_{k=1}^t W^{k+1})^\top \Psi(W)+\beta\langle {\cal R}, Y^1-Y^{t+1}+X^1-X^{t+1}\rangle\nonumber\\
		&&\quad\quad\ge \Theta(V^{t+1}, V^t, V )-\Theta(V^{1}, V^0,V ).
	\end{eqnarray}
	Then, it follows from the convexity of $\theta$  that
	\begin{equation}
	\label{FK}
	\theta(U_t^{k+1})\le\frac{1}{t}\sum_{k=1}^t\theta(U^{k+1}).
	\end{equation}
	Combining  (\ref{sum}) and (\ref{FK}), we have
	\begin{eqnarray}\label{add1}
		\theta(U_t^{k+1})-\theta(U)+( W_t^{k+1}-W)^\top \Psi(W)\le\frac{1}{t} \left(\Theta(V^{1}, V^0,V )+4\beta C\|{\cal R}\|_F\right).
	\end{eqnarray}
	Thus, the assertion (\ref{BK}) follows from the above inequality and the defintion of $\Theta(V^{1}, V^0,V)$ directly.
	
	2) Let us define ${\bar c}_1=\frac{2}{\beta^2}\left(\|\Lambda^1-\Lambda^*\|^2+\|\Lambda^{k+1}-\Lambda^*\|^2\right)$. Then, we have
	\begin{eqnarray}\nonumber
		&&\|{ X}^{k+1}_t+{ Y}^{k+1}_t+{ Z}^{k+1}_t-{\bar M}\|^2 =\left\|\frac{1}{t}\sum_{k=1}^t\left[ { X}^{k+1}+{ Y}^{k+1}+{ Z}^{k+1}-{\bar M}\right]\right\|^2 =\left\|\frac{1}{t}\sum_{k=1}^t\left[  \frac{1}{\beta}(\Lambda^k-\Lambda^{k+1})\right]\right\|^2\nonumber\\
		&&\quad\quad=\left\| \frac{1}{t\beta}\left( \Lambda^1-\Lambda^{t+1}\right)\right\|^2\le\frac{2}{t^2\beta^2}\left(\|\Lambda^1-\Lambda^*\|^2+\|\Lambda^{k+1}-\Lambda^*\|^2\right)=\frac{{\bar c_1}}{t^2},\nonumber
	\end{eqnarray}
	%{\color{blue}where the first equality follows from (\ref{XK}), the second follows from (\ref{lap}) and the last follows from Lemma \ref{lem211}.}
	The assertion (\ref{primalergo}) is proved immediately.
	
	3) It follows from ${\cal L}(U_t^{k+1},\Lambda^*)\ge {\cal L}(U^*,\Lambda^*)$ with ${\cal L}$ defined in (\ref{AL-Fun}) that
	\begin{eqnarray}
		\label{objlowerg}
		\lefteqn{\theta(U_t^{k+1})-\theta(U^*)\ge\langle \Lambda^*,X_t^{k+1}+Y_t^{k+1}+Z_t^{k+1}-{\bar M}\rangle}\nonumber\\
		&&\ge-\frac{1}{2}\left(\frac{1}{t}\|\Lambda^*\|^2+t\|X_t^{k+1}+Y_t^{k+1}+Z_t^{k+1}-{\bar M}\|^2 \right)\ge-\frac{1}{2t}(\|\Lambda^*\|^2+{\bar c}_1),
	\end{eqnarray}
	where the second inequality is due to Cauchy-Schwarz inequality, and the last is due to (\ref{primalergo}).
	On the other hand, setting $W:=W^*$ in (\ref{add1}), we obtain
	\begin{eqnarray}\nonumber
		%\label{BK2}
		&& \theta(U_t^{k+1})-\theta(U^*)+\langle W_t^{k+1}-W^*,\Psi(W^*)\rangle\le\frac{1}{t}\Theta(V^{1}, V^0,V^*).
	\end{eqnarray}
	Invoking the definition of $\Psi$ in (\ref{MD-F-2}), we have
	\begin{eqnarray}\nonumber
		%\label{correrg}
		( W_t^{k+1}-W^*)^\top \Psi(W^*)=-\langle \Lambda^*,{ X}^{k+1}_t+{ Y}^{k+1}_t+{ Z}^{k+1}_t-{\bar M}\rangle\ge-\frac{1}{2t}(\|\Lambda^*\|^2+{\bar c}_1),
	\end{eqnarray}
	where the proof of the last inequality is similar to (\ref{objlowerg}).
	Combining these two inequalities above, we get
	\begin{eqnarray}\label{objuperg}
		\theta(U_t^{k+1})-\theta(U^*)\le\frac{1}{t}\Theta(V^{1}, V^0,V^*)+\frac{1}{2t}(\|\Lambda^*\|^2+{\bar c}_1).
	\end{eqnarray}
	The inequalities (\ref{objlowerg}) and (\ref{objuperg}) indicate that the assertion (\ref{objergo}) holds by setting $\bar c_2:=\Theta(V^{1}, V^0,V^*)+\frac{1}{2}(\|\Lambda^*\|^2+{\bar c}_1)$.
\end{proof}

£ý

\end{document}